\PassOptionsToPackage{numbers,sort&compress}{natbib}  
\documentclass[a4paper,fleqn]{cas-sc}

\usepackage[numbers]{natbib}
\usepackage[toc,page]{appendix}
\usepackage[dvipsnames]{xcolor}
\usepackage{tikz}
\usepackage{amsthm}
\usepackage{subcaption}
\usepackage{algpseudocode}
\usepackage{multirow}
\usepackage{mathrsfs}
\usepackage{algorithm}
\usepackage{float}
\usetikzlibrary{arrows.meta, calc}

\newtheorem{prop}{Proposition} 
\newtheorem{lem}{Lemma}
\newtheorem{remark}{Remark}

\def\tsc#1{\csdef{#1}{\textsc{\lowercase{#1}}\xspace}}
\tsc{WGM}
\tsc{QE}

\begin{document}
	
	\newcommand{\rhostar}{$\frac{\rho^\mathrm{f}}{\rho^\mathrm{s}}$}
	\newcommand{\hshf}{$\frac{h^\mathrm{f}}{h^\mathrm{s}}$}
	\newcommand{\kt}{\tilde{k}}
	\newcommand{\rf}{\rho_{\mathrm{f}}}
	\newcommand{\rs}{\rho_{\mathrm{s}}}
	\newcommand{\hs}{h_{\mathrm{s}}}

	\let\WriteBookmarks\relax
	\def\floatpagepagefraction{1}
	\def\textpagefraction{.001}
	
	\shorttitle{}    
	
	\shortauthors{}  
	
	\title [mode = title]{A Momentum-weighted Interface Stabilization Approach for
		Fluid-Structure Interaction with Strong Added-Mass Effects}  
	
	
	
	%
	
	
	\author[1]{Nihar B. Darbhamulla}[orcid=0000-0002-0546-7641]
    \ead{nihar.darbhamulla@ubc.ca}
	
	\credit{Writing - original draft, Conceptualization, Methodology, Software, Visualization, Validation, Investigation, Formal analysis }
	
	\affiliation[1]{organization={Dept. of Computer Science, The University of
			British Columbia},
		city={Vancouver},
		postcode={V6T 1Z4}, 
		state={BC},
		country={Canada}}
	
	\author[2]{Xiaoyu Mao}
	
	
	\ead{dgcsmaoxiaoyu@gmail.com}
	
	
	\credit{ Writing - review \& editing, Conceptualization, Methodology, Software, Visualization, Investigation, Formal analysis }
	

	\author[2]{Aditya Narkhede}[orcid=0009-0004-4281-611X]
	
	\credit{Writing - review \& editing, Software, Investigation, Visualization, Validation,  Formal analysis }
    \ead{aditya.narkhede@ubc.ca}
	
	
	

	\author[3]{David Brown}
    \credit{Writing - review \& editing, Conceptualization, Resources, Project administration, Funding acquisition}

    \author[2]{Rajeev K. Jaiman}[orcid=0000-0002-8346-3486]
	\affiliation[2]{organization={Dept. of Mechanical Engineering, The University
			of British Columbia},
		city={Vancouver},
		postcode={V6T 1Z4}, 
		state={BC},
		country={Canada}}
    \cortext[1]{Corresponding author}
    \cormark[1]
    \ead{rjaiman@mech.ubc.ca}

    \affiliation[3]{organization={Synopsys Inc.},
		city={Waterloo},
		state={ON},
		country={Canada}}
        

	\credit{Writing - review \& editing, Conceptualization, Software, Methodology, Supervision, Formal analysis, Resources, Project administration, Funding acquisition}


	
	
	\begin{abstract}
Partitioned fluid-structure interaction (FSI) schemes permit the reuse of optimized standalone solvers but can become unstable under strong fluid added-mass effects, particularly when structural inertia is small relative to fluid inertia. This work traces the instability to the discrepancy between the interface pressure Schur complements of the monolithic and Dirichlet-Neumann partitioned systems, as revealed by a fixed-point analysis of the linearized FSI problem. Motivated by this analysis, a momentum-weighted interface stabilization (MWIS) is proposed, in which the interface pressure block of the fluid subproblem is augmented by an operator constructed from the fluid and structural interface momentum operators. The augmentation is consistent by construction, vanishing at convergence so that the monolithic solution is recovered. A sparse diagonal approximation of the operator, derived from element metrics, limits intersolver data exchange to quantities comparable in size to standard interface coupling data while requiring minimal modifications to the standalone solvers. Numerical experiments, spanning a linear model and established nonlinear FSI benchmarks, demonstrate improved coupling robustness in regimes of strong added-mass where the standard Dirichlet-Neumann scheme diverges.
	\end{abstract}
	
	
	\begin{highlights}
		\item A momentum-weighted stabilization is developed for partitioned FSI coupling.
		\item Schur-complement analysis establishes stability and convergence conditions.
        \item A sparse interface augmentation preserves consistency with monolithic FSI.
		\item Numerical benchmarks demonstrate robustness under strong added-mass effects.
	\end{highlights}

	\begin{keywords}
		Fluid-Structure Interaction \sep Pressure Projection method \sep Interface
		Coupling \sep Stabilization
	\end{keywords}
	
	\maketitle
	
	\section{Introduction} \label{sec:Introduction}
Fluid-structure interaction (FSI) is a fundamental multiphysics phenomenon in which fluid motion and structural deformation are dynamically coupled through the exchange of momentum and the enforcement of kinematic compatibility at their common interface. Such interactions are ubiquitous in natural and engineered systems, ranging from cardiovascular flows and flexible biological tissues to offshore structures, marine propulsion systems, and flow-induced vibrations. Accurate and computationally efficient prediction of these coupled responses is therefore essential to understand their underlying physics and to assess the performance, reliability, and stability of engineering systems. Despite substantial advances in computational mechanics, the numerical simulation of FSI remains challenging because the fluid and structural governing equations must be solved subject to strongly coupled interface conditions. These difficulties are compounded by nonlinear fluid dynamics, potentially large structural deformations, evolving fluid domains, and the need to accurately transfer interface tractions and kinematic information between distinct discretizations. In particular, the relative magnitudes of fluid and structural inertia strongly influence the stability and convergence of the coupled numerical system, especially when the structural response is sensitive to fluid loads. Consequently, developing FSI formulations that simultaneously achieve accuracy, numerical robustness, and computational efficiency remains an important objective in computational mechanics
\citep{TEZDUYAR20062002,FERNANDEZ2005127,wall2007strong,DETTMER20065754,2008_quaini,
		BADIA20087027,idelsohn_2009,
		LIU2014687,JAIMAN2016187, li_2018, Delaisse2023,10.1063/5.0182801}.
	
	Accurate and robust strongly coupled schemes have been proposed for FSI
	problems, wherein the solution at each timestep involves solving the fully
	coupled nonlinear fluid and structural equations. The underlying geometric
	nonlinearity associated with the deforming domains is either handled implicitly
	(Newton) \citep{TEZDUYAR20062002} or explicitly (quasi-Newton)
	\citep{TEZDUYAR20062002, DETTMER20065754, LIU2014687}. The nonlinear interface
	coupling is achieved by formulating a linearized FSI problem at each step and
	solving the monolithic block-partitioned linear system \citep{TEZDUYAR20062002,
		LIU2014687} or adopting a block-partitioned approach wherein the Jacobian
	information is cross-utilized within the fluid and structure block sub-problems
	\citep{FERNANDEZ2005127, DETTMER20065754}. Both classes of methods are
	exceptionally robust, and their performance has been demonstrated for
	challenging problems involving fluids and structures of comparable inertia,
	commonly known in the literature as added-mass effects
	\citep{2005_Causin_FSI}. However, these methods are computationally expensive, in which the former class of methods involves the solution to ill-conditioned linear systems, while the latter class of methods requires multiple Jacobian constructions within a single timestep. Furthermore, this class of methods restricts the use of highly optimized standalone solvers developed for the purpose of elucidating fluid and structural dynamics. 
	
Alternately, the computational cost can be partially mitigated through the use of semi-implicit coupling schemes. Typically, block-partitioned schemes  \citep{TEZDUYAR20062002, wall2007strong} are designed where the solution to one system lags the other. However, the use of lagged operators induces added-mass instability, wherein only structural inertia is accounted for in the motion of the interface, inducing destabilizing perturbations in the fluid flow.  This instability is particularly pronounced in incompressible flows, where pressure enforces mass conservation and strongly couples the fluid response to interface motion. Within the scope of partitioned schemes, additional stabilization approaches are necessary to recover the robustness of the strongly-coupled schemes. Some approaches adopted
include under-relaxation of interface coupling conditions
\citep{2005_Causin_FSI}, use of Robin-Robin conditions at the interface
\citep{BADIA20087027}, and algebraic fractional step methods \citep{2008_quaini}.

Furthermore, these schemes often require individual subsolvers to access information beyond the solution variables of interest (e.g., discretization, linearization, and iterative schemes), thereby restricting the use of existing highly optimized solvers designed for individual systems \citep{FERNANDEZ2005127}. Alternatively, computationally efficient semi-implicit coupling schemes have been proposed, although they may impose time-step restrictions \citep{2008_quaini} or depend on tuning parameters that control the interface coupling \citep{wall2007strong,BADIA20087027, JAIMAN2016187}. However, these schemes restore the flexibility of using highly optimized standalone solvers, while introducing challenges in ensuring robust convergence at the interface. Similarly, heuristic stabilization methods have been developed \citep{idelsohn_2009}, which use local fluid and structural properties at the interface. Such methods are minimally invasive to the fluid solver, but may raise consistency concerns depending on the stabilization formulation. In parallel, variations of quasi-Newton approaches based on interface data have been used to accelerate and stabilize fixed-point iterations \citep{Delaisse2023}. However, their effectiveness may be limited during initialization, while their performance and computational cost depend on the available iteration history and its management. More recently, surrogate modeling approaches aimed at reproducing the convergence characteristics of monolithic solvers have been developed \citep{10.1063/5.0182801}, although the interpretability of the underlying models may be limited.

In this paper, a novel momentum-weighted interface stabilization (MWIS) approach is proposed for Dirichlet-Neumann partitioned FSI. By construction, the method is independent of user-defined interface relaxation and boundary-condition tuning parameters, does not require heuristic history windows, and preserves consistency with the monolithic FSI formulation at convergence. The method relies on augmenting the interface pressure operator in the fluid subproblem to mitigate the added-mass instability. A fixed-point analysis of the linearized FSI problem in monolithic and partitioned frameworks identifies the interface coupling information omitted in the partitioned fluid solve. By augmenting the interface pressure Schur complement with an operator constructed from the fluid and structural interface momentum operators, the missing coupling contribution is incorporated into the partitioned formulation. Using approximations of the pressure Schur complement \citep{2014_Elman_FE} and sparse approximations of the interfacial momentum operators, a minimally invasive pressure augmentation operator is constructed. The additional interface storage scales as $O(n^{1-\frac{1}{d}})$, where $n$ denotes the number of bulk degrees of freedom and $d$ the spatial dimension, assuming comparable mesh resolution. This scaling is of the same order as the interface data exchanged in a conventional partitioned FSI scheme. The MWIS scheme demonstrates robust coupling for established FSI benchmarks at structure-to-fluid mass ratios as low as $m^\star = 0.1$ for benchmark problems established in the literature \citep{2006_turek_FSIBench, LIU2014687}.
	
The remainder of this article is organized as follows. Section \ref{sec:simplifiedFSI} examines the added-mass instability in a Dirichlet-Neumann partitioned scheme using a simplified linear FSI problem \citep{2005_Causin_FSI}. An augmentation operator is proposed and a framework for fixed-point convergence analysis is presented. Section \ref{sec:MWIS} extends the augmentation strategy to a general linearized FSI problem and establishes the spectral criteria that guide the design of an interfacial operator to mitigate added \textit{mass} instability. Subsequently, a sparse discrete momentum-weighted interface stabilization (MWIS) operator is constructed. Section \ref{sec:results} evaluates the consistency and numerical stability of the augmented partitioned FSI scheme through established FSI benchmark problems. Finally, Section \ref{sec:conclusion} summarizes the main findings and provides directions for future work.
	
    \section{Added-mass instability and the principle of interface stabilization}
    \label{sec:simplifiedFSI}
    In the current section, a model linear fluid-structure interaction problem is considered, and the source of the added-mass instability is identified through the fixed-point analysis. Subsequently, an augmented partitioned scheme is presented and the criteria governing convergence are evaluated. Using standard conventions in continuum mechanics, it is assumed that the fluid occupies a domain $\Omega^\mathrm{f} \subset \mathbb{R}^{d}$ and is confined within the boundaries $\partial \Omega^\mathrm{f}$. Similarly, the structure is assumed to occupy the domain $\Omega^\mathrm{s}$ and located within the boundaries $\partial \Omega^\mathrm{s}$. Correspondingly, the fluid-structure interface is defined by $\Gamma^\mathrm{fs} = \partial\Omega^\mathrm{f}\cap \partial \Omega^\mathrm{s}$. 
	
	\subsection{Model fluid-structure problem}

\def\domL{20}         
\def\domR{5}          
\def\wallT{0.2}       
\def\xNorm{18}        
\def\xSig{6}          
\def\bInset{0.2}      

\begin{figure}[pos=H]
	\centering
    \begin{tikzpicture}[scale=0.45] 
    	\coordinate (A)  at (0,0);
    	\coordinate (B)  at (\domL,0);
    	\coordinate (C)  at (\domL,\domR);
    	\coordinate (Ff) at (0,\domR);
     
    	\draw[thick, dash dot] (A) -- (B);
    	\draw[thick] (A) -- (Ff);
    	\draw[thick] (B) -- (C);
     
    	\node at (\domL/2, \domR/2) {\large $\Omega^\mathrm{f}$};
        \draw[dashed, thin, magenta] (\bInset,\bInset) rectangle (\domL-\bInset,\domR-\bInset);
        \node[left, magenta] at (\domL-\bInset, \domR/2) {\large $\partial\Omega^\mathrm{f}$};
     
    	\coordinate (P1) at (0,\domR);
    	\coordinate (P2) at (\domL,\domR);
    	\coordinate (P3) at (\domL,\domR+\wallT);
    	\coordinate (P4) at (0,\domR+\wallT);
    	\fill[cyan] (P1) -- (P2) -- (P3) -- (P4) -- cycle;
    	\draw[thick] (P1) -- (P2) -- (P3) -- (P4) -- cycle;
    	\node[above] at (\domL/2, \domR+\wallT) {\large $\Omega^\mathrm{s}$};
     
    	\draw[latex-] (\xSig,\domR) -- (\xSig-1.0,\domR-1.5);
    	\node[below left] at (\xSig-0.9,\domR-1.4) {\large $\Gamma^\mathrm{fs}$};
     
    	\draw[-latex] (\xNorm,\domR+\wallT) -- (\xNorm,\domR+\wallT+1.5);
    	\node[right] at (\xNorm, \domR+\wallT+1.3) {\normalsize $\boldsymbol{n}$};
     
    	\draw[-latex] (-1.0,-1.0) -- (0.5,-1.0);
    	\draw[-latex] (-1.0,-1.0) -- (-1.0,0.5);
    	\node[right] at (0.5,-1.0) {$x$};
    	\node[left]  at (-1.0,0.5) {$y$};
     
    	\draw[latex-latex] (\domL+2.0,0) -- (\domL+2.0,\domR);
    	\node[right] at (\domL+2.05, \domR/2) {\large $R$};
     
    	\draw[latex-latex] (0,-1.8) -- (\domL,-1.8);
    	\node[below] at (\domL/2, -1.8) {\large $L$};
     
    \end{tikzpicture}
\caption{Representative domain for the linear fluid-structure interaction problem.}
\label{fig:causin}
\end{figure}
    
	For the model problem, the fluid occupies a rectangular domain $\Omega^\mathrm{f} \subset \mathbb{R}^2$. To simplify the analysis, nonlinear convective effects are neglected, and the fluid is assumed to be inviscid and incompressible. The structure is taken to be a one dimensional entity (i.e. $\Omega^\mathrm{s} \subset \mathbb{R}$), whose material coordinates coincide with the initial configuration of the fluid-structure interface $\Gamma^\mathrm{fs}$, with the outward unit normal denoted by $\boldsymbol{n}$. A generalized string model is adopted to model the structural dynamics, wherein the deformation occurs normal to $\Gamma^\mathrm{fs}$ in the initial configuration. Under these conditions, the fluid governing equations
	take the form
	\begin{subequations}
	\begin{align}
			\rho^\mathrm{f}\frac{\partial \boldsymbol{u}^\mathrm{f}}{\partial t} + \nabla p &= 0 ~ &&\text{in} \quad (0,T)\times\Omega^\mathrm{f}, \label{eq:mom1D}\\
			\nabla\cdot\boldsymbol{u}^\mathrm{f} &= 0 ~ &&\text{in} \quad (0,T)\times\Omega^\mathrm{f}, \label{eq:div1D}\\
			\boldsymbol{u}^\mathrm{f}\cdot\boldsymbol{n} &= \frac{\partial \eta^\mathrm{s}}{\partial t} ~ &&\text{on} \quad (0,T)\times\Gamma^\mathrm{fs},
	\end{align}	\label{eq:linearFSI}	
	\end{subequations}
	and the structural governing equation is
	\begin{equation}
		\rho^\mathrm{s}t^\mathrm{s}\frac{\partial^2 \eta^\mathrm{s}}{\partial t^2} +
		a\eta^\mathrm{s} - b\frac{\partial^2 \eta^\mathrm{s}}{\partial x^2} = p \quad
		\text{in} \quad (0,T)\times\Omega^\mathrm{s}.
		\label{eq:structure}
	\end{equation}
	Here, $\rho^\mathrm{f}$, $\boldsymbol{u}^\mathrm{f}$ and $p$ denote the fluid density, velocity and pressure, while $\rho^\mathrm{s}$, $t^\mathrm{s}$ and
	$\eta^\mathrm{s}$ denote the density, thickness, and normal displacement of the structure. The coefficients of the structural model are $a =
	Et^\mathrm{s}/R^2(1-\nu^2)$ and $b = \kappa_T G t^\mathrm{s}$, where $E$, $\nu$	and $G$ are the Young's modulus, Poisson's ratio and shear modulus of the structural material. $R$ is the initial width of the fluid domain, and $\kappa_T$ is the Timoshenko shear correction factor. 
	
	
	Taking the divergence of \eqref{eq:mom1D} and combining with \eqref{eq:div1D} allows the fluid velocity to be eliminated and leads to the Poisson equation for pressure,
	\begin{equation}
		\Delta p = 0 \quad \text{in} \quad (0,T)\times\Omega^\mathrm{f},
		\label{eq:pressurePoisson}
	\end{equation}
	where the fluid state is entirely determined by the pressure and conditions 
imposed on the boundary of the fluid domain $\partial\Omega^\mathrm{f}$. Similarly, the kinematic
	boundary condition at the fluid-structure interface $\Gamma^\mathrm{fs}$ becomes
	\begin{equation}
		\frac{\partial p}{\partial \boldsymbol{n}} = -\rho^\mathrm{f}\frac{\partial^2
			\eta^\mathrm{s}}{\partial t^2} \quad \text{on} \quad (0,T)\times\Gamma^\mathrm{fs}.
		\label{eq:pressureBC}
	\end{equation}
	
	\subsection{Discretized monolithic system and its exact solution}
	The fluid and structural governing equations are discretized in space using the 
	standard Galerkin finite element approach. To advance the structure in time, a
	Newmark time integration scheme is employed, which expresses the structural
	governing equation in terms of its acceleration $\ddot{\eta}^\mathrm{s}$. 
	With the appropriate choice of function spaces, the weak forms corresponding to the governing equations for the fluid \eqref{eq:pressurePoisson} and structure \eqref{eq:structure} are
	\begin{subequations}
		\begin{align}
			\frac{1}{\rho^\mathrm{f}}(\nabla q, \nabla p)_{\Omega^\mathrm{f}} +
			(q,\ddot{u}_\Sigma)_\Sigma &= 0,\\
			(v_\mathrm{s},
			\rho^\mathrm{s}t^\mathrm{s}\ddot{\eta}^\mathrm{s})_{\Omega^\mathrm{s}} +
			ac_s\Delta t^2(v_\mathrm{s}, \ddot{\eta}^\mathrm{s})_{\Omega^\mathrm{s}} +
			bc_s\Delta t^2 \left(\frac{\partial v_\mathrm{s}}{\partial x},\frac{\partial
				\eta^\mathrm{s}}{\partial x}\right)_{\Omega^\mathrm{s}} -
			(v_\mathrm{s},p_\Sigma)_\Sigma &= 0,
		\end{align}  \label{eq:varlinearFSI}  
	\end{subequations}
	respectively. Here, $v_\mathrm{s}$ and $q$ denote the test functions used to
	construct the variational form, $(\cdot,\cdot)_\Omega$ denotes the inner product $L^2$ on a domain $\Omega$, and $c_\mathrm{s}$ is a constant arising from the
	time-integration scheme for the structure.
	
	Assembling \eqref{eq:varlinearFSI} on the bases of finite elements and
	partitioning the pressure degrees of freedom into interior sets (denoted by the
	subscript $I$) and interface sets (denoted by the subscript $\Sigma$), the
	discrete monolithic system is constructed as follows:
	\begin{align}
		\mathcal{A}u = \begin{bmatrix}
			K_\mathrm{s} & 0 & -B_\Sigma^T\\
			0 & A_I & A_{I\Sigma}\\
			B_\Sigma & A_{I\Sigma}^T & A_{\Sigma}
		\end{bmatrix} \begin{bmatrix}
			\ddot{\eta}^\mathrm{s}\\
			p_I\\
			p_\Sigma
		\end{bmatrix} = \begin{bmatrix}
			0\\
			f_I\\
			0
		\end{bmatrix} = f, \label{eq:monolithic}
	\end{align}
	with $f_I$ representing the contribution of the prescribed boundary conditions to $\partial \Omega^\mathrm{f}\Gamma^\mathrm{fs}$. Let
	$\{\phi_i\}$ and $\{\psi_j\}$ be the bases of the structural and pressure
	spaces. The individual blocks are then given by
	\begin{subequations}
		\begin{align*}
			(K_\mathrm{s})_{ij} &= \left(\phi_i, \left(\rho^\mathrm{s}t^\mathrm{s} +
			ac_\mathrm{s}\Delta t^2\right)\phi_j\right)_{\Omega^\mathrm{s}} +
			bc_\mathrm{s}\Delta t^2 \left(\frac{\partial \phi_i}{\partial x},\frac{\partial
				\phi_j}{\partial x}\right)_{\Omega^\mathrm{s}}, \\
			(A_I)_{ij} &= \frac{1}{\rho^\mathrm{f}}\left(\nabla \psi_i, \nabla
			\psi_j\right)_{\Omega^\mathrm{f}}, \\
			(A_{I\Sigma})_{ij} &= \frac{1}{\rho^\mathrm{f}}\left(\nabla \psi_i, \nabla
			(\mathcal{R}\psi_j)\right)_{\Omega^\mathrm{f}}, \\
			(A_{\Sigma})_{ij} &= \frac{1}{\rho^\mathrm{f}}\left(\nabla
			(\mathcal{R}\psi_i), \nabla (\mathcal{R}\psi_j)\right)_{\Omega^\mathrm{f}},
			\\
			(B_\Sigma)_{ij} &= \left(\mathcal{R}\psi_i, \phi_j\right)_\Sigma, \nonumber
		\end{align*}
	\end{subequations}
	where $\mathcal{R}$ denotes the restriction of the pressure space to the
	interface.
	
	The exact solution of the coupled problem \eqref{eq:monolithic} is obtained
by the block $LU$ decomposition of $\mathcal{A}$, expressed as
	\begin{equation}
		\mathcal{A} = \begin{bmatrix}
			K_\mathrm{s} & 0 & -B_\Sigma^T\\
			0 & A_I & A_{I\Sigma}\\
			B_\Sigma & A_{I\Sigma}^T & A_{\Sigma}
		\end{bmatrix} = \begin{bmatrix}
			K_\mathrm{s} & 0 & 0\\
			0 & A_I & 0\\
			B_\Sigma & A_{I\Sigma}^T & S_2\\
		\end{bmatrix}\begin{bmatrix}
			I &0 & -K_\mathrm{s}^{-1}B_\Sigma^T\\
			0 & I & A_I^{-1}A_{I\Sigma}\\
			0 & 0 & I\\
		\end{bmatrix}, \label{eq:decomposition}
	\end{equation} where
	\begin{align}
		S_2 &= \underbrace{A_\Sigma - A_{I\Sigma}^TA_I^{-1}A_{I\Sigma}}_{S_\Sigma} +
		\underbrace{B_\Sigma K_\mathrm{s}^{-1}B_\Sigma^T}_{C},
	\end{align}
	is the Schur interface complement, which naturally decomposes into two
	components, $S_\Sigma$ and $C$, representing the effects of the bulk of the fluid and 
	of the structural motion on the interface. The term
	$\rho^\mathrm{f}S_\Sigma^{-1}$ is commonly referred to in the literature as the
	added-mass operator.
	Substituting \eqref{eq:decomposition} into \eqref{eq:monolithic}, the exact
solution is obtained over a time step as
	\begin{subequations}
		\begin{align}
			p_\Sigma &= -(S_\Sigma + C)^{-1}\left(A_{I\Sigma}^TA_I^{-1}f_I\right),
			\label{eq:monolithicItf}\\
			p_I &= A_I^{-1}f_I + A_I^{-1}A_{I\Sigma}\left(S_\Sigma +
			C\right)^{-1}A_{I\Sigma}^TA_I^{-1}f_I,\\
			\ddot{\eta}^\mathrm{s} &= K_\mathrm{s}^{-1}B_\Sigma^Tp_\Sigma.
		\end{align}    \label{eq:monolithicSol}
	\end{subequations} 
	
	\subsection{Partitioned problem and analysis of convergence}\label{sec:partitioned}
	Alternatively, the solution to the FSI problem \eqref{eq:varlinearFSI} can be constructed using a Dirichlet-Neumann partitioning scheme, in which structural and fluid problems are solved in a sequence. However, under partitioning, the convergence is conditional, and it is necessary to establish conditions under which the iterates converge to the monolithic solution \eqref{eq:monolithicSol}. 
	
	Let the fixed-point iterations be performed such that the $(k+1)^\mathrm{th}$
	iteration begins with the interface and interior pressures $(p_\Sigma^k, p_I^k)$
	from the previous iterate. The structure is advanced first, subject to the
	interface pressure held fixed at $p_\Sigma^k$, which gives
	\begin{equation}
		\ddot{\eta}^{k+1} = K_\mathrm{s}^{-1}B_\Sigma^T p_\Sigma^{k}.
		\label{eq:accStruct}
	\end{equation}
	
	The resulting interface acceleration is then imposed on the fluid through the
	interface condition \eqref{eq:pressureBC}. Consequently, the discretized fluid equations
	take the form
	\begin{equation}
		\begin{bmatrix}
			A_I & A_{I\Sigma}\\
			A_{I\Sigma}^T & A_\Sigma
		\end{bmatrix} \begin{bmatrix}
			p_I^{k+1}\\
			p_\Sigma^{k+1}
		\end{bmatrix} = \begin{bmatrix}
			f_I\\
			-B_\Sigma\ddot{\eta}^{k+1}
		\end{bmatrix}, \label{eq:fluidPart}
	\end{equation}
	and eliminating $p_I^{k+1}$ yields
	\begin{subequations}
		\begin{align}
			p_\Sigma^{k+1} &= -S_\Sigma^{-1}Cp_\Sigma^{k} -
			S_\Sigma^{-1}\left(A_{I\Sigma}^TA_I^{-1}f_I\right),\label{eq:itfpresPart}\\
			p_I^{k+1} &= A_I^{-1}f_I + A_I^{-1}A_{I\Sigma}S_\Sigma^{-1}\left(Cp_\Sigma^k
			+ A_{I\Sigma}^TA_I^{-1}f_I\right).
		\end{align}\label{eq:partitionedIterates}
	\end{subequations}
	
	Comparing the partitioned pressure iterates \eqref{eq:partitionedIterates} with
	the monolithic solution \eqref{eq:monolithicSol} reveals that: (1) the
	monolithic solution inverts $S_\Sigma + C$ in a single step, while the
	partitioned scheme inverts $S_\Sigma$ alone with $C$ lag to the previous
	iterate, and (2) the operator $C$ represents the response of the structure to a
	perturbation in the pressure at the interface, projected back into the fluid pressure
	space. It is this effect which the fluid sub-problem no longer accounts for once the partitioned solution process is adopted.
	
	Since interior pressure is entirely determined by boundary data $f_I$
and interface pressure $p_\Sigma$, convergence is governed by interface 
pressure alone. This implies that the convergence of iterations is governed by
	the maximal eigenvalue of the iteration matrix $S_\Sigma^{-1}C$. As shown below,
	the operator $C$ reduces to a scalar multiple of the interface mass matrix $M_\Sigma$, which motivates the characterization of the spectrum of $S_\Sigma$ relative to $M_\Sigma$.
	\begin{lem}
		The eigenvalues of $S_\Sigma^{-1}M_\Sigma$ are $\rho_\mathrm{f}\mu_i$ with
		eigenvectors $v_i$, and the maximum eigenvalue is
		$\rho_\mathrm{f}\mu_\mathrm{max}$. Equivalently, the eigenvalue problem
		$S_\Sigma v = \sigma M_\Sigma v$ has the spectrum $\sigma_i = (\rho_\mathrm{f}
		\mu_i)^{-1}$\label{lem:eigsSM}
	\end{lem}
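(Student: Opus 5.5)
The lemma is stated in terms of $\mu_i$ and $v_i$, which are not defined before the statement. I read them as the generalized eigenpairs of the interface Steklov--Poincaré (Dirichlet-to-Neumann) problem normalized with unit density: $\hat S_\Sigma v_i = \mu_i^{-1} M_\Sigma v_i$, where $\hat S_\Sigma = \rho_\mathrm{f} S_\Sigma$. Equivalently, $\mu_i$ are the eigenvalues of the unit-density Neumann-to-Dirichlet (added-mass) operator $\hat S_\Sigma^{-1} M_\Sigma$. The proof then reduces to tracking the factor $1/\rho_\mathrm{f}$ and translating between the two generalized eigenproblems.

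\textbf{Step 1: scaling of $S_\Sigma$.} Every pressure block $A_I$, $A_{I\Sigma}$, $A_\Sigma$ carries the common factor $1/\rho_\mathrm{f}$. Hence
\begin{equation*}
S_\Sigma = A_\Sigma - A_{I\Sigma}^T A_I^{-1} A_{I\Sigma} = \frac{1}{\rho_\mathrm{f}}\left(\hat A_\Sigma - \hat A_{I\Sigma}^T \hat A_I^{-1}\hat A_{I\Sigma}\right) = \frac{1}{\rho_\mathrm{f}} \hat S_\Sigma ,
\end{equation*}
since one factor $\rho_\mathrm{f}$ from $A_I^{-1}$ cancels one of the two factors of $1/\rho_\mathrm{f}$ in $A_{I\Sigma}^T(\cdot)A_{I\Sigma}$.

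\textbf{Step 2: properties of $\hat S_\Sigma$ and $M_\Sigma$.} $\hat S_\Sigma$ is symmetric, because it is a Schur complement of the symmetric stiffness matrix. It is positive definite on the interface space provided the pressure is pinned somewhere on $\partial\Omega^\mathrm{f}\setminus\Gamma^\mathrm{fs}$ (the outlet Dirichlet data entering $f_I$), because it is the Schur complement of an SPD matrix. The interface mass matrix $M_\Sigma$ is SPD. Therefore the pencil $(\hat S_\Sigma, M_\Sigma)$ is simultaneously diagonalizable by an $M_\Sigma$-orthonormal basis $\{v_i\}$ with real positive eigenvalues.

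\textbf{Step 3: eigenvalues of $S_\Sigma^{-1}M_\Sigma$.} Let $\hat S_\Sigma^{-1} M_\Sigma v_i = \mu_i v_i$. Then
\begin{equation*}
S_\Sigma^{-1} M_\Sigma v_i = \rho_\mathrm{f}\,\hat S_\Sigma^{-1} M_\Sigma v_i = \rho_\mathrm{f}\mu_i v_i ,
\end{equation*}
so the eigenpairs are $(\rho_\mathrm{f}\mu_i, v_i)$. Because $\rho_\mathrm{f}>0$, the ordering of the eigenvalues is preserved, and the maximum is $\rho_\mathrm{f}\mu_{\max}$.

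\textbf{Step 4: the equivalent generalized form.} Since every $\mu_i > 0$, multiplying $S_\Sigma^{-1}M_\Sigma v_i = \rho_\mathrm{f}\mu_i v_i$ on the left by $S_\Sigma$ and dividing by $\rho_\mathrm{f}\mu_i$ gives
\begin{equation*}
S_\Sigma v_i = (\rho_\mathrm{f}\mu_i)^{-1} M_\Sigma v_i .
\end{equation*}
This yields $\sigma_i = (\rho_\mathrm{f}\mu_i)^{-1}$. Conversely, every generalized eigenpair of $(S_\Sigma, M_\Sigma)$ arises this way, so the spectra coincide.

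\textbf{Main obstacle.} The delicate step is Step 2: establishing invertibility and positive definiteness of $S_\Sigma$, so that $\mu_i \neq 0$ and $\mu_i$ is real. This requires an explicit boundary-condition assumption that excludes the pure-Neumann null space of constant pressure. Without such a Dirichlet condition on part of $\partial\Omega^\mathrm{f}$, one would instead work on the complement of the constants. The rest of the argument is bookkeeping.
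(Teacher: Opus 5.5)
Your proof is correct, and its core matches the paper's: both rest on $(S_\Sigma, M_\Sigma)$ being a symmetric-definite pencil. The differences are in how that fact is invoked and in what $\mu_i$ means. The paper makes the argument explicit through the similarity $S_\Sigma^{-1}M_\Sigma = S_\Sigma^{-1/2}\tilde M_\Sigma S_\Sigma^{1/2}$, where $\tilde M_\Sigma = S_\Sigma^{-1/2}M_\Sigma S_\Sigma^{-1/2}$ is SPD by congruence. It builds the eigenvectors as $v_i = S_\Sigma^{-1/2}w_i$ and simply names the eigenvalues of $\tilde M_\Sigma$ as $\rho^\mathrm{f}\mu_i$, so in its proof the factor $\rho^\mathrm{f}$ is only a label. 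You instead quote simultaneous diagonalization of an SPD pencil, which is proved by the same kind of symmetrization. In your Step 1 you also extract the common factor $1/\rho^\mathrm{f}$ from $A_I$, $A_{I\Sigma}$, $A_\Sigma$, so that the $\mu_i$ become eigenvalues of the density-free operator $(\rho^\mathrm{f}S_\Sigma)^{-1}M_\Sigma$. This gives content the paper's proof does not: it justifies the paper's later remarks that $\mu_\mathrm{max}$ is a purely geometric quantity. It also states the hypothesis under which $S_\Sigma$ is SPD, namely pressure Dirichlet data on part of $\partial\Omega^\mathrm{f}\setminus\Gamma^\mathrm{fs}$, entering through $f_I$; the paper only asserts this with a citation. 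The paper's version is shorter and stays entirely at the matrix level with explicitly constructed eigenvectors, while yours is the more informative reading of the statement.
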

	\begin{proof}
		Since $S_\Sigma$ is symmetric positive definite (SPD) \citep{boyd2004convex}, there exists a square root
		matrix $S_\Sigma^\frac12$. Therefore, the matrix $S_\Sigma^{-1}M_\Sigma$ can be
		expressed as
		\begin{equation}
			S_\Sigma^{-1}M_\Sigma = S_\Sigma^{-\frac12}(S_\Sigma^{-\frac12} M_\Sigma
			S_\Sigma^{-\frac12})S_\Sigma^\frac12,
		\end{equation} 
		
		Let $\tilde{M}_\Sigma = S_\Sigma^{-\frac12} M_\Sigma S_\Sigma^{-\frac12}$, and
		let $(\rho^\mathrm{f}\mu_i, w_i)$ be its eigenpairs, with $v_i =
		S_\Sigma^{-\frac12}w_i$. Since $M_\Sigma$ is SPD, $\tilde{M}_\Sigma$ is SPD by
		congruence, and therefore the eigenvalues $\rho^\mathrm{f}\mu_i$ are real and
		positive. Now consider the following
		\begin{align}
			S_\Sigma^{-1}M_\Sigma v_i &= S_\Sigma^{-\frac12} \tilde{M}_\Sigma
			S_\Sigma^{\frac12}v_i = S_\Sigma^{-\frac12} \tilde{M}_\Sigma w_i,\nonumber\\
			&= \rho^\mathrm{f}\mu_i S_\Sigma^{-\frac12}w_i,\nonumber\\
			&= \rho^\mathrm{f}\mu_i v_i. \label{eq:eigenvalueSM}
		\end{align} From \eqref{eq:eigenvalueSM}, it follows that
		$(\rho^\mathrm{f}\mu_i, v_i)$ are the eigenpairs of $S_\Sigma^{-1}M_\Sigma$.
		Correspondingly, the maximal eigenvalue of $S_\Sigma^{-1}M_\Sigma$ is
		$\rho^\mathrm{f}\mu_\mathrm{max}$. It trivially follows that $\sigma_i =
		(\rho_\mathrm{f} \mu_i)^{-1}$
	\end{proof} 
	
	With regard to the partitioned scheme, the following result can be shown to
	hold true:
	\begin{prop}
		Let $\rho^\mathrm{f}\mu_\mathrm{max}$ be the largest eigenvalue of
		$S_\Sigma^{-1}M_\Sigma$, and $b = 0$. Then the iterates
		\eqref{eq:partitionedIterates} converge to the monolithic solution
		\eqref{eq:monolithicSol} if 
		\begin{equation}
			\frac{\rho^\mathrm{f}\mu_\mathrm{max}}{\rho^\mathrm{s}t^\mathrm{s} +
				ac_\mathrm{s}\Delta t^2} <1 \label{eq:DNineq}
		\end{equation}
	\end{prop}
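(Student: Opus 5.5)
With $b=0$ the stiffness term $bc_\mathrm{s}\Delta t^2(\partial_x\phi_i,\partial_x\phi_j)$ drops out of $K_\mathrm{s}$. What remains is
\[
K_\mathrm{s}=(\rho^\mathrm{s}t^\mathrm{s}+ac_\mathrm{s}\Delta t^2)\,M_\mathrm{s},
\]
where $M_\mathrm{s}$ is the structural mass matrix $(\phi_i,\phi_j)_{\Omega^\mathrm{s}}$. Since the structural material coordinates coincide with $\Gamma^\mathrm{fs}$, I will take the structural and interface pressure discretizations to be matching. In that case $B_\Sigma$ is the square interface mass matrix, and $B_\Sigma=M_\mathrm{s}=M_\Sigma$ (all symmetric). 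This gives
\[
C=B_\Sigma K_\mathrm{s}^{-1}B_\Sigma^T=\frac{1}{\rho^\mathrm{s}t^\mathrm{s}+ac_\mathrm{s}\Delta t^2}\,M_\Sigma M_\Sigma^{-1}M_\Sigma=\frac{M_\Sigma}{\rho^\mathrm{s}t^\mathrm{s}+ac_\mathrm{s}\Delta t^2},
\]
which is the scalar multiple of $M_\Sigma$ announced before Lemma~\ref{lem:eigsSM}. If the spaces were not matching, I would instead bound $C\preceq M_\Sigma/(\rho^\mathrm{s}t^\mathrm{s}+ac_\mathrm{s}\Delta t^2)$ in the Loewner order, using that $B_\Sigma M_\mathrm{s}^{-1}B_\Sigma^T$ is an $L^2$ projection onto the structural trace space and hence bounded by $M_\Sigma$. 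I expect this identification of $C$ to be the main step needing care.

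Next, I define the error $e^k=p_\Sigma^k-p_\Sigma^\ast$, where $p_\Sigma^\ast$ is the monolithic interface pressure \eqref{eq:monolithicItf}. The monolithic value is a fixed point of \eqref{eq:itfpresPart}, because $(S_\Sigma+C)p^\ast_\Sigma=-A_{I\Sigma}^TA_I^{-1}f_I$ is equivalent to $p^\ast_\Sigma=-S_\Sigma^{-1}Cp^\ast_\Sigma-S_\Sigma^{-1}A_{I\Sigma}^TA_I^{-1}f_I$. Subtracting the two relations gives
\[
e^{k+1}=-S_\Sigma^{-1}C\,e^k=-\frac{1}{\rho^\mathrm{s}t^\mathrm{s}+ac_\mathrm{s}\Delta t^2}\,S_\Sigma^{-1}M_\Sigma\,e^k .
\]
Convergence therefore holds whenever the spectral radius of the iteration matrix is below one.

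By Lemma~\ref{lem:eigsSM}, $S_\Sigma^{-1}M_\Sigma$ is diagonalizable with real positive eigenvalues $\rho^\mathrm{f}\mu_i$ and eigenvectors $v_i$. Its largest eigenvalue is $\rho^\mathrm{f}\mu_\mathrm{max}$. Expanding $e^0=\sum_i\alpha_iv_i$ gives
\[
e^k=\sum_i\left(-\frac{\rho^\mathrm{f}\mu_i}{\rho^\mathrm{s}t^\mathrm{s}+ac_\mathrm{s}\Delta t^2}\right)^k\alpha_iv_i .
\]
Under \eqref{eq:DNineq} every factor has modulus less than one, so $e^k\to0$. Equivalently, the iteration is a contraction in the $S_\Sigma$-energy norm, since $S_\Sigma^{1/2}S_\Sigma^{-1}M_\Sigma S_\Sigma^{-1/2}$ is symmetric.

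Finally, $p_I^{k+1}$ and $\ddot\eta^{k+1}$ are affine, continuous functions of $p_\Sigma^k$. Their limits therefore coincide with the monolithic expressions in \eqref{eq:monolithicSol}. For $p_I$ this follows from the identity $S_\Sigma^{-1}(Cp^\ast_\Sigma+A_{I\Sigma}^TA_I^{-1}f_I)=-p^\ast_\Sigma$, which again comes from the fixed-point relation.
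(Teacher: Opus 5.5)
Your proof is correct and follows the same route as the paper. You derive the error recursion $p_\Sigma^{k+1}-p_\Sigma=-S_\Sigma^{-1}C(p_\Sigma^k-p_\Sigma)$, use $b=0$ to reduce $C$ to $M_\Sigma/(\rho^\mathrm{s}t^\mathrm{s}+ac_\mathrm{s}\Delta t^2)$, and read off the spectral radius from Lemma~\ref{lem:eigsSM}. Your explicit justification of $B_\Sigma=M_\Sigma$ (the paper only states this later, in the proof of Lemma~\ref{lem:eigsC}) and your recovery of $p_I$ and $\ddot\eta^\mathrm{s}$ from the interface limit are useful additions, but they do not change the argument.
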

	\begin{proof}
		Using Eqs.\eqref{eq:monolithicItf} and \eqref{eq:itfpresPart},
		$p_\Sigma^{k+1}$ can be expressed as
		\begin{align}
			p_\Sigma^{k+1} - p_\Sigma &= -S_\Sigma^{-1}C(p_\Sigma^k - p_\Sigma).
			\label{eq:fixedPiteration}
		\end{align} Defining the iteration matrix $T = -S_\Sigma^{-1}C$, a necessary
		and sufficient condition for the iterates to converge to the monolithic solution
		is 
		\begin{equation}
			\rho(T) = \mathrm{max}_\lambda ~|\lambda(T)| < 1, \label{eq:Spectralcond}
		\end{equation} where $\rho(T)$ is the spectral radius of the iteration matrix.
		For $b = 0$, $C = \frac{1}{\rho^\mathrm{s}t^\mathrm{s} + ac_s\Delta t^2
		}M_\Sigma$, where $M_\Sigma$ is the matrix of mass of the interface. Using Lemma
		\ref{lem:eigsSM}, the spectral radius of the iteration matrix is given by
		\begin{align}
			\rho(T) &= \mathrm{max}_\lambda~|\lambda(-S_\Sigma^{-1}C)| ,\nonumber\\
			&=\frac{1}{\rho^\mathrm{s}t^\mathrm{s} + ac_s\Delta t^2} \mathrm{max}_\lambda
			~|\lambda(-S_\Sigma^{-1}M_\Sigma)|,\nonumber\\
			&= \frac{\rho^\mathrm{f}\mu_\mathrm{max}}{\rho^\mathrm{s}t^\mathrm{s} +
				ac_s\Delta t^2}
		\end{align} and invoking the inequality \eqref{eq:Spectralcond} yields the
		condition \eqref{eq:DNineq}.
	\end{proof} 
	
	The above analysis illustrates that the convergence of the Dirichlet-Neumann
partition scheme is strictly dependent on the eigenvalue $\mu_\mathrm{max}$,
	which is governed by the geometric properties of the domain. Therefore, in the
	absence of under-relaxation, the scheme is conditionally convergent. These 
	findings are in agreement with those presented by Causin \textit{et al.} in 
	\citep{2005_Causin_FSI}.
	
	\subsection{Augmenting the partitioned scheme for enhancing convergence}\label{sec:aug_partitioned}
	
	To restore the structural response within the fluid solve, an interfacial
	operator $\tilde{C}$ is introduced. Adding $\tilde{C}$ to the interface pressure block of 
	\eqref{eq:fluidPart}, and balancing it on the right-hand side, gives
	\begin{equation}
		\begin{bmatrix}
			A_I & A_{I\Sigma}\\
			A_{I\Sigma}^T & A_\Sigma + \tilde{C}
		\end{bmatrix} \begin{bmatrix}
			p_I^{k+1}\\
			p_\Sigma^{k+1}
		\end{bmatrix} = \begin{bmatrix}
			f_I\\
			-B_\Sigma\ddot{\eta}^{k+1} + \tilde{C}p_\Sigma^k
		\end{bmatrix}. \label{eq:augmented}
	\end{equation}
	The augmentation is consistent by construction, since at a fixed point
	$p_\Sigma^{k+1} = p_\Sigma^k$ the two occurrences of $\tilde{C}$ cancel and
	\eqref{eq:augmented} reduces to \eqref{eq:fluidPart}. Combining
	\eqref{eq:augmented} with the structural solve \eqref{eq:accStruct} and
	eliminating $p_I^{k+1}$, the interface pressure is obtained as
	\begin{equation}
		p_\Sigma^{k+1} = (S_\Sigma + \tilde{C})^{-1}(\tilde{C} - C)p_\Sigma^k -
		(S_\Sigma + \tilde{C})^{-1}A_{I\Sigma}^TA_I^{-1}f_I, \label{eq:itfPresaug}
	\end{equation}
	in which $\tilde{C}$ enters only through the difference $\tilde{C} - C$, so the
	scheme contracts to the extent that $\tilde{C}$ captures the lagged structural
	response.

	To establish the conditions under which the augmented iterations converge to the
	monolithic solution, structural inertia is defined as
	\begin{equation}
		\zeta = \rho^\mathrm{s}t^\mathrm{s} + ac_\mathrm{s}\Delta t^2,
	\end{equation}
	together with the ratio,
	\begin{equation}
		\xi = \frac{bc_\mathrm{s}\Delta t^2}{\zeta},
	\end{equation}
	and denote by $\tilde{k}_i$ and $w_i$ the eigenvalues and corresponding
	eigenvectors of the generalized eigenvalue problem (GEVP)\begin{equation}
		K_\Sigma w = \tilde{k}M_\Sigma w,
	\end{equation}
	where $K_\Sigma$ is the structural stiffness matrix and $M_\Sigma$ the interface
	mass matrix. The following result then holds for the structural response
	operator $C$.
	\begin{lem}
		Given $b\neq 0$, the eigenvalues $\theta_i$ of the GEVP, $Cz = \theta M_\Sigma
		z$ are given by
		\begin{equation}
			\theta_i = \frac{1}{\zeta}\left(\frac{1}{1 + \xi \tilde{k}_i}\right)
			\label{eq:eigGEVPC}
		\end{equation}
		\label{lem:eigsC}
	\end{lem}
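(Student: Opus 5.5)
The plan is to write $K_\mathrm{s}$ in terms of the two interface matrices, diagonalize the resulting pencil, and read off $\theta_i$ from the inverse. The eigenvector computation is routine; the real difficulty is how $B_\Sigma$ links structural and pressure spaces.

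\textbf{Step 1: factor $K_\mathrm{s}$.} From the block definitions, with $M_\Sigma$ the structural mass matrix on $\Gamma^\mathrm{fs}$ and $K_\Sigma$ the stiffness matrix $(\partial_x\phi_i,\partial_x\phi_j)$, we have
\begin{equation*}
K_\mathrm{s} = \zeta M_\Sigma + b c_\mathrm{s}\Delta t^2 K_\Sigma = \zeta\left(M_\Sigma + \xi K_\Sigma\right).
\end{equation*}

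\textbf{Step 2: identify $B_\Sigma$ with $M_\Sigma$.} As in the case $b=0$ of the preceding Proposition, I take the structural space and the interface pressure trace space to coincide (matching interface meshes). Then $B_\Sigma = M_\Sigma$, which is symmetric, so
\begin{equation*}
C = M_\Sigma\left(\zeta(M_\Sigma+\xi K_\Sigma)\right)^{-1}M_\Sigma .
\end{equation*}
This identification is the step I expect to be the main obstacle. If the structural and pressure spaces differ, $B_\Sigma$ is only a mixed mass matrix, and the GEVP for $C$ no longer diagonalizes in the $w_i$. I would state the matching-space assumption explicitly, consistent with how $C=\zeta^{-1}M_\Sigma$ was obtained earlier.

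\textbf{Step 3: simultaneous diagonalization.} Since $M_\Sigma$ is SPD and $K_\Sigma$ is symmetric positive semidefinite, the GEVP $K_\Sigma w=\tilde k M_\Sigma w$ admits an $M_\Sigma$-orthonormal basis $W=[w_1,\dots,w_n]$. This follows by the same congruence argument used in Lemma~\ref{lem:eigsSM}. It gives
\begin{equation*}
W^TM_\Sigma W=I, \qquad W^TK_\Sigma W=\Lambda=\mathrm{diag}(\tilde k_i),
\end{equation*}
and hence $M_\Sigma+\xi K_\Sigma = W^{-T}(I+\xi\Lambda)W^{-1}$. Because $\xi\ge 0$ and $\tilde k_i\ge 0$, each $1+\xi\tilde k_i>0$, so this matrix is invertible and
\begin{equation*}
(M_\Sigma+\xi K_\Sigma)^{-1}=W(I+\xi\Lambda)^{-1}W^T .
\end{equation*}

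\textbf{Step 4: read off the eigenvalues.} Take $z=w_i$. Then $W^TM_\Sigma w_i=e_i$, so
\begin{equation*}
C w_i = \zeta^{-1}M_\Sigma W(I+\xi\Lambda)^{-1}e_i = \zeta^{-1}(1+\xi\tilde k_i)^{-1}M_\Sigma w_i .
\end{equation*}
This shows $(\theta_i, w_i)$ with $\theta_i$ as in \eqref{eq:eigGEVPC} is an eigenpair of $Cz=\theta M_\Sigma z$. Since the $w_i$ form a basis, these are all the eigenvalues.
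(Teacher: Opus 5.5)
Your proof is correct and follows the paper's route. The paper also takes $B_\Sigma = M_\Sigma$ and $K_\mathrm{s} = \zeta(M_\Sigma + \xi K_\Sigma)$, then substitutes $\tilde z = M_\Sigma^{1/2}z$ to obtain $\zeta^{-1}(I + \xi M_\Sigma^{-1/2}K_\Sigma M_\Sigma^{-1/2})^{-1}\tilde z = \theta\tilde z$; this is the same diagonalization as your $M_\Sigma$-orthonormal basis, since $W = M_\Sigma^{-1/2}U$ with $U$ orthonormal. Your explicit matching-space assumption behind $B_\Sigma = M_\Sigma$ and your check that $1+\xi\tilde k_i>0$ are details the paper states without justification.
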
 
	\begin{proof}
		For the linear FSI problem under consideration, the matrices $B_\Sigma =
		M_\Sigma$ and $K_\mathrm{s} = \zeta M_\Sigma + \xi\zeta K_\Sigma$. It follows that the GEVP takes
		the form
		\begin{equation}
			\frac{1}{\zeta}\left(I + \xi M_\Sigma^{-\frac12}K_\Sigma
			M_\Sigma^{-\frac12}\right)^{-1}\tilde{z} = \theta\tilde{z}, \label{eq:GEVPC}
		\end{equation} where $\tilde{z} = M_\Sigma^\frac12z$. Using congruence
		properties, the eigenvalues of $M_\Sigma^{-\frac12}K_\Sigma M_\Sigma^{-\frac12}$
		are $\tilde{k}_i$. Consequently, it follows that the eigenvalues of
		\eqref{eq:GEVPC} are given by \eqref{eq:eigGEVPC}.
	\end{proof}
	
	\begin{prop}
		Let the matrix $\tilde{C} = \gamma M_\Sigma$. Then the iterates given by
		\eqref{eq:itfPresaug} are convergent under the condition
		\begin{equation}
			\gamma > \frac12\left(\theta_\mathrm{max} -
			\frac{1}{\rho_\mathrm{f}\mu_\mathrm{max}}\right),
		\end{equation}
		and the spectral radius of the corresponding iteration matrix has an upper
		bound when $\gamma$ takes the value
		\begin{equation}
			\gamma^\star = \frac12(\theta_\mathrm{max}+ \theta_\mathrm{min})
		\end{equation}
	\end{prop}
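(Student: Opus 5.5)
The plan is to reduce the iteration matrix $T_\gamma = (S_\Sigma + \gamma M_\Sigma)^{-1}(\gamma M_\Sigma - C)$ from \eqref{eq:itfPresaug} to a scalar problem through a simultaneous diagonalization. Subtracting the monolithic interface equation \eqref{eq:monolithicItf}, written as $(S_\Sigma+\tilde C)p_\Sigma = (\tilde C - C)p_\Sigma - A_{I\Sigma}^TA_I^{-1}f_I$, from \eqref{eq:itfPresaug} gives
\begin{equation*}
e^{k+1} = T_\gamma e^k, \qquad e^k = p_\Sigma^k - p_\Sigma .
\end{equation*}
Convergence is therefore equivalent to $\rho(T_\gamma)<1$, exactly as in the proof of the Dirieichlet-Neumann proposition.

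To make the spectrum tractable, I will work in the $M_\Sigma$-congruent frame. Set $\hat S = M_\Sigma^{-\frac12}S_\Sigma M_\Sigma^{-\frac12}$ and $\hat C = M_\Sigma^{-\frac12} C M_\Sigma^{-\frac12}$, so that $T_\gamma$ is similar to
\begin{equation*}
(\hat S + \gamma I)^{-1}(\gamma I - \hat C).
\end{equation*}
By Lemma \ref{lem:eigsSM}, $\hat S$ has eigenvalues $\sigma_i = (\rho^\mathrm{f}\mu_i)^{-1}$. By Lemma \ref{lem:eigsC}, $\hat C$ has eigenvalues $\theta_i\in[\theta_\mathrm{min},\theta_\mathrm{max}]$. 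These two matrices are SPD but need not commute, so I cannot simply pair their eigenvalues. Instead, for an eigenpair $(\lambda,x)$ with $\|x\|=1$, I will take the Rayleigh quotient $(\gamma I - \hat C)x = \lambda(\hat S+\gamma I)x$ to obtain
\begin{equation*}
\lambda = \frac{\gamma - x^*\hat C x}{x^*\hat S x + \gamma}.
\end{equation*}
The eigenvalues are real, since $T_\gamma$ is similar to a symmetric matrix via $(\hat S+\gamma I)^{1/2}$, and $x^*\hat Cx\in[\theta_\mathrm{min},\theta_\mathrm{max}]$ while $x^*\hat Sx\ge \sigma_\mathrm{min} = (\rho^\mathrm{f}\mu_\mathrm{max})^{-1}$. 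This gives
\begin{equation*}
|\lambda| \le \frac{\max(|\gamma-\theta_\mathrm{min}|,\,|\gamma-\theta_\mathrm{max}|)}{\gamma + (\rho^\mathrm{f}\mu_\mathrm{max})^{-1}}.
\end{equation*}
In the commuting (Fourier-type) case of the model problem, the eigenvalues are simply $(\gamma-\theta_i)/(\sigma_i+\gamma)$.

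For the convergence condition, I will observe that the numerator is dominated by its worst case. When $\gamma \ge \theta_\mathrm{max}$, the bound is below $1$ automatically because $\sigma_\mathrm{min}>0$. When $\gamma<\theta_\mathrm{max}$, the dangerous case is $\lambda$ near $\lambda_{\min}$, with $\theta_\mathrm{max}-\gamma < \gamma + (\rho^\mathrm{f}\mu_\mathrm{max})^{-1}$; this is exactly $\gamma > \frac12(\theta_\mathrm{max} - (\rho^\mathrm{f}\mu_\mathrm{max})^{-1})$. The positive side, $\gamma-\theta_\mathrm{min} < \gamma + \sigma_\mathrm{min}$, is always true because $\theta_\mathrm{min}>0$. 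Both sides together then give $\rho(T_\gamma)<1$.

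For the optimal parameter, I will minimize the numerator $\max(|\gamma-\theta_\mathrm{min}|,|\gamma-\theta_\mathrm{max}|)$ over $\gamma$. The standard equioscillation argument balances the two terms at $\gamma^\star = \frac12(\theta_\mathrm{max}+\theta_\mathrm{min})$. This yields the bound
\begin{equation*}
\rho(T_{\gamma^\star}) \le \frac{\theta_\mathrm{max}-\theta_\mathrm{min}}{\theta_\mathrm{max}+\theta_\mathrm{min} + 2(\rho^\mathrm{f}\mu_\mathrm{max})^{-1}} < 1,
\end{equation*}
which holds uniformly in $\mu_\mathrm{max}$, and hence in the added-mass strength.

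I expect the main obstacle to be justifying the Rayleigh-quotient bound when $\hat S$ and $\hat C$ do not commute. The eigenvector $x$ of the generalized pencil couples the two quadratic forms, so the numerator and denominator extremes are not attained simultaneously, and the stated bounds are sufficient rather than sharp. I also need to be careful that $\gamma^\star$ minimizes only the bound's numerator, which is what the statement's phrase "has an upper bound" indicates, and not the exact spectral radius. If a cleaner argument is needed, I would first treat the commuting case, where the eigenvalues are exactly $(\gamma-\theta_i)/(\sigma_i+\gamma)$, and then extend it by the field-of-values estimate above.
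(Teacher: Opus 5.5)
Your proposal is correct and follows essentially the same route as the paper: each eigenvalue of the error iteration is written as $\lambda = (\gamma - R_\mathrm{C}(x))/(\gamma + R_\mathrm{S}(x))$, with Rayleigh quotients taken relative to $M_\Sigma$, and the bounds $\theta_\mathrm{min}\le R_\mathrm{C}\le\theta_\mathrm{max}$ and $R_\mathrm{S}\ge\sigma_\mathrm{min}=(\rho^\mathrm{f}\mu_\mathrm{max})^{-1}$ then give the same sufficient condition and the same bound at $\gamma^\star$. One small correction to your closing caveat: $\max(|\gamma-\theta_\mathrm{min}|,|\gamma-\theta_\mathrm{max}|)/(\gamma+\sigma_\mathrm{min})$ decreases in $\gamma$ below the midpoint and increases above it, so $\gamma^\star$ minimizes the whole bound and not merely its numerator (though, as you note, not necessarily the exact spectral radius).
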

	\begin{proof}
		The iterations given by \eqref{eq:itfPresaug} can be rearranged as
		\begin{equation}
			p_\Sigma^{k+1} - p_\Sigma = -(S_\Sigma + \tilde{C})^{-1}(C -
			\tilde{C})(p_\Sigma^k - p_\Sigma),
		\end{equation} which leads to the definition of the iteration matrix
		$T_\mathrm{D} = -(S_\Sigma  + \tilde{C})^{-1}(C - \tilde{C})$. The iterates
		converge to the monolithic solution under the contractive condition on the
		spectral radius given by \eqref{eq:Spectralcond}. The generalized eigenvalue
		problem corresponding to the iteration matrix $T_\mathrm{D}$ is
		\begin{equation}
			(C- \gamma M_\Sigma)u = -\lambda(S_\Sigma + \gamma M_\Sigma)u,
		\end{equation} where $(\lambda, u)$ forms an eigenpair. Defining the Rayleigh
		coefficients $R_\mathrm{S}(x)$ and $R_\mathrm{C}(x)$ as
		\begin{equation*}
			R_C(x) = \frac{x^TCx}{x^T M_\Sigma x}, ~R_\mathrm{S}(x) = \frac{x^TS_\Sigma
				x}{x^TM_\Sigma x},
		\end{equation*} which satisfy the bounds $\theta_\mathrm{min} \leq
		R_\mathrm{C}(x) \leq \theta_\mathrm{max}$ and $\sigma_\mathrm{min} \leq
		R_\mathrm{S}(x) \leq \sigma_\mathrm{max}$. Using the results of Lemmas
		\ref{lem:eigsSM} and \ref{lem:eigsC}, the eigenvalues $\lambda(u_i)$ can be
		expressed as
		\begin{equation}
			\lambda_i = \frac{\gamma - R_\mathrm{C}(u_i)}{\gamma + R_\mathrm{S}(u_i)},
		\end{equation} leading to the contraction condition
		\begin{equation}
			\rho(T_\mathrm{D}) \leq \sup_{x\neq0} ~ \left|\frac{\gamma -
				R_\mathrm{C}(x)}{\gamma + R_\mathrm{S}(x)} \right| < 1,
		\end{equation} which leads to the sufficient condition
		\begin{align}
			\gamma &> \frac12\left(\theta_\mathrm{max} - \sigma_\mathrm{min}\right) =
			\frac12\left(\theta_\mathrm{max}
			-\frac{1}{\rho_\mathrm{f}\mu_\mathrm{max}}\right) \label{eq:gamma_req}
		\end{align} Using the min-max condition on the spectral radius gives rise to
		the condition
		\begin{align}
			\rho(T_\mathrm{D}) = \left|\frac{\gamma -R_\mathrm{C}(x)}{\gamma +
				R_\mathrm{S}(x)}\right|\leq
			\frac{\mathrm{max}\{|\theta_\mathrm{min}-\gamma|,|\theta_\mathrm{max} -
				\gamma|\}}{\gamma + \sigma_\mathrm{min}} \leq \frac{\theta_\mathrm{max} -
				\theta_\mathrm{min}}{(\theta_\mathrm{max} + \theta_\mathrm{min}) +
				2\sigma_\mathrm{min}}, \label{eq:ineqSpec}
		\end{align} wherein the inequality holds when $\gamma^\star =
		\frac12(\theta_\mathrm{max}+\theta_\mathrm{min})$.
	\end{proof}
	
	\begin{remark}[Regimes of stabilized scheme] \label{rem:regimes}
		The quantity $\theta_\mathrm{max}$ is governed by the structrural properties
, while $\sigma_\mathrm{min}$ is governed by the discretization of the domain and interface. The relative magnitude of the two quantities separates two regimes, where 
\begin{equation}
			\frac{\theta_\mathrm{max}}{\sigma_\mathrm{min}} = \frac{\rf
				\mu_\mathrm{max}}{\zeta(1+\xi \kt_\mathrm{min})} =
			\frac{1}{1+\xi\kt_\mathrm{min}}\rho(T),
		\end{equation} which for $b = 0$ reduces to $\rho(T) > 1$ in the added-mass
		regime. For $\rho(T) < 1$,
		\begin{equation}
			\frac{\theta_\mathrm{max}}{\sigma_\mathrm{min}} < 1 \implies \inf~\gamma < 0,
		\end{equation} which indicates $\gamma = 0$ suffices for high mass-ratio
		regimes, thereby recovering the unaugmented partitioned scheme.
	\end{remark}
	
	From \eqref{eq:ineqSpec}, the spectral radius is governed by
	$\sigma_\mathrm{min}$ in the denominator, which is the smallest eigenvalue
	corresponding to the operator $S_\Sigma$ projected onto the mass matrix at the
	interface. Given that $\sigma_\mathrm{min} =
	(\rho_\mathrm{f}\mu_\mathrm{max})^{-1}$, the spectral radius of $T_\mathrm{D}$
	remains bounded in the limiting case of $\mu_\mathrm{max} \to \infty$.
	Furthermore, combining the eigenvalues given by \eqref{eq:eigGEVPC}, the bound
	on the spectral radius of the iteration matrix given by \eqref{eq:ineqSpec}
leads to the following result:
	\begin{equation}
		\rho(T_\mathrm{D}(\theta,\sigma)) \leq
		\frac{\xi\tilde{k}_\mathrm{max}}{2\zeta\sigma_\mathrm{min}(1+\xi
			\tilde{k}_\mathrm{max}) + (2+\xi\tilde{k}_\mathrm{max})}, \label{eq:ineqspec}
	\end{equation} and 
	\begin{equation}
		\gamma^\star =
		\frac{1}{\zeta}\frac{\left(1+\frac{\xi\tilde{k}_\mathrm{max}}{2}\right)}{1+\xi\tilde{k}_\mathrm{max}}.\label{eq:gammastar}
	\end{equation} From inequality \eqref{eq:ineqspec}, $\sigma_\mathrm{min}$ only
	governs the convergence rate of the partitioned scheme and has no bearing on the
	parameter $\gamma^\star$ used in the augmentation. In the limiting case of
	$\sigma_\mathrm{min} \to 0$, the convergence rate is bounded by the ratio
	\begin{equation}
		\rho \leq \frac{\xi\tilde{k}_\mathrm{max}}{2+\xi\tilde{k}_\mathrm{max}},
	\end{equation} and is governed by the stiffest structural modes, and remains
	convergent for all possible values $(\rho_\mathrm{s}h_\mathrm{s}, a, b, \Delta
	t^2)$. 
	\begin{remark}[Choice of $\gamma = \zeta^{-1}$]
		For $\gamma = \zeta^{-1} \geq \theta_\mathrm{max}$ satisfies inequality
		\eqref{eq:gamma_req} unconditionally. Also, $R_\mathrm{C}(x) \leq
		\theta_\mathrm{max} \leq \gamma$ holds for all $x$, so the eigenvalues
		$\lambda_i$ of the iteration matrix $T_\mathrm{D}$ are positive, and therefore
		lead to a monotonic convergence of the augmented scheme. This suboptimal choice
		of $\gamma$ circumvents the need to estimate $\tilde{k}_\mathrm{max}$.
	\end{remark}

	\subsection{Numerical assessment}
	
	The convergence characteristics of the augmented partitioned scheme,
	\eqref{eq:accStruct} and \eqref{eq:itfPresaug}, are assessed through two sets of
	experiments. The first varies the stiffness of the structural $b$ and the density of the structural
 $\rho^\mathrm{s}$ with the length of the fluid domain set to $L = 6$, isolating
	the role of the structural parameters. The second varies the mesh resolution
	$\Delta_\mathrm{s}$ and $\rho^\mathrm{s}$ with the fluid domain length set to
	$L = 60$, which yields a large added-mass eigenvalue $\mu_\mathrm{max}$ and
	thereby separates the influence of discretization from that of the
	structure. The fluid domain is discretized uniformly using a grid size $\Delta_\mathrm{f} = 0.15$,
	with the matched resolution $\Delta_\mathrm{s} = \Delta_\mathrm{f}$ for the structure. A 
	time step of $\Delta t = 10^{-3}$ and a fluid density $\rho^\mathrm{f} = 1$ is used throughout the
	experiments conducted in this section. The fixed-point iterations are
	terminated when
	\begin{equation}
		\|p_\Sigma^{k+1} - p_\Sigma^k\|_2 < 10^{-8},
		\label{eq:convCriterion}
	\end{equation}
	with the maximum number of iterations capped at $2000$. The augmentation
	parameter is set to
	\begin{equation}
		\gamma = \zeta^{-1}
		\label{eq:gammaChoice}
	\end{equation}
	rather than the optimal value $\gamma^\star$. This choice is deliberate, as
	the optimal value minimizes the bound \eqref{eq:ineqSpec} on the spectral radius
	but requires an estimate of $\tilde{k}_\mathrm{max}$ that is not readily
	available in a practical implementation. In contrast, $\zeta^{-1}$ follows
	directly from the structural parameters and satisfies the convergence condition
	\eqref{eq:gamma_req} unconditionally, so the results below reflect the
	performance of the scheme as it would be implemented in practice rather than its
	best attainable rate. In each case, the spectral radius and iteration counts of
	the augmented scheme are compared with those of the 
Dirichlet-Neumann partitioning scheme of the baseline from Sect.~\ref{sec:partitioned}.
	
	Table~\ref{tab:kappaGhsrho} shows the results of the first numerical experiment
	obtained with $\Delta_\mathrm{s} = \Delta_\mathrm{f} = 0.15$ and the structural 
	thickness $t^\mathrm{s} = 0.1$. When $b = 0$, the structure is governed purely by its inertia and by Lemma~\ref{lem:eigsC},
	\begin{equation}
		\theta_i = \zeta^{-1},
	\end{equation}
	form the eigenmodes.  Consequently, the structural response operator 
	is then $C = \zeta^{-1}M_\Sigma$, which coincides with the augmented interfacial
	operator $\tilde{C} = \gamma M_\Sigma$ chosen here. The iteration matrix 
	subsequently vanishes and the augmented partitioned scheme converges in
	a single iteration, irrespective of the mass ratio $m^\star = \rho^\mathrm{s}/\rho^\mathrm{f}$, which is
	a measure of the induced added-mass effect. The baseline scheme, in contrast,
	exhibits a sharp threshold in the same range, failing once $m^\star < 39.6$.
	
	\begin{table}[pos=H]
		\centering
		\begin{tabular}{ c c c c c c }
			\toprule
			\multirow{2}{*}{$b$} & \multirow{2}{*}{$m^\star$} &
			\multicolumn{2}{c}{Baseline partitioning} & \multicolumn{2}{c}{Augmented
				partitioning} \\
			\cmidrule(lr){3-6}
			&  & spectral radius  & iterations & spectral radius & iterations \\
			\midrule
			\multirow{7}{*}{$0$} 
			& 45   & 0.877514 & 161.3  & 0 & 1 \\
			& 42   & 0.939823 & 339.7  & 0 & 1 \\
			& 39.6  & 0.996930 & 2000& 0 & 1 \\
			& 39.4  & 1.001451 & --   & 0 & 1 \\
			& 10   & 3.873904 & --   & 0 & 1 \\
			& 1.0   & 31.766012& --   & 0 & 1 \\
			& $10^{-3}$   & 158.103601&  --   & 0 & 1 \\
			\midrule
			\multirow{7}{*}{$2.5 \times 10^4$} 
			& 45   & 0.877481  & 161.3  & 0.00273 & 4 \\
			& 42   & 0.939785  & 339.5  & 0.000312 & 4 \\
			& 39.6  & 0.996382  & 2000 & 0.000349 & 4 \\
			& 39.2  & 1.006484  & --   & 0.000356 & 4 \\
			& 10   & 3.873256  & --   & 0.004425 & 5 \\
			& 1.0   & 31.722506 & --   & 0.128471 & 12 \\
			& $10^{-3}$ & 157.12410 & -- & 0.538197 & 35.7 \\
			\midrule
			\multirow{7}{*}{$2.5 \times 10^6$} 
			& 45   & 0.874202  & 160.7  & 0.010388 & 6 \\
			& 42   & 0.936025  & 326.7  & 0.011588 & 6 \\
			& 39.6  & 0.992156  & 2000 & 0.012707 & 6 \\
			& 39.2  & 1.002173  & --   & 0.012909 & 6 \\
			& 10   & 3.810177  & --   & 0.092134 & 10 \\
			& 1.0   & 27.934801 & --   & 0.590752 & 42.5 \\
			& $10^{-3}$ & 93.996658 & -- & 0.888134 & 175.9 \\
			\bottomrule
		\end{tabular}
		\caption{Iteration counts for $\Delta t = 10^{-3}$, $L=6$, $t^\mathrm{s} = 0.1$, varying $b=\kappa_T
			Gt^\mathrm{s}$ and the mass ratio $m^\star$. $-$ indicates the scheme
			failed to converge within the 2000 iterations. }
		\label{tab:kappaGhsrho}
	\end{table} 
	
	When the structural stiffness is not negligible, meaning that $\xi > 0$ and the
	eigenvalues $\theta_i$ are spread over a finite interval, the choice of $\gamma$ given by \eqref{eq:gammaChoice} can no longer replicate $C$ exactly. The residual spread
	is set by $\xi\tilde{k}_\mathrm{max}$, and through \eqref{eq:ineqSpec} governs the
	convergence rate. Table~\ref{tab:kappaGhsrho} indicates that as $b$ grows,
	the spectral radius of the augmented partitioned scheme remains smaller than 1, for
	even the smallest values of $m^\star$. This implies the augmented
	interfacial operator $\tilde{C}$ effectively mitigates the added-mass effect, at the 
	cost of a slowly degrading convergence rate as the dynamics begins to be governed by 
	the structural stiffness. This is in contrast to the baseline partitioned scheme, which
	has a rapidly diverging spectral radius when the added-mass effect becomes prominent. 
	
	The second numerical experiment explores the role of $\mu_\mathrm{max}$ in governing
	the spectral radius of the baseline and augmented partitioning schemes. $\mu_\mathrm{max}$
	is a purely geometric quantity which depends on the slenderness of the fluid domain 
	and the spatial resolution. In order to evaluate this effect, the domain is elongated such that
	$L = 60$, and three different mesh resolutions are considered to
	evaluate the variation in $\mu_\mathrm{max}$. From Table~\ref{tab:iteration3}, it is observed 
	that $\mu_\mathrm{max} \propto (\Delta_\mathrm{s})^{-1}$. For the baseline partitioned scheme,
	$\mu_\mathrm{max}$ enters the numerator of \eqref{eq:DNineq}, so an increasing
	$\mu_\mathrm{max}$ raises the spectral radius, which exceeds one across the
	range of structural densities considered, restricting convergence. In contrast, $\mu_\mathrm{max}$
	impacts the augmented partitiond scheme through $\sigma_\mathrm{min}$ in the denominator
	of \eqref{eq:ineqSpec}, where it governs the rate of convergence without affecting $\gamma^\star$. 
	Refinement therefore leaves the spectral radius for the augmented scheme largely unchanged,
	and in the limiting case of $\sigma_\mathrm{min} \approx 0$, the rate remains bounded by the 
	stiffest structural modes, as established following \eqref{eq:ineqspec}. Convergence of the augmented
	scheme is consequently only challenged once the mass ratio  $O(10^{-2})$ and below, 
	where $\xi$ is large.
	 
	\begin{table}[pos=H]
	\centering
	\begin{tabular}{ c c c c c c c }
		\toprule
		\multirow{2}{*}{$\Delta_\mathrm{s}$} & \multirow{2}{*}{$\mu_\mathrm{max}$} &
		\multirow{2}{*}{$m^\star$} &
		\multicolumn{2}{c}{Baseline partitioning} & \multicolumn{2}{c}{Augmented
			partitioning} \\
		\cmidrule(lr){4-7}
		&  & &
		spectral radius  & iterations & spectral radius & iterations \\
		\midrule
		\multirow{4}{*}{$0.6$} & \multirow{4}{*}{608.498}
		& $10^{3}$  & 3.649468   & --  & 0.000016 & 3 \\
		& & $10^{1}$  & 356.075705   & --   & 0.06604  & 9 \\
		& & $10^{-1}$ & 10378.859402  & --  & 0.825102 & 105 \\
		& & $10^{-3}$ & 14444.754646  & --  & 0.869572 & 142.3 \\
		\midrule
		\multirow{4}{*}{$0.3$} & \multirow{4}{*}{1216.959}
		& $10^{3}$  & 3.649809     & --   & 0.000026 & 3 \\
		& & $10^{1}$  & 356.109000    & --  & 0.079057 & 10 \\
		& & $10^{-1}$ & 10379.832907   & -- & 0.839383 & 118.7 \\
		& & $10^{-3}$ & 14446.635327  & --  & 0.880589 & 160.9 \\
		\midrule
		\multirow{4}{*}{$0.15$} & \multirow{4}{*}{2433.93}
		& $10^{3}$  & 3.649941     & --   & 0.000048 & 3 \\
		& & $10^{1}$  & 356.121883   & --   & 0.092134 & 10 \\
		& & $10^{-1}$ & 10380.209183  & --  & 0.849389 & 130.3 \\
		& & $10^{-3}$ & 14446.635327  & --  & 0.888243 & 176.8 \\
		\bottomrule
		\end{tabular}
		\caption{Spectral radius and augmented iteration count for varying mesh size
			$\Delta_\mathrm{s}$ and $\rho^\mathrm{s}t^\mathrm{s}$, at $\Delta t=10^{-3}$,
			$b=2.5\times10^5$,
			$L=60$ and $t^\mathrm{s} = 0.1$. }
		\label{tab:iteration3}
	\end{table} 
	
	Based on the performance of the augmented partitioned scheme applied to the
	linear FSI problem, a general partitioned FSI problem can be
	stabilized through the construction of an interface operator $\tilde{C}$
	following the principle established by Lemma~\ref{lem:eigsC}, wherein the
	dominant effect of the structural response is accounted for within the fluid
	solve.

    The analysis of the model problem establishes three results that carry forward to the general FSI system. 
    \begin{itemize}
        \item The added-mass instability of the Dirichlet-Neumann scheme arises because the fluid subproblem inverts the fluid interface pressure operator $S_\Sigma$ alone, while the structural response $C$ is lagged to the previous iterate \eqref{eq:itfpresPart}.
        \item Augmenting the interface pressure block with an operator $\tilde{C}$ that is balanced on the right-hand side \eqref{eq:augmented} leaves the fixed point unchanged, and enables convergence to the monolithic solution, while the lagged contribution to the iteration becomes $\tilde{C} - C$ \eqref{eq:itfPresaug}.
        \item the convergence of the augmented scheme is characterized through the spectral radius of the error iteration matrix, which for $\tilde{C} = \gamma M_\Sigma$ yields the explicit condition \eqref{eq:gamma_req} and the practical choice $\gamma = \zeta^{-1}$.
    \end{itemize}  In the model problem, the error iteration acts on the interface pressure alone \eqref{eq:fixedPiteration}, whereas in the the general linearized FSI system, the interface velocity is an additional unknown, and the errors in the interface velocity and pressure are coupled. The augmentation and the fixed-point analysis are retained in Sec.~\ref{sec:MWIS}, where they are used to establish the conditions governing the design of $\tilde{C}$ for the general system.
    
	
	\section{Momentum-weighted interface stabilization for general FSI systems}\label{sec:MWIS}
    
	In this section, the approach developed in Sec. \ref{sec:simplifiedFSI} is
	extended to a general nonlinear FSI problem partitioned across the
	fluid-structure interface $\Gamma^\mathrm{fs}$. The methodology is built around
	the discrete linear system obtained by linearizing the governing nonlinear
	fluid-structure equations, followed by consistent spatio-temporal
	discretization, wherein the structural equations are discretized in the
	Lagrangian frame, the fluid equations in an Arbitrary Lagrangian-Eulerian frame,
	and the resulting geometric nonlinearity is treated explicitly
	\citep{LIU2014687}. Further details regarding the linearization and
	discretization are presented in Appendix \ref{app1}. 
	
	The monolithic solution to
	the linearized FSI problem is considered, after which the standard
	Dirichlet-Neumann partitioned scheme is discussed and the source of added-mass
	instability is identified. The conditions necessary for the construction of  a
    consistent momentum-weighted linear operator acting at the fluid-structure
    interface are presented. Based on the analysis of Sec. \ref{sec:simplifiedFSI},
    the conditions are derived in the inviscid, pressure-driven limit. Subsequently, 
    the approximations for a practical low-memory footprint operator are presented. 

	\subsection{Monolithic solution to the linearized system}
	\label{sec:monolithic}

	Consider the general discrete, linearized monolithic FSI problem, which is expressed as
	\begin{equation}
		\mathcal{A}\delta\boldsymbol{u} =
		\left[\begin{array}{c c c | c}
			A_\mathrm{S}         & A_{\Sigma\mathrm{S}}^T & 0                    & 0 \\[2pt]
			A_{\Sigma\mathrm{S}} & A_\Sigma               & A_{\Sigma\mathrm{F}} & B_\Sigma^T \\[2pt]
			0                    & A_{\mathrm{F}\Sigma}   & A_\mathrm{F}         & B_\mathrm{F}^T \\[2pt]
			\hline
			\noalign{\vspace{2pt}}
			0                    & B_\Sigma             & B_\mathrm{F}      & 0
		\end{array}\right]
		\left[\begin{array}{c}
			\delta\boldsymbol{u}_\mathrm{S} \\[2pt]
			\delta\boldsymbol{u}_\Sigma \\[2pt]
			\delta\boldsymbol{u}_\mathrm{F} \\[2pt]
			\hline
			\noalign{\vspace{2pt}}
			\delta p
		\end{array}\right]
		=
		\left[\begin{array}{c}
			R_\mathrm{S} \\[2pt]
			R_{\mathrm{m,\Sigma}} \\[2pt]
			R_{\mathrm{m,F}} \\[2pt]
			\hline
			\noalign{\vspace{2pt}}
			R_\mathrm{c}
		\end{array}\right] = \mathcal{R},
		\label{eq:discBlockMonolithic}
	\end{equation}
	where $\delta\boldsymbol{u}_\mathrm{S}$, $\delta\boldsymbol{u}_\Sigma$ and
	$\delta\boldsymbol{u}_\mathrm{F}$ denote the increments of the velocity
	DOFs corresponding to the interior of the structure, the
	interface, and the interior of the fluid, respectively, while $\delta p$ is
	the increment of the pressure field within the fluid. The right-hand
	side vectors $R_\mathrm{S}$ and $R_{\mathrm{m,F}}$
	represent the residuals of the structural and fluid momentum equations
	respectively, $R_{\mathrm{m,\Sigma}}$ denotes the dynamic imbalance at the
	interface, and $R_\mathrm{c}$ denotes the residual of the continuity
	equation. The tangent matrices for the structure, interface, and fluid are
	denoted by $A_\mathrm{S}$, $A_\Sigma$, and $A_\mathrm{F}$, respectively. Since the 
	interface DOFs are shared between the fluid and the structure, the interface 
	block carries contributions from both the momentum equations, i.e.,
	\begin{equation}
		A_\Sigma = A_\Sigma^\mathrm{f} + A_\Sigma^\mathrm{s},
		\label{eq:itfMonolithic}
	\end{equation}
	where $A_\Sigma^\mathrm{f}$ and $A_\Sigma^\mathrm{s}$ denote the tangent momentum operators of the fluid
and structural interface, respectively.
	
	The off-diagonal blocks in \eqref{eq:discBlockMonolithic} represent the
	coupling between the velocity DOFs and the velocity-pressure coupling.
$A_{\Sigma\mathrm{S}}$ couples interfacial DOFs with
 interior structural DOFs. Similarly, the matrices $A_{\Sigma\mathrm{F}}$ and
	$B_\Sigma^T$ map the force exerted by the fluid onto the fluid-structure
	interface. The matrix $A_{\mathrm{F}\Sigma}$ accounts for the dynamic effects of the
	moving interface on the fluid motion. The discrete divergence operators $B_\Sigma$ and
	$B_\mathrm{F}$ act in conjunction to enforce the incompressibility constraint on the 
	interface and interior fluid velocities, respectively.
	
	The solution vector $\delta\boldsymbol{u}$ and the right-hand side residual vector 
	$\mathcal{R}$ are partitioned around the pressure DOFs, i.e.,
	\begin{equation}
		\delta\boldsymbol{u} = \begin{bmatrix}
			\delta{\boldsymbol{y}}\\[2pt]
			\delta p
		\end{bmatrix}
		= \left[\begin{array}{c}
			\delta\boldsymbol{u}_\mathrm{S}\\[2pt]
			\delta\boldsymbol{u}_\Sigma\\[2pt]
			\delta\boldsymbol{u}_\mathrm{F}\\[2pt]
			\hline
			\noalign{\vspace{2pt}}
			\delta p
		\end{array}\right], \qquad
		\mathcal{R} = \begin{bmatrix}
			R_\mathrm{m}\\[2pt]
			R_\mathrm{c}
		\end{bmatrix}
		= \left[\begin{array}{c}
			R_\mathrm{S}\\[2pt]
			R_{\mathrm{m,\Sigma}}\\[2pt]
			R_{\mathrm{m,F}}\\[2pt]
			\hline
			\noalign{\vspace{2pt}}
			R_\mathrm{c}
		\end{array}\right],
		\label{eq:partitionVectors}
	\end{equation}
	where $\delta{\boldsymbol{y}}$ collects the velocity increments of all
	three node sets and ${{R}}_\mathrm{m}$ is the corresponding collection of
	momentum residuals. The discrete system \eqref{eq:discBlockMonolithic} then 
	takes the form
	\begin{equation}
		\mathcal{A}\delta\boldsymbol{u} = \begin{bmatrix}
			\mathcal{A}_\mathrm{m} & \mathcal{B}^T\\[2pt]
			\mathcal{B} & 0
		\end{bmatrix}\begin{bmatrix}
			\delta{\boldsymbol{y}}\\[2pt]
			\delta p
		\end{bmatrix} = \begin{bmatrix}
			R_\mathrm{m}\\[2pt]
			R_\mathrm{c}
		\end{bmatrix},
		\label{eq:discBlock2}
	\end{equation}
	where $\mathcal{A}_\mathrm{m}$ is the monolithic momentum block and 
	$\mathcal{B} = [0 \; B_\Sigma \; B_\mathrm{F}]$.
	
	The block partitioning in \eqref{eq:discBlock2} lends itself naturally to
	pressure projection schemes, in which the pressure is solved first and the
	velocities are subsequently recovered. Eliminating the velocity increments 
	through a block $LU$ decomposition, as in Sec.~\ref{sec:simplifiedFSI}, yields the
	solution
	\begin{subequations}
		\begin{align}
			\delta p &= -\mathcal{S}^{-1}\left(\mathcal{B}\mathcal{A}_\mathrm{m}^{-1}{{R}}_\mathrm{m} - R_\mathrm{c}\right),\label{eq:solutionMonolithic_u}\\
			\delta{\boldsymbol{y}} &= \mathcal{A}_\mathrm{m}^{-1}\left({{R}}_\mathrm{m}- \mathcal{B}^T\delta p\right),\label{eq:solutionMonolithic_p}
		\end{align}\label{eq:solutionMonolithic}
	\end{subequations}
	where
	\begin{equation}
		\mathcal{S} = -\mathcal{B}\mathcal{A}_\mathrm{m}^{-1}\mathcal{B}^T
		\label{eq:pressureSchur}
	\end{equation}
	is the pressure Schur complement.
	
	The monolithic solution \eqref{eq:solutionMonolithic_u} and \eqref{eq:solutionMonolithic_p} requires the inverse of
	the velocity block $\mathcal{A}_\mathrm{m}$, whose structure is exposed through the 
	$LDU$ factorization
	\begin{equation}
		\mathcal{A}_\mathrm{m} = \begin{bmatrix}
				I & 0 & 0\\
				A_{\Sigma\mathrm{S}}A_\mathrm{S}^{-1} & I & 0\\
				0 & A_{\mathrm{F}\Sigma}S_1^{-1} & I
		\end{bmatrix}\begin{bmatrix}
				A_\mathrm{S} & 0 & 0\\
				0 & S_1 & 0\\
				0 & 0 & S_2
		\end{bmatrix}\begin{bmatrix}
				I & A_\mathrm{S}^{-1}A_{\Sigma\mathrm{S}}^T & 0\\
				0 & I & S_1^{-1}A_{\Sigma\mathrm{F}}\\
				0 & 0 & I
		\end{bmatrix},
		\label{eq:LDU}
	\end{equation}
	where
	\begin{subequations}
		\begin{align}
			S_1 &= A_\Sigma - A_{\Sigma\mathrm{S}}A_\mathrm{S}^{-1}A_{\Sigma\mathrm{S}}^T,\label{eq:schurmonolithic_1}\\
			S_2 &= A_\mathrm{F} - A_{\mathrm{F}\Sigma}S_1^{-1}A_{\Sigma\mathrm{F}}\label{eq:schurmonolithic_2},
		\end{align}
		\label{eq:schurmonolithic}
	\end{subequations}
	are Schur complements which capture the structural interaction with the 
	interface, and the interaction of the interface with the fluid interior, 
	respectively. 
	
	Static condensation of the velocity increments out of \eqref{eq:discBlock2},
	with $\mathcal{A}_\mathrm{m}^{-1}$ expressed through the factorization
	\eqref{eq:LDU}, gives the pressure Schur complement $\mathcal{S}$ in terms of
	the individual blocks (for details, refer to Appendix~\ref{app2}). In the inviscid,
	pressure-driven limit, the  couplings $A_{\Sigma\mathrm{F}}$ and
	$A_{\mathrm{F}\Sigma}$ vanish, reducing $S_2$ to $A_\mathrm{F}$ in
	\eqref{eq:schurmonolithic}. The velocity block $\mathcal{A}_\mathrm{m}$ therefore
	decouples the interface and interior fluid velocities, which makes the structure
	of the added-mass coupling apparent in the pressure Schur complement,
	\begin{equation}
		\mathcal{S} = -\left(\mathcal{B}_\Sigma S_1^{-1}\mathcal{B}_\Sigma^T
		+ \mathcal{B}_\mathrm{F} A_\mathrm{F}^{-1}\mathcal{B}_\mathrm{F}^T\right).
		\label{eq:Schurinviscid}
	\end{equation}
	which leads to the monolithic pressure field increment $\delta p$ to
	satisfy
	\begin{equation}
		\left(B_\Sigma S_1^{-1}B_\Sigma^T +
		B_\mathrm{F}A_\mathrm{F}^{-1}B_\mathrm{F}^T\right)\delta p =
		-B_\Sigma S_1^{-1}A_{\Sigma S}A_\mathrm{S}^{-1}{R}_\mathrm{S} +
		B_\Sigma S_1^{-1}{R}_\mathrm{m,\Sigma} +
		B_\mathrm{F}A_\mathrm{F}^{-1}{R}_\mathrm{m,F}. 
		\label{eq:monolithicPressure}
	\end{equation}
    The left-hand side operator combines the pressure Schur complement
	$B_\mathrm{F}A_\mathrm{F}^{-1}B_\mathrm{F}^T$ corresponding to
	the bulk fluid interior with the interfacial contribution $B_\Sigma
	S_1^{-1}B_\Sigma^T$, where the structural inertia contributes
	to the pressure through $S_1$, see \eqref{eq:itfMonolithic} and
	\eqref{eq:schurmonolithic}. This construction is the general 
	counterpart to the inversion of $S_\Sigma + C$ presented in
	Sec.~\ref{sec:simplifiedFSI}. As highlighted in Sec.~\ref{sec:Introduction},
	while a monolithic formulation can be used to construct robust solutions to FSI
	problems \citep{LIU2014687}, incorporating existing fluid 
	and structural solvers remains challenging. In the next section, the
	augmented Dirichlet-Neumann partitioned scheme is developed for the general
	construction presented above, which recovers robust features offered by
	monolithic solution methods.

	\subsection{Partitioned solve and the interface pressure Schur complement}
	\label{sec:pressureSchur}
	
	The monolithic system \eqref{eq:discBlockMonolithic} is now partitioned across
	the fluid-structure interface, $\Gamma^\mathrm{fs}$, with the fluid and the structure being solved in
	a staggered manner. The index $k$ represents the coupling iterations performed
	within each time step. The structure is solved first, using the fluid state from
	the $k^\mathrm{th}$ iterate, i.e.,
	\begin{equation}
		\begin{bmatrix}
			A_\mathrm{S} & A_{\Sigma\mathrm{S}}^T\\[2pt]
			A_{\Sigma\mathrm{S}} & A_\Sigma^\mathrm{s}
		\end{bmatrix} \begin{bmatrix}
			\delta\boldsymbol{u}_\mathrm{S}^{k+1}\\[2pt]
			\delta\boldsymbol{u}_\Sigma^{k+1}
		\end{bmatrix} = \begin{bmatrix}
			{R}_\mathrm{S}\\[2pt]
			{R}_{\mathrm{m,\Sigma}} - A_{\Sigma\mathrm{F}}\delta\boldsymbol{u}_\mathrm{F}^k -
			B_\Sigma^T \delta p^k - A_\Sigma^\mathrm{f}\delta\boldsymbol{u}_\Sigma^k
		\end{bmatrix},
		\label{eq:solidSolve}
	\end{equation}
	from which the interface velocity increment follows as
	\begin{equation}
		\delta\boldsymbol{u}_\Sigma^{k+1} =
		-S_\mathrm{S}^{-1}A_{\Sigma\mathrm{S}}A_\mathrm{S}^{-1}{R}_\mathrm{S} +
		S_\mathrm{S}^{-1}\left({R}_{\mathrm{m,\Sigma}}  -
		A_{\Sigma\mathrm{F}}\delta\boldsymbol{u}_\mathrm{F}^k - B_\Sigma^T \delta p^k -
		A_\Sigma^\mathrm{f}\delta\boldsymbol{u}_\Sigma^k\right), 
		\label{eq:itfVel}
	\end{equation}
	where
	\begin{equation}
		S_\mathrm{S} = A_\Sigma^\mathrm{s} - A_{\Sigma\mathrm{S}}A_\mathrm{S}^{-1}A_{\Sigma\mathrm{S}}^T
		\label{eq:structSchur}
	\end{equation}
	is the structural interface Schur complement.
	
	The interface velocity increment \eqref{eq:itfVel} is then imposed on the fluid as a
	Dirichlet boundary condition, yielding
	\begin{equation}
		\begin{bmatrix}
			A_\mathrm{F} & B_\mathrm{F}^T\\[2pt]
			B_\mathrm{F} & 0
		\end{bmatrix}\begin{bmatrix}
			\delta\boldsymbol{u}_\mathrm{F}^{k+1}\\[2pt] \delta p^{k+1}
		\end{bmatrix} = \begin{bmatrix}
			{R}_{\mathrm{m,F}} - A_{\mathrm{F}\Sigma}\delta\boldsymbol{u}_\Sigma^{k+1}\\[2pt]
			{R}_\mathrm{c} - B_\Sigma\delta\boldsymbol{u}_\Sigma^{k+1}
		\end{bmatrix}. 
		\label{eq:fluidSolve}
	\end{equation}
	
	Eliminating $\delta\boldsymbol{u}_\mathrm{F}^{k+1}$ and substituting
	\eqref{eq:itfVel}, the pressure update in the inviscid limit
	$A_{\mathrm{F}\Sigma} = A_{\Sigma\mathrm{F}} = 0$ is
	\begin{equation}
		\begin{split}	
			B_\mathrm{F}A_\mathrm{F}^{-1}B_\mathrm{F}^T\delta p^{k+1} =
			& ~ -B_\Sigma S_\mathrm{S}^{-1}B_\Sigma^T\delta p^k -
			B_\Sigma S_\mathrm{S}^{-1}A_\Sigma^\mathrm{f}\delta\boldsymbol{u}_\Sigma^k \\
			&~+ B_\mathrm{F}A_\mathrm{F}^{-1}{R}_{\mathrm{m,F}} -
			B_\Sigma S_\mathrm{S}^{-1}A_{\Sigma\mathrm{S}}A_\mathrm{S}^{-1}{R}_\mathrm{S} +
			B_\Sigma S_\mathrm{S}^{-1}{R}_{\mathrm{m,\Sigma}}  - {R}_\mathrm{c},
		\end{split}
		\label{eq:ppeUnaug}
	\end{equation} 
	with the intermediate steps presented in Appendix \ref{app2}. 
	At the fixed point, the operator assembled by
    \eqref{eq:ppeUnaug} is
	\begin{equation}
		B_\Sigma S_\mathrm{S}^{-1}B_\Sigma^T +
		B_\mathrm{F}A_\mathrm{F}^{-1}B_\mathrm{F}^T,
	\end{equation}
	which differs from the monolithic Schur complement \eqref{eq:Schurinviscid} only 
	in $S_1$ being replaced by $S_\mathrm{S}$. From equations
	\eqref{eq:itfMonolithic}, \eqref{eq:schurmonolithic_1} and
	\eqref{eq:structSchur},\begin{equation}
		S_1 = S_\mathrm{S} + A_\Sigma^\mathrm{f}.
		\label{eq:S1split}
	\end{equation} 
	
	The implication here is the fluid interfacial contribution
	$A_\Sigma^\mathrm{f}$ carried by the monolithic system is omitted within the
	partitioned scheme, and therefore, the iteration driven by $B_\Sigma
	S_\mathrm{S}^{-1}B_\Sigma^T$ in \eqref{eq:ppeUnaug} is the origin of the
	added-mass instability and is conditionally convergent \citep{BADIA20087027, wall2007strong}.
	
	Following the principles outlined in Sec.~\ref{sec:aug_partitioned}, the
	incompressibility constraint is augmented by an interfacial pressure operator
	$\tilde{C}$, and the fluid subproblem \eqref{eq:fluidSolve} becomes
	\begin{equation}
		\begin{bmatrix}
			A_\mathrm{F} & B_\mathrm{F}^T\\[2pt]
			B_\mathrm{F} & -\tilde{C}
		\end{bmatrix}\begin{bmatrix}
			\delta\boldsymbol{u}_\mathrm{F}^{k+1}\\[2pt] \delta p^{k+1}
		\end{bmatrix} = \begin{bmatrix}
			{R}_{\mathrm{m,F}} - A_{\mathrm{F}\Sigma}\delta\boldsymbol{u}_\Sigma^{k+1}\\[2pt]
			{R}_\mathrm{c} - B_\Sigma\delta\boldsymbol{u}_\Sigma^{k+1} - \tilde{C}\delta p^k
		\end{bmatrix},
		\label{eq:fluidSolveAug}
	\end{equation}
	leading to the augmented pressure update
	\begin{equation}
		\begin{split}
			\left(\tilde{C} + B_\mathrm{F}A_\mathrm{F}^{-1}B_\mathrm{F}^T\right)\delta p^{k+1} =
			& ~ \left(\tilde{C} - B_\Sigma S_\mathrm{S}^{-1}B_\Sigma^T\right)\delta p^k -
			B_\Sigma S_\mathrm{S}^{-1}A_\Sigma^\mathrm{f}\delta\boldsymbol{u}_\Sigma^k \\
			&+ B_\mathrm{F}A_\mathrm{F}^{-1}{R}_{\mathrm{m,F}} -
			B_\Sigma S_\mathrm{S}^{-1}A_{\Sigma\mathrm{S}}A_\mathrm{S}^{-1}{R}_\mathrm{S} +
			B_\Sigma S_\mathrm{S}^{-1}{R}_{\mathrm{m,\Sigma}}  - {R}_\mathrm{c}.
		\end{split}
		\label{eq:ppeFSI}
	\end{equation} 
	
	By construction, the effect of the operator $\tilde{C}$ cancels at convergence,
	ensuring the augmentation aids only the convergence characteristics of the
	iterations, and leaves the solution unchanged.

	\subsection{Choice of the interface operator}\label{sec:itfoperator}

	In Sec.~\ref{sec:simplifiedFSI}, within the inertia-dominant regime, the structural 
	response at the interface was observed to be 
	proportional to the mass matrix, reducing the augmentation to a single scaling
	parameter $\gamma$. Such simplifications are generally not available for 
    nonlinear FSI problems. For a robust and cost-effective parititioned method,
    the design of a practically admissible $\tilde{C}$ is guided by two requirements,
    namely that it acts on the pressure DOFs located at the fluid-structure interface
    and maintains a small computational footprint relative to the overall size of the
    problem. To drive the convergence of the partitioned iterates to the monolithic solution, the interface operator is developed under these constraints.

	The errors in the pressure field and interface velocity with respect to the
	monolithic solution given by \eqref{eq:solutionMonolithic} are defined as
	\begin{subequations}
		\begin{align}
			e_p^k &= \delta p^k - \delta p,\label{eq:errorDef1}\\
			\boldsymbol{e}_{u_\Sigma}^k &= \delta\boldsymbol{u}_\Sigma^k -
			\delta\boldsymbol{u}_{\Sigma}\label{eq:errorDef2}.
		\end{align}
	\end{subequations} 
	respectively, and their evolution over the coupling iterations is given by the
	following result.
	
	\begin{prop}
		The errors in the pressure field and the interface velocity of the iterates
		\eqref{eq:itfVel} and \eqref{eq:ppeFSI}, measured relative to the monolithic 	solution, evolve as
		\begin{equation}
			\begin{bmatrix}
				S_\mathrm{S} & 0\\[2pt]
				0 & \tilde{C} + B_\mathrm{F}A_\mathrm{F}^{-1}B_\mathrm{F}^T
			\end{bmatrix}\begin{bmatrix}
				\boldsymbol{e}_{u_\Sigma}\\[2pt] e_p
			\end{bmatrix}^{k+1} = \begin{bmatrix}
				-A_\Sigma^\mathrm{f} & -B_\Sigma^T\\[2pt]
				-B_\Sigma S_\mathrm{S}^{-1}A_\Sigma^\mathrm{f} & \tilde{C} -
				B_\Sigma S_\mathrm{S}^{-1}B_\Sigma^T
			\end{bmatrix}\begin{bmatrix}
				\boldsymbol{e}_{u_\Sigma}\\[2pt] e_p
			\end{bmatrix}^{k}.
			\label{eq:errBlock}
		\end{equation}
		\label{prop:contraction}
	\end{prop}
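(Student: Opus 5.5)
The approach is to subtract the fixed-point relations satisfied by the monolithic solution from the iterate relations \eqref{eq:itfVel} and \eqref{eq:ppeFSI}. All residual terms ($R_\mathrm{S}$, $R_{\mathrm{m},\Sigma}$, $R_{\mathrm{m,F}}$, $R_\mathrm{c}$) then cancel, and a homogeneous linear recursion in the errors \eqref{eq:errorDef1}--\eqref{eq:errorDef2} remains. Throughout we work in the inviscid limit $A_{\Sigma\mathrm{F}} = A_{\mathrm{F}\Sigma} = 0$, as in the derivation of \eqref{eq:ppeUnaug}.

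\textbf{Step 1: the monolithic solution is a fixed point.} We check that $(\delta\boldsymbol{u}_\Sigma, \delta p)$ from \eqref{eq:solutionMonolithic} satisfies both iterate relations with the superscripts $k$ and $k+1$ dropped.

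For \eqref{eq:itfVel}, we read off the first two block rows of \eqref{eq:discBlockMonolithic}. We eliminate $\delta\boldsymbol{u}_\mathrm{S}$ and split $A_\Sigma = A_\Sigma^\mathrm{f} + A_\Sigma^\mathrm{s}$ via \eqref{eq:itfMonolithic}. Moving $A_\Sigma^\mathrm{f}\delta\boldsymbol{u}_\Sigma$ to the right-hand side gives
\begin{equation*}
S_\mathrm{S}\,\delta\boldsymbol{u}_\Sigma = -A_{\Sigma\mathrm{S}}A_\mathrm{S}^{-1}R_\mathrm{S} + R_{\mathrm{m},\Sigma} - B_\Sigma^T\delta p - A_\Sigma^\mathrm{f}\delta\boldsymbol{u}_\Sigma,
\end{equation*}
which is exactly \eqref{eq:itfVel} at the fixed point.

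For \eqref{eq:ppeFSI}, we use the consistency property already noted after \eqref{eq:fluidSolveAug}. The monolithic fluid rows coincide with \eqref{eq:fluidSolve}, and the $\tilde{C}$ terms cancel when $\delta p^{k+1} = \delta p^k$. Hence the monolithic pair also satisfies the condensed equation \eqref{eq:ppeFSI}, which was obtained from \eqref{eq:fluidSolveAug} by the same elimination steps as in Appendix~\ref{app2}.

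\textbf{Step 2: subtract.} Multiplying \eqref{eq:itfVel} by $S_\mathrm{S}$ and subtracting the fixed-point identity gives the first block row:
\begin{equation*}
S_\mathrm{S}\boldsymbol{e}_{u_\Sigma}^{k+1} = -A_\Sigma^\mathrm{f}\boldsymbol{e}_{u_\Sigma}^k - B_\Sigma^T e_p^k.
\end{equation*}
Subtracting the fixed-point version of \eqref{eq:ppeFSI} from \eqref{eq:ppeFSI} cancels every residual term. What remains is
\begin{equation*}
(\tilde{C} + B_\mathrm{F}A_\mathrm{F}^{-1}B_\mathrm{F}^T)e_p^{k+1} = (\tilde{C} - B_\Sigma S_\mathrm{S}^{-1}B_\Sigma^T)e_p^k - B_\Sigma S_\mathrm{S}^{-1}A_\Sigma^\mathrm{f}\boldsymbol{e}_{u_\Sigma}^k,
\end{equation*}
which is the second block row. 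Assembling the two rows gives \eqref{eq:errBlock}.

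\textbf{Main obstacle.} Step 1 for the pressure equation is the delicate part. We must make sure that \eqref{eq:ppeFSI} was derived by substituting \eqref{eq:itfVel} for $\delta\boldsymbol{u}_\Sigma^{k+1}$, so that the lagged quantities are $(\delta p^k, \delta\boldsymbol{u}_\Sigma^k)$, and that this substitution is also valid for the monolithic pair. It is valid, because the monolithic interface velocity satisfies \eqref{eq:itfVel} by Step 1.

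A second point to check is that the monolithic $\delta p$ solves \eqref{eq:ppeFSI} with the same signs as \eqref{eq:monolithicPressure}. To confirm this, rewrite \eqref{eq:monolithicPressure} using $S_1 = S_\mathrm{S} + A_\Sigma^\mathrm{f}$ from \eqref{eq:S1split}. Under this identity, the terms $B_\Sigma S_\mathrm{S}^{-1}A_\Sigma^\mathrm{f}\delta\boldsymbol{u}_\Sigma$ and $B_\Sigma S_\mathrm{S}^{-1}B_\Sigma^T\delta p$ recombine into the $S_1^{-1}$ form. This cross-check is worth carrying out, although the cancellation argument above does not depend on it.
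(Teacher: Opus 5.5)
Your argument is correct. For the velocity row it matches the paper, which also subtracts the monolithic interface rows from \eqref{eq:itfVel} with $A_{\Sigma\mathrm{F}}=0$; the two proofs differ only in the pressure row. In the paper's proof, the pressure row is handled without returning to the uncondensed block system. It starts from the condensed monolithic pressure equation \eqref{eq:monolithicPressure} and splits $S_1^{-1}$ with the Woodbury identity into $S_\mathrm{S}^{-1}$ minus an added-mass correction. It then eliminates the residual vector shared with \eqref{eq:ppeFSI} and recognizes the leftover residual combination as $S_1\delta\boldsymbol{u}_{\Sigma,\mathrm{m}}$. Finally, a push-through identity collapses the correction times $S_1$ into $S_\mathrm{S}^{-1}A_\Sigma^\mathrm{f}$, which produces the $-B_\Sigma S_\mathrm{S}^{-1}A_\Sigma^\mathrm{f}\boldsymbol{e}_{u_\Sigma}^k$ coupling. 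In effect, the paper's proof is the cross-check you set aside as optional. You instead note that the full monolithic state satisfies every row of \eqref{eq:solidSolve} and \eqref{eq:fluidSolveAug} once $k$ and $k+1$ are identified, with the two $\tilde{C}$ terms cancelling. Since \eqref{eq:itfVel} and \eqref{eq:ppeFSI} are exact condensations of those rows, the monolithic pair is a fixed point of the affine map, and subtraction leaves the homogeneous recursion. Your route is shorter and needs no matrix identities; it shows the error equation is nothing more than consistency of the augmented splitting. The paper's route stays entirely at the level of condensed pressure equations. Its advantage is that it makes explicit how the monolithic and partitioned interface Schur complements differ through $S_1 = S_\mathrm{S} + A_\Sigma^\mathrm{f}$, which is the discrepancy that motivates the design of $\tilde{C}$. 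One caution if you do carry out the cross-check: the right-hand side of \eqref{eq:monolithicPressure} as printed omits the $-R_\mathrm{c}$ term that the condensation of \eqref{eq:discBlock2} produces (compare \eqref{eq:solutionMonolithic_u}). A mismatch in that term therefore reflects a typo in the paper, not a flaw in your argument.
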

	\begin{proof}
		See Appendix \ref{app3}.
	\end{proof}
	
	The iterations \eqref{eq:errBlock} converge to the monolithic solution if and
	only if the spectral radius of the corresponding iteration matrix is less than
	unity. Denoting by $(\boldsymbol{v}, q)$ the eigenvector associated with the
	eigenvalue $\lambda$, the generalized eigenvalue problem takes the form
	\begin{equation}
		\begin{bmatrix}
			-A_\Sigma^\mathrm{f} & -B_\Sigma^T\\[2pt]
			-B_\Sigma S_\mathrm{S}^{-1}A_\Sigma^\mathrm{f} &
			\tilde{C} - B_\Sigma S_\mathrm{S}^{-1}B_\Sigma^T
		\end{bmatrix}
		\begin{bmatrix}
			\boldsymbol{v}\vphantom{B_\Sigma^T}\\[2pt]
			q\vphantom{S_\mathrm{S}^{-1}}
		\end{bmatrix}
		= \lambda
		\begin{bmatrix}
			S_\mathrm{S}\vphantom{A_\Sigma^\mathrm{f}} & 0\vphantom{B_\Sigma^T}\\[2pt]
			0\vphantom{S_\mathrm{S}^{-1}} &
			\tilde{C} + B_\mathrm{F}A_\mathrm{F}^{-1}B_\mathrm{F}^T
		\end{bmatrix}
		\begin{bmatrix}
			\boldsymbol{v}\vphantom{B_\Sigma^T}\\[2pt]
			q\vphantom{S_\mathrm{S}^{-1}}
		\end{bmatrix}.
		\label{eq:errEig}
	\end{equation}

	Consider first the mode for which the pressure error vanishes, i.e., $q = 0$.
	This reduces \eqref{eq:errEig} to
	\begin{equation}
		-A_\Sigma^\mathrm{f}\boldsymbol{v} = \lambda S_\mathrm{S} \boldsymbol{v},
		\label{eq:errVelocity}
	\end{equation}
	under the constraint
	$S_\mathrm{S}^{-1}A_\Sigma^\mathrm{f}\boldsymbol{v}\in \ker(B_\Sigma)$. The
	interface velocity error must therefore lie in the null space of the divergence
	constraint restricted to the interface, and the rate of convergence of such
	modes is governed by the spectral radius of
	$S_\mathrm{S}^{-1}A_\Sigma^\mathrm{f}$. The operator $\tilde{C}$ drops out of
	\eqref{eq:errVelocity} entirely and is therefore ineffective in driving their
	convergence.
	
	Alternately, in the case where $q\neq 0$, the elimination of $\boldsymbol{v}$
	leads to the nonlinear GEVP,
	\begin{equation}
		\tilde{C}q = \frac{\lambda}{1-\lambda}\left(B_\mathrm{F}A_\mathrm{F}^{-1}B_\mathrm{F}^T +
		B_\Sigma\left(A_\Sigma^\mathrm{f}+\lambda
		S_\mathrm{S}\right)^{-1}B_\Sigma^T\right)q. \label{eq:gevpLambda}
	\end{equation}

    {The structure of \eqref{eq:gevpLambda} implies two components govern the design of the operator $\tilde{C}$, i.e., the pressure Schur complement corresponding to the interior of the fluid, $B_\mathrm{F}A_\mathrm{F}^{-1}B_\mathrm{F}^T$, and the interfacial contribution $B_\Sigma(A_\Sigma^\mathrm{f} + \lambda S_\mathrm{S})^{-1}B_\Sigma^T$, weighted by the combined fluid and structural momentum operators at the interface. Since the operator $\tilde{C}$ acts only on the interfacial pressure DOFs, the right-hand side condensed onto these DOFs governs the nonzero eigenvalues $\lambda$. Denoting the interfacial pressure mode by $q_\Sigma$, and the condensation of the operator $B_\mathrm{F}A_\mathrm{F}^{-1}B_\mathrm{F}^T$ onto the interfacial pressure DOFs by $\hat{S}_\mathrm{F}$, the eigenvalue problem \eqref{eq:gevpLambda} reduces to 
    \begin{equation}
        \tilde{C}_\Sigma q_\Sigma = \frac{\lambda}{1-\lambda}\left(\hat{S}_{F} +B_{\Sigma\Sigma}\left(A_\Sigma^\mathrm{f} + \lambda S_\mathrm{S}\right)^{-1}B_{\Sigma\Sigma}^T\right)q_\Sigma, \label{eq:tildeItf}
    \end{equation} 
    where $B_{\Sigma\Sigma}$ and $\tilde{C}_\Sigma$ are the divergence and augmentation operators restricted to the interfacial pressure DOFs. Physically, the operator on the RHS of \eqref{eq:tildeItf} measures the divergence induced in the fluid adjacent to the interface due to an interfacial pressure mode. 

     }
	
	Defining $\mu = \lambda/(1-\lambda)$, and applying the Woodbury
	matrix identity on $(A_\Sigma^\mathrm{f} + \lambda S_\mathrm{S})^{-1}$, the interfacial
	GEVP \eqref{eq:tildeItf} can be rewritten as,
    \begin{equation}
		\tilde{C}_\Sigma q_\Sigma = \mu\left(S_\mathrm{FP} - S_\Sigma^\star(\mu)\right)q_\Sigma,
		\label{eq:nlevp}
	\end{equation}
	where
    \begin{subequations}
		\begin{align}
			S_\mathrm{FP} &= B_{\Sigma\Sigma} {A_\Sigma^\mathrm{f}}^{\,-1}B_{\Sigma\Sigma}^T + \hat{S}_\mathrm{F},
            \label{eq:Sfp}\\
			S_\Sigma^\star(\mu) &= B_{\Sigma\Sigma}\left(A_\Sigma^\mathrm{f} +
			\left(1+\frac{1}{\mu}\right)A_\Sigma^\mathrm{f}S_\mathrm{S}^{-1}A_\Sigma^\mathrm{f}\right)^{-1}B_{\Sigma\Sigma}^T.
			\label{eq:Sstar}
		\end{align}
	\end{subequations}

    {The operator $S_\mathrm{FP}$ measures the divergence induced in the fluid adjacent to the interface by an interfacial pressure mode when the interface moves under the fluid momentum alone, through $\hat{S}_\mathrm{F}$ from the fluid interior and through $B_{\Sigma\Sigma}{A_\Sigma^\mathrm{f}}^{-1}B_{\Sigma\Sigma}^T$ from the interface motion. $S_\Sigma^\star$ modifies the divergence induced by the interface motion to account for the lagged structural response, and the combination of these effects govern the design of the operator $\tilde{C}_\Sigma$ (and therein $\tilde{C}$).} Furthermore, the definition of $\mu$ establishes the necessary
	condition for convergence 
	\begin{equation}
		\mathscr{R}(\mu) > -\frac12, \label{eq:muCondition}
	\end{equation} 
	where $\mathscr{R}(\cdot)$ denotes the real part of a complex number.  
    
    { Within \eqref{eq:Sstar}, the structural response enters through the ratio $A_\Sigma^\mathrm{f}S_\mathrm{S}^{-1}$ of the fluid and the structural momentum operators at the interface, scaled by the factor $(1+1/\mu)$. The condition \eqref{eq:muCondition} is equivalent to $|1+1/\mu|>1$, i.e., the ratio is amplified for every convergent mode. Eigenvalues satisfying
    \begin{equation}
        \left|\mu+\frac12\right| < \frac12 \label{eq:osccond}
    \end{equation}
    additionally satisfy $\mathscr{R}(1+1/\mu)<0$, with the implication of an amplified ratio entering $S_\Sigma^\star$ with a negative sign, which numerically translates to rendering the interface motion more compliant to pressure modes, provided the fluid and structural inertia are comparable. Given the pressure error mode $q$, the penalty on the divergence-free constraint induced by the operator $S_\mathrm{FP} - S_\Sigma^\star$ exceeds the penalty arising from the operator $S_\mathrm{FP}$. The reinforcement through the inverted sign of $S_\Sigma^\star$ implies the lag in the structural response opposes the fluid motion, resulting in pressure corrections overshooting to accommodate the additional divergence penalty. Furthermore, since $\mathscr{R}(1+1/\mu) < 0$, the convergence is non-monotone. 

    \begin{figure}
        \centering
        \includegraphics[width=0.7\linewidth]{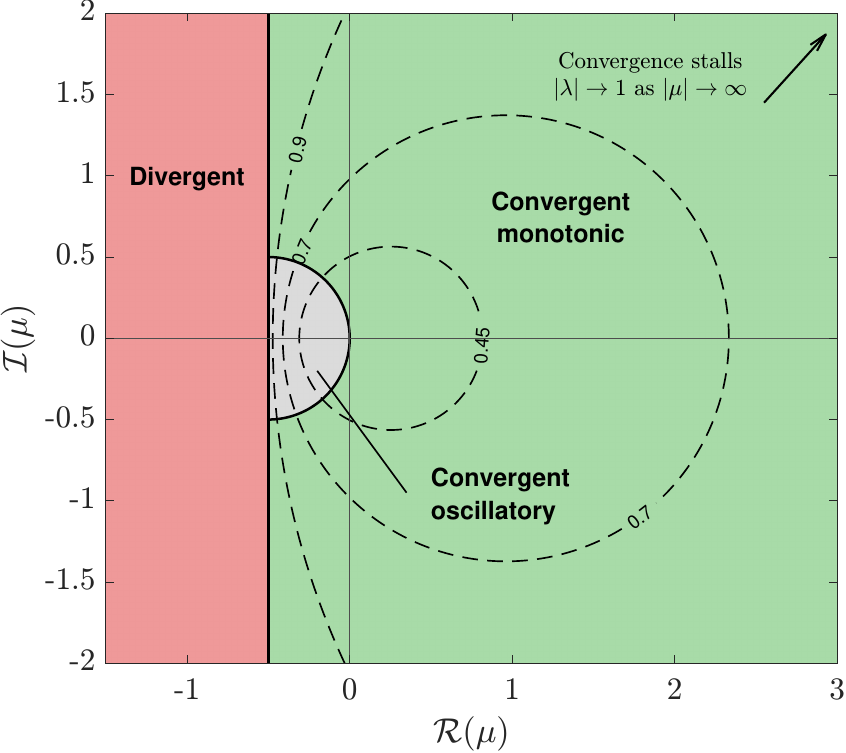}
        \caption{Delineation of regions based on $\mu$, where the augmentation operator can aid convergence. The dashed lines indicate $|\lambda|$, which is a measure of the convergence rate of the augmented iterations. }
        \label{fig:muregions}
    \end{figure}
    
    For $\mathcal{R}(\mu) < -\frac12$, $|1+1/\mu|<1$, and the structural ratio is both attenuated and reversed. In the limit $\mu\to-1$, $S_\Sigma^\star$ cancels the interfacial contribution to $S_\mathrm{FP}$, and the fluid responds to the pressure mode as if the interface were prescribed, recovering the added-mass instability of the Dirichlet-Neumann scheme. Alternately, in the scenario where $\mathcal{R}(\mu) > 0$, $\mathscr{R}(1+1/\mu) > 1$, which implies that the interface is numerically stiffened in the presence of fluid motion. In contrast to the case discussed above, the penalty on the divergence-free constraint is lower relative to that induced by $S_\mathrm{FP}$. The convergence is now rendered monotone, with pressure corrections aiding the convergence direction. Given the two conditions above, the rate of convergence is governed by 
    \begin{equation}
        \rho_c = \max_i \left|\frac{\mu_i}{1+\mu_i}\right|. \label{eq:condSpec}
    \end{equation} In addition to the condition established by \eqref{eq:muCondition}, a second constraint on convergence is governed by the limit as $|\mu|\to\infty$. Under this condition, the operator on the right-hand side approaches the monolithic interface operator, and the convergence of the pressure error slows down. } The interfacial operator $\tilde{C}$ can therefore be designed using the condition established by \eqref{eq:nlevp}, along with the conditions \eqref{eq:muCondition}, \eqref{eq:osccond} and \eqref{eq:condSpec} on the eigenvalues $\mu$. Fig.~\ref{fig:muregions} illustrates the convergence behaviour observed under the action of $\tilde{C}$ based on the location of the eigenvalues $\mu$.

	\subsection{The momentum-weighted interface operator}

	Guided by the role of the fluid and structure momentum operators in
	\eqref{eq:Sfp} and \eqref{eq:Sstar}, the interface operator is defined as
	\begin{equation}
		\tilde{C} = B_\mathrm{\Sigma}\left({A_\mathrm{\Sigma}^\mathrm{f}}^{\,-1} +
		{A_\mathrm{\Sigma}^\mathrm{s}}^{\,-1}\right)B_\mathrm{\Sigma}^T. 
		\label{eq:Ctilde}
	\end{equation}
	 
	The operator takes contributions from the fluid and the structural momentum
	operators at the interface, and the resulting method is hence referred to as
	momentum-weighted interface stabilization (MWIS). 
	
	
	Constructing $\tilde{C}$ as defined in \eqref{eq:Ctilde} would require inverting
	$A_\Sigma^\mathrm{f}$ and $A_\Sigma^\mathrm{s}$, which for large problems can be
	computationally expensive. Therefore, sparse approximations of the two operators
	are necessary to restrict the computational footprint. From the weak form of the 
	governing equations for the fluid and structure, the interfacial matrices 
	$A_\Sigma^\mathrm{f}$ and $A_\Sigma^\mathrm{s}$ can be viewed as
	\begin{equation}
		A_\Sigma^\mathrm{f} = \frac{\rho^\mathrm{f}}{\Delta t} M_\Sigma^\mathrm{f} +
		\rho^\mathrm{f}K_\Sigma^\mathrm{conv} + \mu^\mathrm{f}K_\Sigma^\mathrm{diff}, 
		\label{eq:itfF}
	\end{equation}
	and
	\begin{equation}
		A_\Sigma^\mathrm{s} = \frac{\rho^\mathrm{s}}{\Delta t} M_\Sigma^\mathrm{s} +
		\left((\lambda^\mathrm{s}+2\mu^\mathrm{s})K_\Sigma^\mathrm{vol} +
		\mu^\mathrm{s}K_\Sigma^\mathrm{dev}\right)\Delta t, 
		\label{eq:itfS}
	\end{equation} 
	where $M_\Sigma^\mathrm{f}$ and $M_\Sigma^\mathrm{s}$ are the mass matrices
	coupling the fluid and structural interfacial degrees of freedom, respectively,
	$K_\Sigma^\mathrm{conv}$ is the discrete convection operator and
	$K_\Sigma^\mathrm{diff}$ the discrete diffusion operator at the interface, and
	$K_\Sigma^\mathrm{vol}$ and $K_\Sigma^\mathrm{dev}$ represent the dilatational and
	shear components of the discrete structural stiffness matrix. Here
	$\mu^\mathrm{f}$ is the dynamic viscosity of the fluid, and $\lambda^\mathrm{s}$
	and $\mu^\mathrm{s}$ are the structure's material parameters.

	The interface operators \eqref{eq:itfF} and \eqref{eq:itfS} can be factored as
	\begin{align}
		A_\Sigma^\mathrm{f} &= \frac{\rho^\mathrm{f}}{\Delta t}M_\Sigma^\mathrm{f}
		\left(I + \frac{\Delta t}{\rho^\mathrm{f}}{M_\Sigma^\mathrm{f}}^{\,-1}
		\left(\rho^\mathrm{f}K_\Sigma^\mathrm{conv} + \mu^\mathrm{f}K_\Sigma^\mathrm{diff}\right)\right),
		\label{eq:itfFfac}\\
		A_\Sigma^\mathrm{s} &= \frac{\rho^\mathrm{s}}{\Delta t}M_\Sigma^\mathrm{s}
		\left(I + \frac{\Delta t^2}{\rho^\mathrm{s}}{M_\Sigma^\mathrm{s}}^{\,-1}
		\left((\lambda^\mathrm{s} + 2\mu^\mathrm{s})K_\Sigma^\mathrm{vol} +
		\mu^\mathrm{s}K_\Sigma^\mathrm{dev}\right)\right),
		\label{eq:itfSfac}
	\end{align}
	in which the bracketed terms carry the convective and viscous response of the
	fluid and the elastic response of the structure. These terms are replaced with
	their corresponding diagonal approximations $\tilde{D}_\tau^\mathrm{f}$ and
	$\tilde{D}_\tau^\mathrm{s}$, whose nodal entries are constructed from the
	element metrics as
	\begin{align}
		\tilde{D}_{\tau,k}^\mathrm{f} &= \rho^\mathrm{f}\,\frac{\sum_{e\in\mathcal{T}^\mathrm{f}}
			\int_{\Omega_e}N_k\left(\sqrt{(\boldsymbol{u}^\mathrm{f}-\boldsymbol{u}_m)\cdot
				\boldsymbol{G}(\boldsymbol{u}^\mathrm{f}-\boldsymbol{u}_m)} +
			\nu^\mathrm{f}\sqrt{C_I\boldsymbol{G}:\boldsymbol{G}}\right)d\Omega}
		{\sum_{e\in\mathcal{T}^\mathrm{f}}\int_{\Omega_e}N_k\,d\Omega}, \label{eq:Dtauf}\\
		\tilde{D}_{\tau,k}^\mathrm{s} &= \rho^\mathrm{s}\,\frac{\sum_{e\in\mathcal{T}^\mathrm{s}}
			\int_{\Omega_e}N_k\left(c_l^2\Delta t\sqrt{C_I\boldsymbol{G}:\boldsymbol{G}}\right)d\Omega}
		{\sum_{e\in\mathcal{T}^\mathrm{s}}\int_{\Omega_e}N_k\,d\Omega}, \label{eq:Dtaus}
	\end{align}
	where $\mathcal{T}^\mathrm{f}$ and $\mathcal{T}^\mathrm{s}$ denote the fluid and
	structural triangulations, $N_k$ the basis function corresponding to
	node $k$ in an element, $c_l$ the longitudinal wave speed within the structure, $C_I$ the constant 
	arising from the element-wise inverse estimate, and $\boldsymbol{G}$ is the element
	contravariant metric tensor
	\begin{equation}
		\boldsymbol{G} = \frac{\partial \xi^T}{\partial \boldsymbol{x}}
		\frac{\partial \xi}{\partial \boldsymbol{x}},
		\label{eq:metricTensor}
	\end{equation}
	with $\boldsymbol{x}$ and $\xi$ the physical and parametric coordinates,
	respectively. The resulting approximations to 
	the interface operators are
	\begin{subequations}
		\begin{align}
			\tilde{A}_\Sigma^\mathrm{f} &= \frac{\rho^\mathrm{f}}{\Delta t}
			M_\Sigma^\mathrm{f}\underbrace{\left(I + \frac{\Delta t}{\rho^\mathrm{f}}
				\tilde{D}^\mathrm{f}_\tau\right)}_{\hat{D}_\tau^\mathrm{f}}
			\label{eq:Aif},\\
			\tilde{A}_\Sigma^\mathrm{s} &= \frac{\rho^\mathrm{s}}{\Delta t}
			M_\Sigma^\mathrm{s}\underbrace{\left(I + \frac{\Delta t}{\rho^\mathrm{s}}
				\tilde{D}^\mathrm{s}_\tau\right)}_{\hat{D}_\tau^\mathrm{s}},
			\label{eq:Ais}
		\end{align}
	\end{subequations} and used only within the stabilization process.

	Consequently, the interface operator \eqref{eq:Ctilde} for inertially dominated
	flows is approximated as
	\begin{equation}
		\begin{split}
			\tilde{C} &= B_\Sigma{A_\Sigma^\mathrm{s}}^{\,-1}B_\Sigma^T +
			B_\Sigma{A_\Sigma^\mathrm{f}}^{\,-1}B_\Sigma^T,\\
			&= B_\Sigma{A_\Sigma^\mathrm{s}}^{\,-1}\left(A_\Sigma^\mathrm{f}{A_\Sigma^\mathrm{f}}^{\,-1}\right)B_\Sigma^T
			+ B_\Sigma{A_\Sigma^\mathrm{f}}^{\,-1}B_\Sigma^T,\\
			&\approx B_\Sigma\underbrace{\left(\frac{\rho^\mathrm{f}}{\rho^\mathrm{s}}
				{\hat{D}_\tau^\mathrm{s}{}}^{\,-1}{M_\Sigma^\mathrm{s}}^{\,-1}M_\Sigma^\mathrm{f}
				\hat{D}_\tau^\mathrm{f}\right)}_{D^\mathrm{fs}}{A_\Sigma^\mathrm{f}}^{\,-1}B_\Sigma^T
			+ B_\Sigma{A_\Sigma^\mathrm{f}}^{\,-1}B_\Sigma^T.
		\end{split}
		\label{eq:approxCtilde}
	\end{equation}
	
	Using a lumped mass approximation for the interface, $D^\mathrm{fs}$ can be 
	commutated with $B_\Sigma$ using a least-squares approach \citep{2014_Elman_FE}. 
	Defining the commutation as $B_\Sigma D^\mathrm{fs} \approx
	\tilde{D}^\mathrm{fs}B_\Sigma$, the approximation \eqref{eq:approxCtilde}
	further simplifies to
	\begin{equation}
		\tilde{C} \approx (I + \tilde{D}^\mathrm{fs})
		B_\Sigma {A_\Sigma^\mathrm{f}}^{\,-1} B_\Sigma^T. 
		\label{eq:tildeCfApprox}
	\end{equation}
    With this interfacial operator, the structural contribution enters only through
    the diagonal factor $\tilde{D}^\mathrm{fs}$, which acts only on the  pressure degrees of freedom restricted to the interface. The augmentation therefore does not change the sparsity of the pressure Schur complement.
    	
    \section{Numerical experiments} \label{sec:results}
    The preceding sections establish the theoretical foundation of the proposed momentum-weighted interface stabilization (MWIS). The simplified FSI model in Section 2 provides insight into the stabilization mechanism, while Section 3 extends this principle to general discrete FSI systems and develops a practical interface operator based on fluid and structural momentum contributions. The numerical performance of the resulting formulation is evaluated, with particular emphasis on consistency, convergence characteristics, and robustness under strong added-mass effects. The numerical studies proceed from a model FSI problem, which allows a systematic investigation of the influence of mass ratio and interface mesh resolution, to established FSI benchmarks that assess the accuracy and effectiveness of the proposed method in more representative coupled configurations.

	\subsection{Elastic solid under a gravity-driven fluid column} \label{sec:piston}
	In the current section, the augmented partitioned scheme is evaluated on a model FSI problem. The model problem involves the deformation of a linear elastic solid under the load applied by an incompressible, inviscid fluid column moving under the influence of gravity.
	\subsubsection{Problem definition}
	\begin{figure}[pos=H]
	\centering
	\begin{subfigure}[b]{0.32\textwidth}
		\centering
		\begin{tikzpicture}
			\fill[cyan] (0,0) rectangle (3.0,2.0);

			\draw (0,0) rectangle (3.0,6.0);

			\draw (0.0,2.0) -- (3.0,2.0);
			\node at (2.5,1.0) {$\Omega^\mathrm{s}$};
			\node at (2.5,4.0) {$\Omega^\mathrm{f}$};

			\node at (2.5,2.3) {$\Gamma^\mathrm{fs}$};

			\node[anchor=north] at (0,0)   {$(0,-H^\mathrm{s})$};
			\node[anchor=north] at (3,0)   {$(L,-H^\mathrm{s})$};
			\node[anchor=south] at (3,6) {$(L,H^\mathrm{f})$};
			\node[anchor=south] at (0,6) {$(0,H^\mathrm{f})$};

			\node[anchor=east] at (0,2)   {$(0,0)$};
			\node[anchor=west] at (3,2)   {$(L,0)$};
		\end{tikzpicture}
		\caption{}
		\label{fig:domain_a}
	\end{subfigure}
	\begin{subfigure}[b]{0.32\textwidth}
		\centering
		\includegraphics[width=0.8\textwidth]{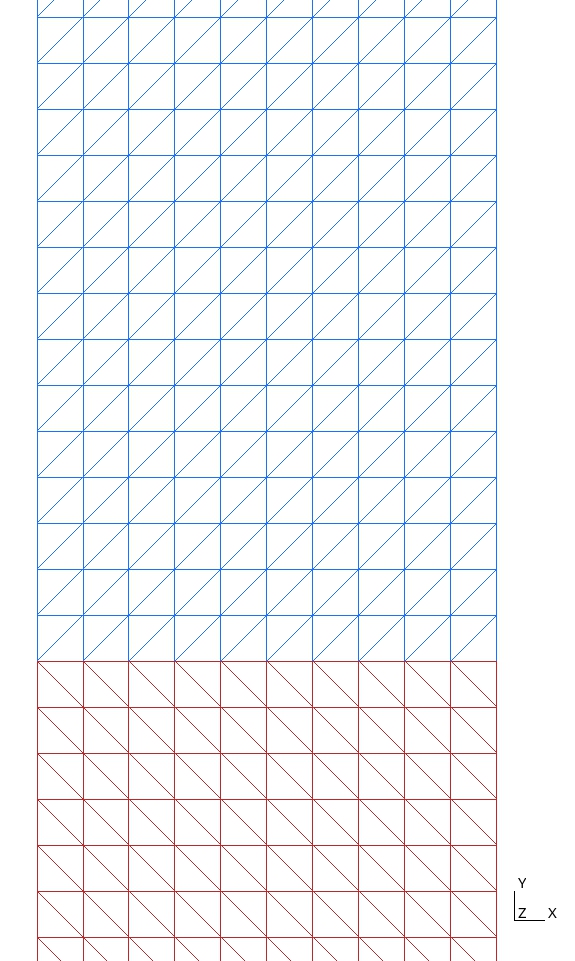}
		\caption{}
		\label{fig:domain_b}
	\end{subfigure}
	\begin{subfigure}[b]{0.32\textwidth}
		\centering
		\includegraphics[width=0.8\textwidth]{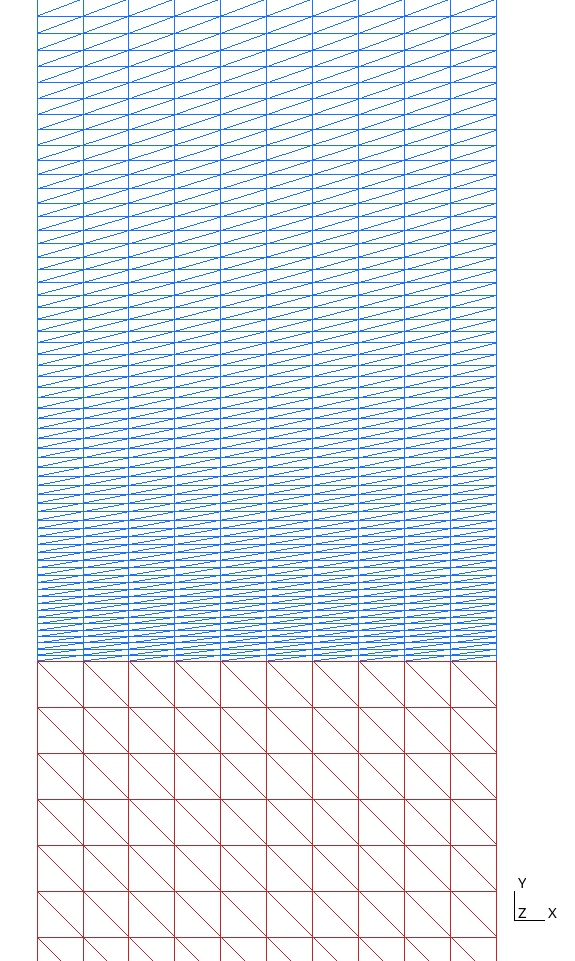}
		\caption{}
		\label{fig:domain_c}
	\end{subfigure}
	\caption{(a) Schematic of fluid column interacting with a compressible linear elastic solid under the effect of gravity and the mesh used for the model problem with (b) $r_\mathrm{fs} = 1.0$, (c) $r_\mathrm{fs} = 0.125$. The meshes are matching at the interface, with the blue elements forming the fluid mesh, and the red elements forming the structural mesh. }
	\label{fig:IncompFluidSolid}
\end{figure}
    The computational domain is a two-dimensional rectangular region of length $L = 0.05$~m, with a fluid column of height $H^\mathrm{f} = 0.75$~m, and a solid column of height $H^\mathrm{s} = 0.25$~m. The base of the solid is fixed, with the lateral boundaries located at $x = 0$ and $x = L$ free to deform in the vertical direction. The pressure is set to $p = 0$ on the top boundary of the fluid domain, and a gravitational body force $g = -10~$~m\,s$^{-2}$ is applied in the downward direction. The fluid density is $\rho^\mathrm{f} = 1000$~kg\,m$^{-3}$, while the structure is characterized by a Young's modulus $E^\mathrm{s} = 2.3\times10^5$~Pa and a
    Poisson's ratio $\nu^\mathrm{s} = 0.25$. The structural density is defined in terms
	of the fluid density as,
	\begin{equation}
		\rho^\mathrm{s} = m^\star \rho^\mathrm{f},
		\label{eq:massRatio}
	\end{equation}
	where, $m^\star$ denotes the solid-to-fluid mass ratio.
	
	The structural domain is uniformly discretized using $500$ elements, with the element length normal to the interface fixed at $h_\mathrm{s} = 0.005$~m, and the interface is discretized using $10$ elements. The fluid and structural meshes match at the interface, and therefore the horizontal direction is equivalently resolved. In the direction normal to the interface, the fluid mesh is graded with $h_\mathrm{f}$ being the normal length of the first fluid element adjacent to $\Gamma^\mathrm{fs}$, as shown in Figs.~\ref{fig:domain_b} and \ref{fig:domain_c}. The relative 
	length scales of the two meshes normal to the interface are characterized by the aspect ratio
	\begin{equation}
		r_\mathrm{fs} = \frac{h_\mathrm{f}}{h_\mathrm{s}}.
		\label{eq:meshRatio}
	\end{equation} 
    
    For $r_\mathrm{fs} < 1$, the element size is geometrically grown to $y = 0.25 \text{ m}$, beyond which the mesh size is maintained at $h_\mathrm{f}= 0.005\text{ m}$. For $r_\mathrm{fs}> 1$, a uniform mesh resolution of $h_\mathrm{f}= r_\mathrm{fs}h_\mathrm{s}$ is used for the fluid domain. 
	
	In this configuration, the approximation of the interface operator \eqref{eq:tildeCfApprox} reduces to
	\begin{equation}
		\tilde{D}^\mathrm{fs} =
		\beta_\mathrm{MWIS}I,
	\end{equation} 
	where
	\begin{equation}
		\beta_\mathrm{MWIS} = \frac{\rho^\mathrm{f}h_\mathrm{f}}{\rho^\mathrm{s}h_\mathrm{s}} =
		\frac{r_\mathrm{fs}}{m^\star}.
		\label{eq:betaitf}
	\end{equation}
    
	The parameter $\beta_\mathrm{MWIS}$ represents the relative fluid and structural 
    inertia in the mesh elements straddling the fluid-structure interface and, for a 
    given mass ratio $m^\star$, governs the extent of stabilization applied. 
    Consequently, the following studies are conducted to evaluate the performance of 
    the proposed MWIS scheme.
	\begin{itemize}
		\item \textbf{Spectral assessment}: The mass ratio $m^\star$ and the aspect ratio $r_\mathrm{fs}$ are varied 
        independently, and for each combination, the interfacial eigenvalue problem 
        \eqref{eq:nlevp} is solved directly with $\tilde{C}$ given by 
        \eqref{eq:tildeCfApprox}. The smallest and largest real parts of the eigenvalues 
        $\mu$ are recorded, along with the spectral radius $\rho_c$ of the iteration 
        matrix given by \eqref{eq:condSpec}. This study thereby verifies the analysis of 
        Sec.~\ref{sec:MWIS} against the operators assembled by the solver.
		
		\item \textbf{Iterative performance}: The coupled problem is advanced in
		time at fixed $m^\star$, with $r_\mathrm{fs}$ varied as above. The measured
		quantities are the interface pressure history, the $L_2$ norm of the
		velocity divergence, and the number of coupling iterations per time step.
		The momentum-weighted parameter $\beta_\mathrm{MWIS}$ is compared with the
		interface parameter proposed by \citet{idelsohn_2009},
		\begin{equation}
			\beta_\mathrm{RTIS} = \frac{\rho^\mathrm{f}}{\rho^\mathrm{s}} =
			\frac{1}{m^\star},
			\label{eq:betaRTIS}
		\end{equation}
		which is referred to as relative time-scale interface stabilization (RTIS),
		since it is constructed from the relative time scales of the two domains alone
		and therefore carries no dependence on the interface mesh metrics. The two
		parameters are related by $\beta_\mathrm{MWIS} = r_\mathrm{fs}\,\beta_\mathrm{RTIS}$ and 
		are identical when $r_\mathrm{fs} = 1$, so the comparison isolates the contribution of the
		interface length scales to the stabilization.
	\end{itemize}
	
	The two parameters are varied independently rather than through $\beta_\mathrm{MWIS}$ alone.
	This choice is deliberate, as the same value of $\beta_\mathrm{MWIS}$ can be reached either by
	a heavy structure on a graded fluid mesh or by a light structure on a uniform
	one. Sweeping the two separately therefore distinguishes the contribution of the
	mass ratio, which is the parameter governing the added-mass effect, from that of
	the interface mesh metrics, which enter the stabilization only through the
	interfacial operator \eqref{eq:tildeCfApprox}. The mass ratios considered span three
	orders of magnitude, extending well into the regime where the unstabilized
	partitioned scheme fails, and the aspect ratios span a range
	representative of the interface resolution encountered when a boundary layer
	mesh meets a structural mesh. In all the studies, a time step size of $\Delta t = 6.25 \times 10^{-4} \text{ s}$ is used.
	
	\subsubsection{Spectral assessment}
	\label{sec:monoEigNumerical}
    
	To elucidate the robustness offered by the augmented iterative scheme,
	$\tilde{C}$ is evaluated for the 
	mass ratios $m^\star \in \{10^{-3},10^{-2},
	10^{-1}, 1\}$ on the model problem. For a given mass ratio,
	varying the relative mesh ratio $r_\mathrm{fs}$ impacts the spectral radius of
	the iteration matrix, and the eigenvalue distribution $\mu_i$
	corresponding to the nonlinear eigenvalue problem given by \eqref{eq:nlevp}.
	
	\begin{table}[pos=H]
		\centering
		\begin{tabular}{ c c c c c c c }
			\toprule
			\multirow{2}{*}{$m^\star$} & \multirow{2}{*}{$r_\mathrm{fs}$}
			& \multirow{2}{*}{$\beta_\mathrm{MWIS}$}
			& \multirow{2}{*}{$\min\,\mathscr{R}(\mu)$} &
            \multirow{2}{*}{$\max\,\mathscr{R}(\mu)$} &\multirow{2}{*}{Spectral radius $(\rho_c)$} \\
			& &	& & \\
			\midrule
			\multirow{4}{*}{$10^{-3}$}
			& $1$      & $1000$ &  $-0.91781$ & $13.05$ &$11.1665$ \\
			& $0.25$      & $250$ &  $-0.73272$ & $46.87$ & $2.74145$  \\
			& $0.0625$      & $62.5$ &  $-0.36035$ & $180.2$ & $0.99448$ \\
			& $0.03125$      & $31.25$ &  $-0.20673$ & $370.7$ & $0.99731$ \\
			\hline
			\multirow{4}{*}{$10^{-2}$}
			& $1$      & $100$ & $-0.80426$ & $12.22$ &$4.10888$\\
			& $0.25$      & $25$ &  $-0.47609$ & $45.96$ & $0.97871$ \\
			& $0.0625$      & $6.25$ &  $-0.15739$ & $191.4$ & $0.99480$\\
			& $0.03125$      & $3.125$ &  $-0.08362$ & $371.0$& $0.99731$\\
			\hline
			\multirow{4}{*}{$10^{-1}$}
			& $1$     & $10$ &  $-0.56752$ & $12.40$ & $1.31225$\\
			& $0.25$   & $2.5$ & $-0.22648$ & $46.09$ & $0.97877$\\
			& $0.0625$   & $0.625$ & $-0.06707$ & $190.0$ & $0.99477$\\
			& $0.03125$      & $0.3125$ & $-0.03810$& $359.3$ & $0.99722$\\
			\hline
			\multirow{4}{*}{$10^{0}$}
			& $1$      & $1.0$  & $-0.31236$ & $13.37 $& $0.93043$ \\
			& $0.25$   & $0.25$  & $-0.09793$ & $48.85$ &$0.97994$ \\
			& $0.0625$ & $0.0625$ & $-0.02486$ & $198.0$ & $0.99497$ \\
			& $0.03125$& $0.03125$ & $-0.01985$ & $392.7$ & $0.99746$ \\
			\bottomrule
		\end{tabular}
		\caption{Spectral radius of the interfacial operator $S_\mathrm{S}^{-1}A_\mathrm{if}$ and
			the minimum real part of the eigenvalues $\mu$ of the nonlinear problem
			\eqref{eq:nlevp}, for the interface pressure operator given by
			\eqref{eq:tildeCfApprox}. Convergence of the partitioned iteration requires 
			$\inf \mathscr{R}(\mu) > -\tfrac{1}{2}$.}
		\label{tab:muSweep}
	\end{table} 

    {{Table~\ref{tab:muSweep} illustrates the effectiveness of
    $\tilde{C} = B_\Sigma ({A_\Sigma^\mathrm{f}}^{-1} +
    {A_\Sigma^\mathrm{s}}^{-1})B_\Sigma ^T$ for a particular mass ratio $m^\star$,
    given that the relative mesh ratio $r_\mathrm{fs}$ is adjusted to achieve the
    conditions necessary for convergence.} For a given mass ratio $m^\star$, $\tilde{C}$ shifts the spectrum to the right along the real axis as $r_\mathrm{fs}$ decreases. Furthermore, it can be observed that $\max \mathcal{R}(\mu)$ varies proportionally to $r_\mathrm{fs}$ independently of $m^\star$. On the other end of the spectrum, the migration of $\min \mathcal{R}(\mu)$ is simultaneously governed by $m^\star$ and $r_\mathrm{fs}$. Consequently, different values of the parameter $r_\mathrm{fs}$ render the iterations stable as $m^\star$ is varied. Under the conditions where $\min \mathcal{R}(\mu)$ is adequately shifted to render stable iterations, the maximal eigenvalue begins to govern the spectral radius as defined by \eqref{eq:condSpec}. Thus, with the current choice of $\tilde{C}$, interfacial refinement through $r_\mathrm{fs}$ renders the iterations stable up to a mass ratio of $m^\star = 10^{-3}$, while a degradation is observed in the rate of convergence due to the migration of the maximal eigenvalue. Simultaneously, the control over $r_\mathrm{fs}$ supplements the requirements on mesh sizing for adequate flow resolution.  }

\begin{remark}
	The choice of $\tilde{C}$ presented here is canonical, and demonstrate that
	operators with different spectral characteristics can be designed to meet the
	requirement of stabilizing the fixed point iterations. The dependence of
	stabilization on the parameters $(m^\star, r_\mathrm{fs})$ along with the
	conditions established by \eqref{eq:nlevp} allows for the design of an
	arbitrary $\tilde{C}$.
\end{remark}

\subsubsection{Iterative performance}

{To demonstrate the enhanced accuracy characteristics of the
	augmented iterative scheme, the time evolution of the interface pressure is
	contrasted in Fig. \ref{fig:ItfPresPiston}. Within the model problem, the mass
	ratio $m^\star = 1$, and for the relative mesh ratio $r_\mathrm{fs} = 1$,
	$\beta_\mathrm{MWIS} = \beta_\mathrm{RTIS}$. Fig. \ref{fig:intIdel} illustrates the interface pressure
	when $\beta_\mathrm{RTIS}$ is used as the metric for stabilization. With a reduction in
	$r_\mathrm{fs}$ and refinement of the fluid mesh, variations in the spectral
	content of the interface pressure emerge, prominently in the form of high-
	frequency dispersive components. Within this choice of stabilization, the notion
	of grid-converged results is compromised. Alternately, with $\beta_\mathrm{MWIS}$ as the
	metric for stabilization, the spectral content of the interface pressure is
	preserved with fluid mesh refinement as demonstrated by Fig. \ref{fig:intNew}. 
	Furthermore, the case of $r_\mathrm{fs} > 1$ is unstable when
	$\beta_\mathrm{RTIS}$ is used, whereas the use of $\beta_\mathrm{MWIS}$ maintains mesh independent
	stability even at $r_\mathrm{fs} = 4.0$. 

\begin{figure}[pos=H]
	\centering
	\begin{subfigure}[b]{0.49\textwidth}
		\includegraphics[trim={2cm 1.5cm 0.8cm 0cm}, clip,width=\textwidth]{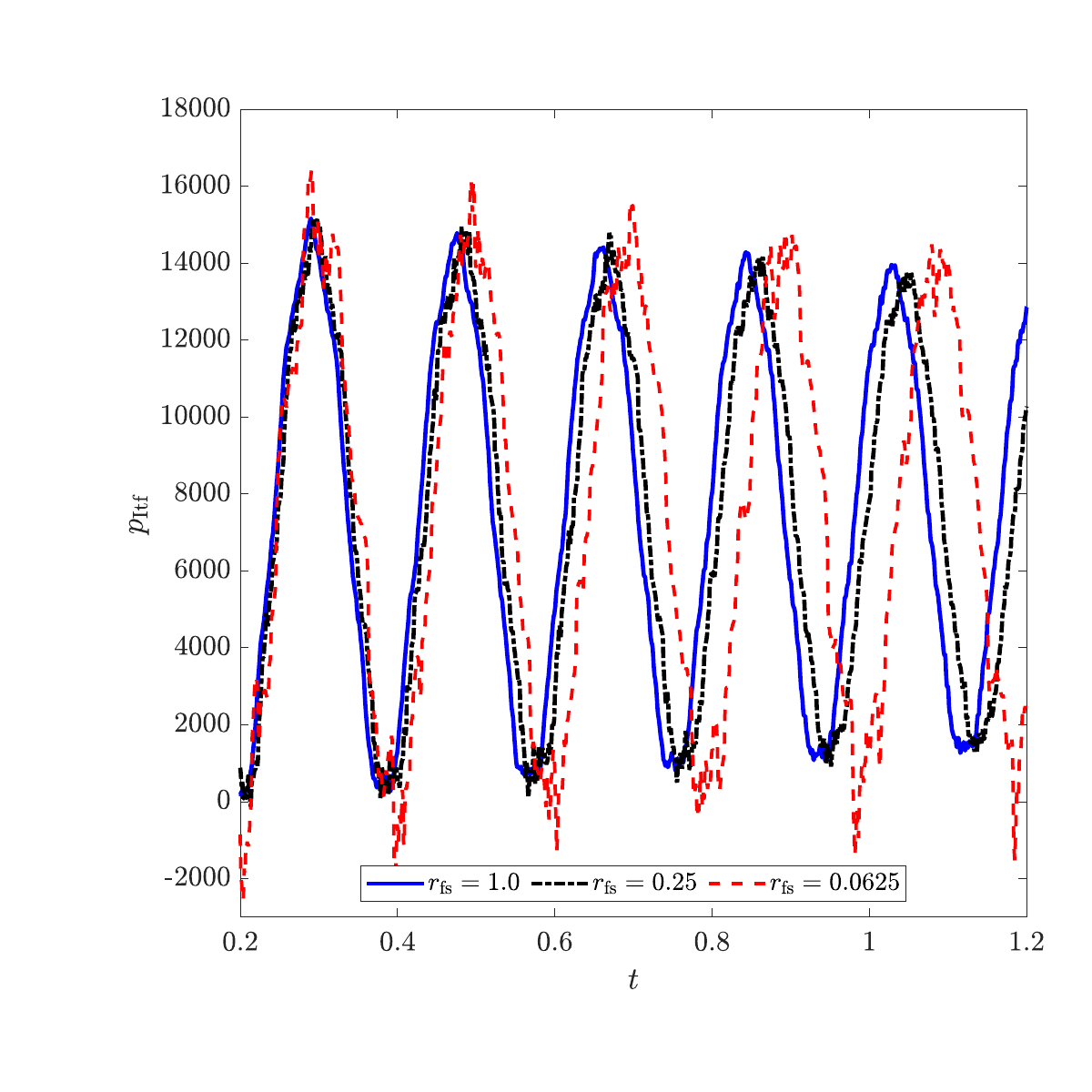}
		\caption{}\label{fig:intIdel}
	\end{subfigure}
	\hfill
	\begin{subfigure}[b]{0.49\textwidth}
		\centering
		\includegraphics[trim={2cm 1.5cm 0.8cm 0cm}, clip,width=\textwidth]{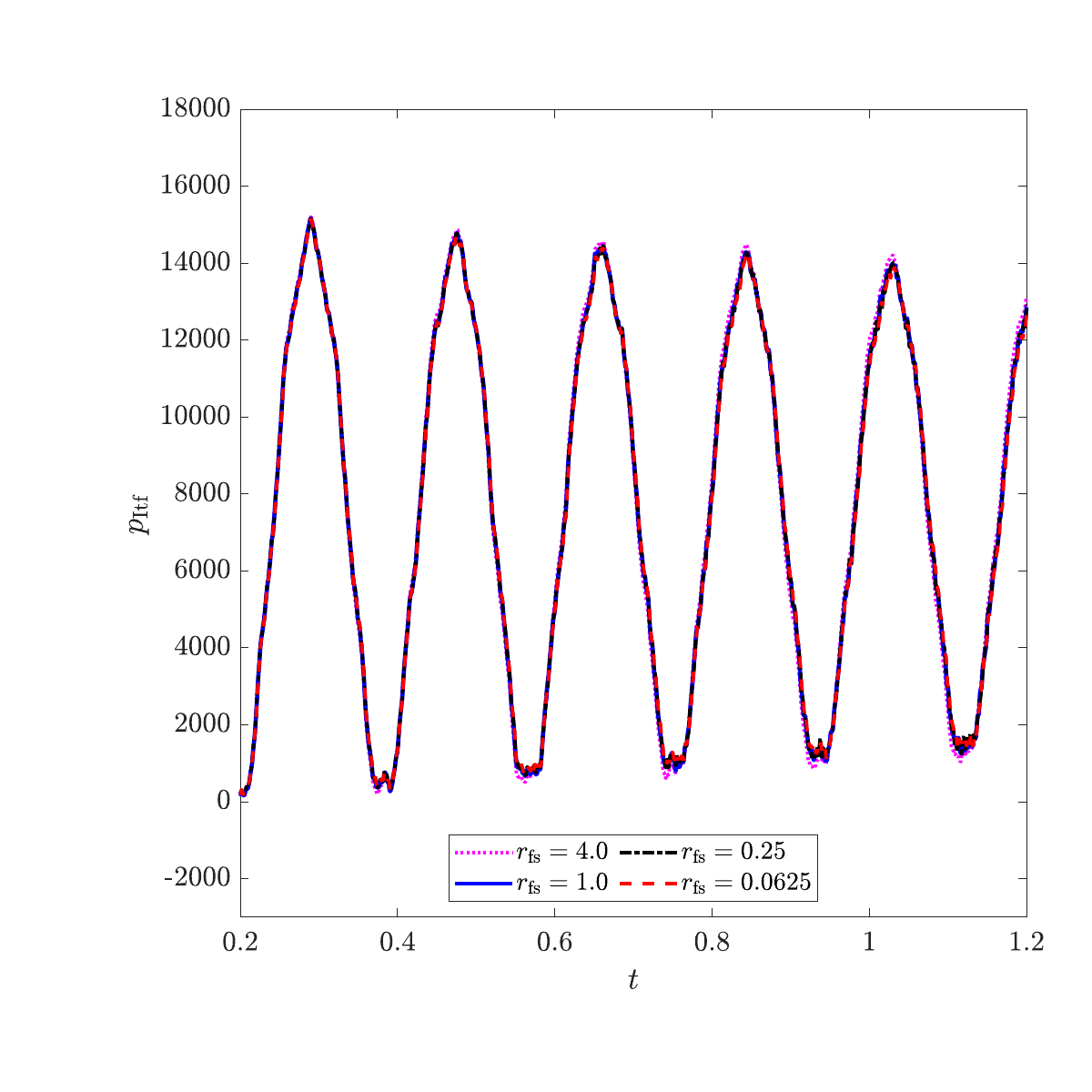}
		\caption{}
		\label{fig:intNew}
	\end{subfigure}
	
	\caption{Temporal variation of the pressure at the fluid-structure interface using (a) $\beta_\mathrm{RTIS}$ and (b) $\beta_\mathrm{MWIS}$ in the approximation of $\tilde{D}^\mathrm{fs}$.}
	\label{fig:ItfPresPiston}
\end{figure}
	
	The enhancement in accuracy and robustness is further reinforced by the
	divergence of the velocity field presented in Fig. \ref{fig:intDivVel}. As the
	relative mesh ratio $r_\mathrm{fs}$ is reduced---a common occurrence in viscous
	boundary layer driven flows---the difficulty in enforcing the divergence-free
	condition on the fluid flow through $\beta_\mathrm{RTIS}$ becomes challenging. Under flow states associated with pressure peaks, the enforcement of the divergence-free
	condition becomes challenging and is evident through the spikes in the iteration
	counts presented in Fig. \ref{fig:intIter}. However, when $\beta_\mathrm{MWIS}$ is used for
	stabilization, the divergence-free constraint is more easily enforced. }

	\begin{figure}[pos=H]
		\centering
		\begin{subfigure}[b]{0.49\textwidth}
			\centering
			\includegraphics[trim={2cm 1.5cm 0.8cm 0cm}, clip,width=\textwidth]{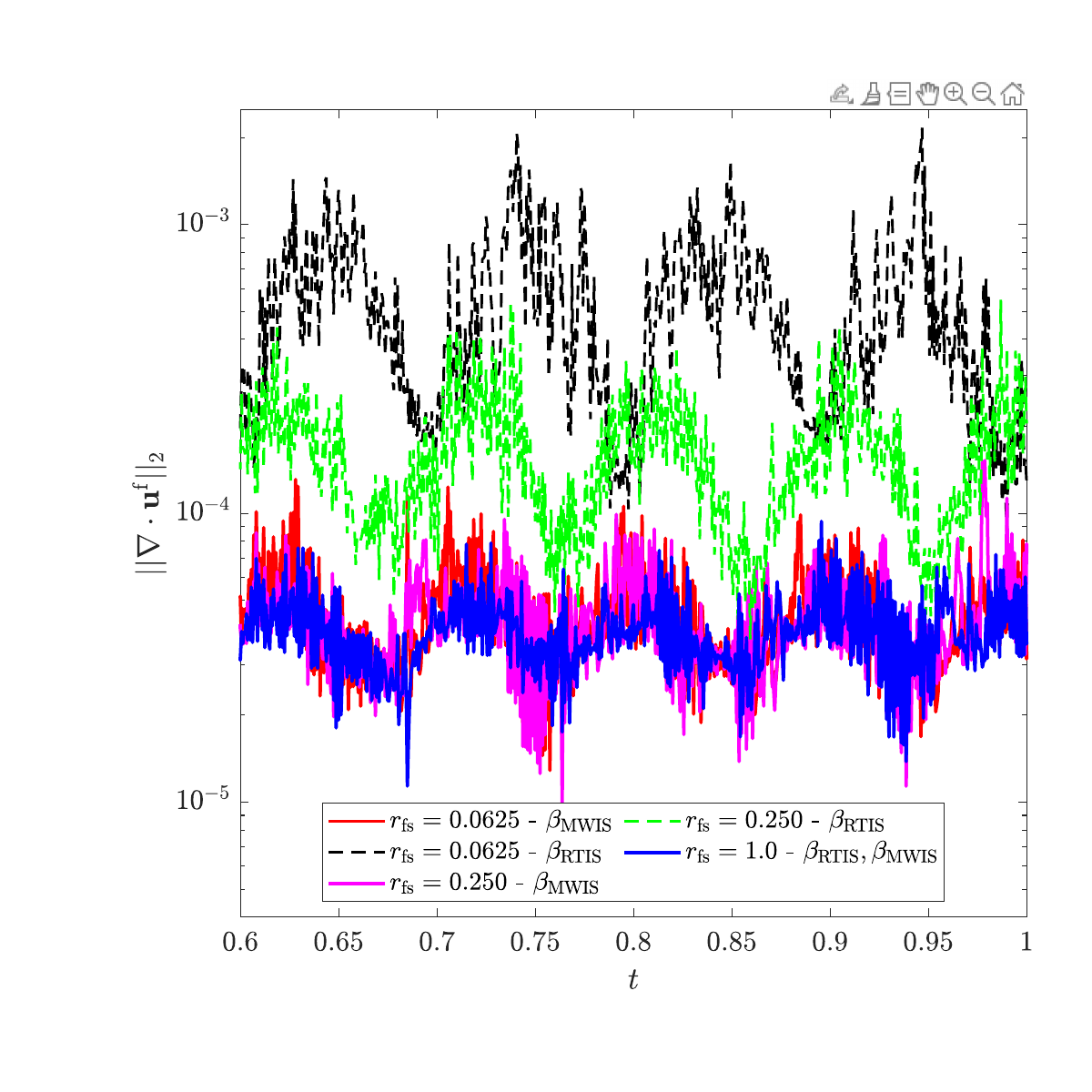}
			\caption{}
			\label{fig:intDivVel}
		\end{subfigure}
		\hfill
		\begin{subfigure}[b]{0.49\textwidth}
			\centering
			\includegraphics[trim={2cm 1.5cm 0.8cm 0cm}, clip,width=\textwidth]{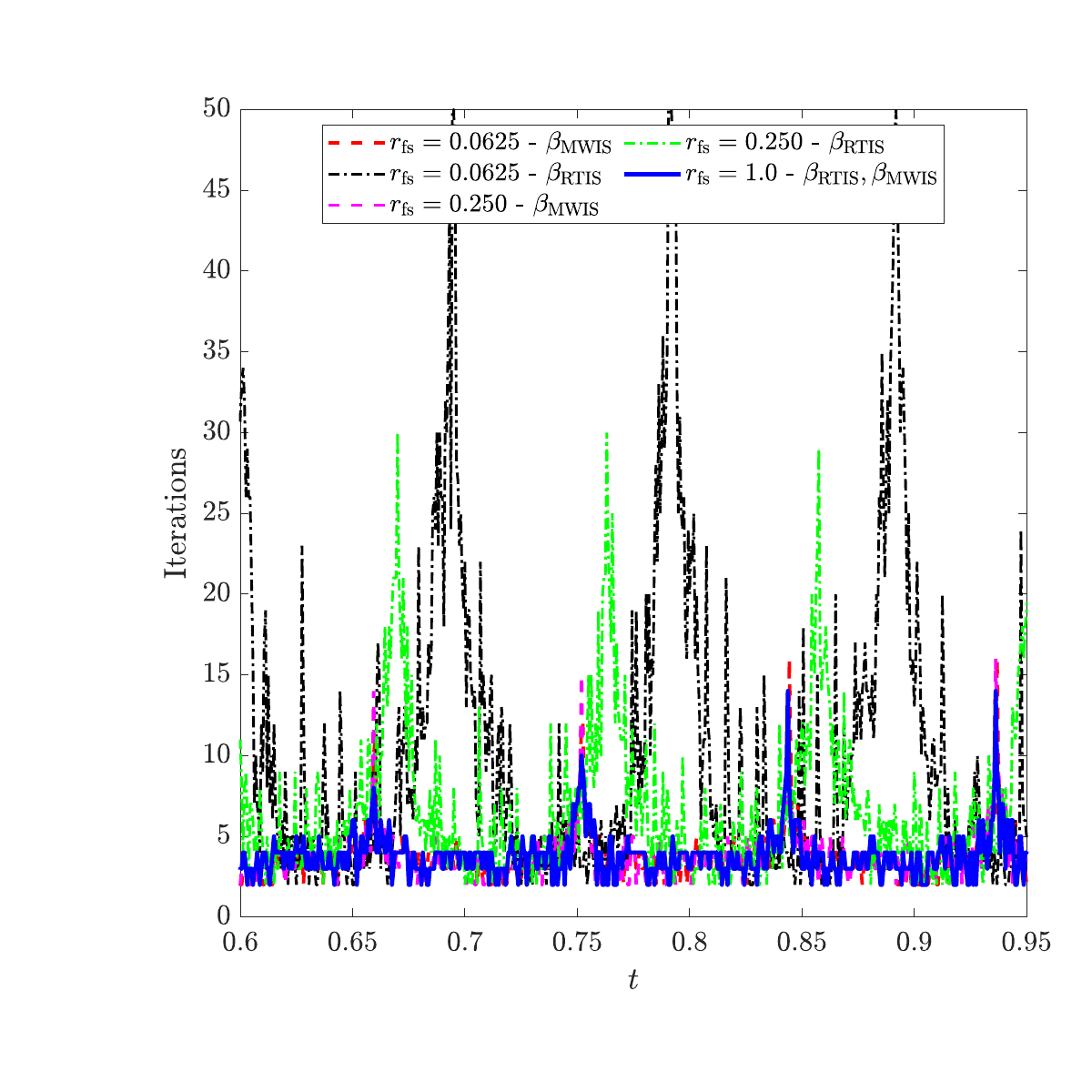}
			\caption{}
			\label{fig:intIter}
		\end{subfigure}
		
		\caption{A comparison of the temporal evolution of (a) the $L_2$ norm of the velocity divergence and (b) the number of iterations required for convergence per time step, using $\beta_\mathrm{RTIS}$ and $\beta_\mathrm{MWIS}$ for interface stabilization. }
        
		\label{fig:divIter}
	\end{figure}

	\subsection{Elastically deforming bar in the unsteady wake of a circular cylinder} 
	\label{sec:fsiStabEval}

	Following the model problem of Sec.~\ref{sec:piston}, the MWIS scheme is assessed 
	on an elastically deforming bar oscillating in the unsteady wake of a circular 
	cylinder. Reference data available for this configuration are first used to 
	validate the coupled fluid and structural response, after which the mass ratio 
	$m^\star$ and the relative mesh ratio $r_\mathrm{fs}$ are varied to assess the 
	robustness of the stabilization under strong added-mass effects.

	\subsubsection{Problem definition}


\begin{figure}[pos=H]
	\centering
\def\thDomL{24}      
\def\thDomH{10}      
\def\thDdx{1.6}      
\def\thDdy{1.0}      
\def\thExt{0.3}      
\def\thCylX{5.0}       
\def\thCylY{4.8}       
\def\thCylR{1.2}       
\def\thFlagL{7}      
\def\thFlagT{0.4}    
\def\thInMax{1.8}    
\def\thLabL{25D}   
\def\thLabH{4.1D}   
\def\thLabCx{2D}     
\def\thLabCy{2D}     
\def\thLabFl{3.5D}   
\def\thLabFt{0.2D}   

\scalebox{1.0}{
\begin{tikzpicture}[scale=0.45, every node/.style={font=\small}, depth/.style={dash pattern=on 1.5pt off 1pt}]

	\pgfmathsetmacro{\thDnorm}{sqrt(\thDdx*\thDdx+\thDdy*\thDdy)}
	\pgfmathsetmacro{\thNx}{-\thDdy/\thDnorm*\thCylR}                      
	\pgfmathsetmacro{\thNy}{\thDdx/\thDnorm*\thCylR}                       
	\pgfmathsetmacro{\thXroot}{\thCylX+sqrt(\thCylR*\thCylR-\thFlagT*\thFlagT/4)}  
	\pgfmathsetmacro{\thXtip}{\thXroot+\thFlagL}                         

	\coordinate (A)  at (0,0);
	\coordinate (B)  at (\thDomL,0);
	\coordinate (C)  at (\thDomL,\thDomH);
	\coordinate (Ff) at (0,\thDomH);
	\coordinate (Ap) at (\thDdx,\thDdy);
	\coordinate (Bp) at (\thDomL+\thDdx,\thDdy);
	\coordinate (Cp) at (\thDomL+\thDdx,\thDomH+\thDdy);
	\coordinate (Fp) at (\thDdx,\thDomH+\thDdy);

	\draw[dashed, thin] (Ap) -- (Bp);
	\draw[dashed, thin] (A) -- (Ap);
	\draw[dashed, thin] (Ap) -- (Fp);

	\draw[thick] (A) -- (B) -- (C) -- (Ff) -- cycle;
	\draw[thick] (Ff) -- (Fp) -- (Cp) -- (C);
	\draw[thick] (B) -- (Bp) -- (Cp);

	\node[above] at (\thDomL/2, \thDomH-0.1) {$\Gamma_\mathrm{top}$};
	\node[below] at (\thDomL/2, -0.3) {$\Gamma_\mathrm{bottom}$};
	\node[left]  at (-0.3, \thDomH/2) {$\Gamma_\mathrm{in}$};
	\node at (\thDomL+\thDdx/2, \thDomH/2+\thDdy/2) {$\Gamma_\mathrm{out}$}; 

	\foreach \y in {0.5,1.5,...,9.5}{
		\pgfmathsetmacro{\thUx}{\thInMax*4*\y*(\thDomH-\y)/(\thDomH*\thDomH)}
		\draw[-{Latex[length=1.5mm]}] (0,\y) -- (\thUx,\y);
	}
	\draw[thin] plot[domain=0:\thDomH, samples=40, smooth]
		({\thInMax*4*\x*(\thDomH-\x)/(\thDomH*\thDomH)}, \x);
	\node[above right] at (\thInMax, \thDomH/2+0.1) {$1.5\bar{U}$};

	\coordinate (Cc) at (\thCylX,\thCylY);
	\coordinate (Cb) at ($(Cc)+(\thExt*\thDdx,\thExt*\thDdy)$);
	\fill[gray!35] (Cb) circle (\thCylR);                                  
	\draw[depth] (Cb) circle (\thCylR);
	\fill[gray!45] ($(Cc)+(\thNx,\thNy)$) -- ($(Cb)+(\thNx,\thNy)$)
		-- ($(Cb)-(\thNx,\thNy)$) -- ($(Cc)-(\thNx,\thNy)$) -- cycle;            
	\draw[depth] ($(Cc)+(\thNx,\thNy)$) -- ($(Cb)+(\thNx,\thNy)$);            
	\draw[depth] ($(Cc)-(\thNx,\thNy)$) -- ($(Cb)-(\thNx,\thNy)$);
	\filldraw[fill=gray!60, thick] (Cc) circle (\thCylR);                 
	\draw[thin] (\thCylX-0.25,\thCylY) -- (\thCylX+0.25,\thCylY)
		(\thCylX,\thCylY-0.25) -- (\thCylX,\thCylY+0.25);                                

	\coordinate (P1) at (\thXroot,\thCylY-\thFlagT/2);
	\coordinate (P2) at (\thXtip,\thCylY-\thFlagT/2);
	\coordinate (P3) at (\thXtip,\thCylY+\thFlagT/2);
	\coordinate (P4) at (\thXroot,\thCylY+\thFlagT/2);
	\coordinate (P2p) at ($(P2)+(\thExt*\thDdx,\thExt*\thDdy)$);
	\coordinate (P3p) at ($(P3)+(\thExt*\thDdx,\thExt*\thDdy)$);
	\coordinate (P4p) at ($(P4)+(\thExt*\thDdx,\thExt*\thDdy)$);
	\fill[cyan] (P1) -- (P2) -- (P3) -- (P4) -- cycle;                 
	\fill[cyan!70] (P4) -- (P3) -- (P3p) -- (P4p) -- cycle;             
	\fill[cyan!55] (P2) -- (P3) -- (P3p) -- (P2p) -- cycle;             
	\draw[thick] (P1) -- (P2) -- (P3) -- (P4);
	\draw[depth] (P4p) -- (P3p) -- (P2p);                               
	\draw[depth] (P4) -- (P4p);                                        
	\draw[depth] (P3) -- (P3p);
	\draw[depth] (P2) -- (P2p);

	\fill (\thXtip,\thCylY) circle (0.12);
	\node[below left] at (\thXtip,\thCylY-\thFlagT/2) {$A$};

	\draw[{Latex[length=1.5mm]}-{Latex[length=1.5mm]}]
		($(Cc)+(300:\thCylR)$) -- ($(Cc)+(120:\thCylR)$);
	\draw ($(Cc)+(120:\thCylR)$) -- ($(Cc)+(120:\thCylR+1.6)$)
		-- ++(-1.0,0) node[pos=0.5, above] {$\varnothing D$};

	\draw[{Latex[length=1.5mm]}-{Latex[length=1.5mm]}] (\thCylX,0) -- (\thCylX,\thCylY);
	\node[right] at (\thCylX, {(\thCylY-\thCylR)/2}) {$\thLabCy$};

	\draw[thin] (\thCylX,\thCylY) -- (\thCylX,-0.95);
	\draw[{Latex[length=1.5mm]}-{Latex[length=1.5mm]}] (0,-0.75) -- (\thCylX,-0.75);
	\node[below] at (\thCylX/2, -0.75) {$\thLabCx$};

	\draw[thin] (\thCylX+\thCylR,\thCylY-\thFlagT/2) -- (\thCylX+\thCylR,\thCylY-\thFlagT/2-1.5);
	\draw[thin] (\thXtip,\thCylY-\thFlagT/2) -- (\thXtip,\thCylY-\thFlagT/2-1.5);        
	\draw[{Latex[length=1.5mm]}-{Latex[length=1.5mm]}]
		(\thCylX+\thCylR,\thCylY-\thFlagT/2-1.3) -- (\thXtip,\thCylY-\thFlagT/2-1.3);
	\node[below] at ({(\thCylX+\thCylR+\thXtip)/2}, \thCylY-\thFlagT/2-1.3) {$\thLabFl$};

	\draw[thin] (\thXtip+0.6,\thCylY+\thFlagT/2) -- (\thXtip+1.3,\thCylY+\thFlagT/2);
	\draw[thin] (\thXtip+0.6,\thCylY-\thFlagT/2) -- (\thXtip+1.3,\thCylY-\thFlagT/2);
	\draw[-{Latex[length=1.5mm]}] (\thXtip+1.0,\thCylY+\thFlagT/2+0.8) -- (\thXtip+1.0,\thCylY+\thFlagT/2);
	\draw[-{Latex[length=1.5mm]}] (\thXtip+1.0,\thCylY-\thFlagT/2-0.8) -- (\thXtip+1.0,\thCylY-\thFlagT/2);
	\node[right] at (\thXtip+1.3, \thCylY) {$\thLabFt$};

	\draw[thin] (0,0) -- (0,-2.0);                                    
	\draw[thin] (\thDomL,0) -- (\thDomL,-2.0);                 
	\draw[{Latex[length=1.5mm]}-{Latex[length=1.5mm]}] (0,-1.8) -- (\thDomL,-1.8);
	\node[below] at (\thDomL/2, -1.8) {$\thLabL$};

	\draw[thin] (Bp) -- ++(1.2,0);                                      
	\draw[thin] (Cp) -- ++(1.2,0);
	\draw[{Latex[length=1.5mm]}-{Latex[length=1.5mm]}] (\thDomL+\thDdx+1.0,\thDdy) -- (\thDomL+\thDdx+1.0,\thDdy+\thDomH);
	\node[right] at (\thDomL+\thDdx+1.0, \thDomH*0.5+\thDdy) {$\thLabH$};

	\draw[thin] (Bp) -- ++(0,-0.9);
	\draw[{Latex[length=1.5mm]}-{Latex[length=1.5mm]}] ($(B)+(0,-0.7)$) -- ($(Bp)+(0,-0.7)$);
	\node[below right] at ($(B)!0.5!(Bp)+(0,-0.7)$) {$D$};

\end{tikzpicture}}
\caption{Schematic of the computational domain and boundaries for the elastic bar in the wake of a circular cylinder \cite{2006_turek_FSIBench}.}
\label{fig:schematicTurekHron}
\end{figure}

	This configuration was introduced by \citet{2006_turek_FSIBench} as a benchmark for 
	FSI solvers and has since been widely used to assess coupling schemes. The geometric 
	and material properties adopted in the present study largely follow their setup, 
	with the domain extended spanwise over a width $D$. The fluid is modeled as 
	incompressible and Newtonian, governed by the Navier-Stokes equations (see Appendix.~
    \ref{sec:app_governing}), with density 
	$\rho^\mathrm{f} = 1000\,\mathrm{kg\,m^{-3}}$ and dynamic viscosity 
	$\mu^\mathrm{f} = 1.0\,\mathrm{Pa\,s}$. A stationary rigid cylinder of diameter 
	$D = 0.1\,\mathrm{m}$, centered at the origin, with a flexible bar attached to it, together forming the structural domain $\Omega^\mathrm{s}$. Both are 
	enclosed in a rectangular channel bounded by the inlet $\Gamma_\mathrm{in}$, the 
	outlet $\Gamma_\mathrm{out}$, and the top and bottom walls $\Gamma_\mathrm{top}$ and 
	$\Gamma_\mathrm{bottom}$. The 
	remaining geometric parameters are given in Fig.~\ref{fig:schematicTurekHron}. 
    
    A parabolic velocity profile is prescribed at the inlet as
	\begin{align}
		\boldsymbol{u}^\mathrm{f} &= (u_0, 0), \nonumber\\
		u_0 &= \frac{6\bar{U}g(t)}{(4.1D)^2}(y+2D)(2.1D - y),
	\end{align}
	where $\bar{U}$ is the mean inlet velocity and 
    \begin{equation}
        g(t) = \begin{cases}
            \dfrac{1-\cos\left(\pi t/2\right)}{2} & t \leq 2~\text{s} \\
            1.0 & \text{otherwise.}
        \end{cases}
    \end{equation} 
	ramps the inflow from rest. No-slip conditions are applied on the channel walls and 
	the cylinder surface, while the outlet is traction-free. Periodic conditions are 
	imposed on the spanwise boundaries, so that the three-dimensional mesh represents a 
	nominally two-dimensional flow.

	The bar is clamped to the cylinder, while its other boundaries are free to deform 
	and exposed to the fluid domain $\Omega^\mathrm{f}(t)$, forming the fluid-structure 
	interface $\Gamma^\mathrm{fs}$. The deforming bar's material is characterized by a Young's modulus 
	$E^\mathrm{s}$, a Poisson's ratio $\nu^\mathrm{s}$ and a density $\rho^\mathrm{s}$, 
	so that the problem is governed by the non-dimensional parameters
	\begin{equation}
		Re = \frac{\rho^\mathrm{f}\bar{U} D}{\mu^\mathrm{f}}, \quad 
		m^\star = \frac{\rho^\mathrm{s}}{\rho^\mathrm{f}}, \quad 
		Ae = \frac{E^\mathrm{s}}{\rho^\mathrm{f}\bar{U}^2},
	\end{equation}
	where $Re$ is the Reynolds number, $m^\star$ is the structure-to-fluid mass ratio, 
	and $Ae$ is the aeroelastic number. The structural response is monitored at the tip 
	of the bar, located at $(4D, 0)$ and marked as probe A in 
	Fig.~\ref{fig:schematicTurekHron}. A time step of 
	$\Delta t = 5\times 10^{-4}\,\mathrm{s}$ is used in all simulations.

	\subsubsection{Mesh convergence}\label{sec:meshConv}

    The coupled fluid-structure interaction simulation is conducted on three successively 
	refined meshes, $M1$, $M2$ and $M3$, summarized in Table~\ref{tab:fsi-iii}. The 
	relevant non-dimensional parameters used for all levels of mesh refinement are
	\begin{equation}
		Re = 200, \quad 
		m^\star = 1.0, \quad 
		Ae = 1.4\times10^{3},
	\end{equation}
	with the structural Poisson's ratio set to $\nu^\mathrm{s} = 0.4$. The number of 
	elements in both the fluid and structural domains approximately doubles between 
	successive mesh levels. In the fluid domain, a boundary layer mesh surrounds the 
	cylinder and the bar to resolve the near-wall gradients, while the wake is resolved 
	with in-plane isotropic elements, both extruded across the spanwise direction as prismatic elements. The 
    bar is discretized with a structured mesh that matches the fluid mesh along 
    the interface.

    \begin{table}[pos=H]
        \centering
        \begin{tabular}{l c c c c c c c c}
        \toprule
        \multirow{2}{*}{Mesh} & \multicolumn{2}{c}{Solid} & \multicolumn{2}{c}{Fluid} & $\Delta x$ & $\Delta y$ & $f_x$ & $f_y$ \\
         &     DOFs      &     Elements     &      DOFs     &     Elements     & $(\times 10^{-3}\,\mathrm{m})$ & $(\times 10^{-3}\,\mathrm{m})$ &
			$(\mathrm{Hz})$ & $(\mathrm{Hz})$ \\ 
        \midrule
        $M1$ & $2688$ & $385$ & $52816$ & $12860$ & $-2.57 \pm 2.42$ & $0.86 \pm 32.79$ &
			$11$ & $5.5$ \\
		$M2$ & $5214$ & $780$ & $94488$ & $23137$ & $-2.89 \pm 2.68$ & $0.68 \pm 34.94$ &
			$11$ & $5.5$ \\
		$M3$ & $10752$ & $1665$ & $179192$ & $44104$ & $-2.90 \pm 2.74$ & $1.12 \pm 35.14$ &
			$11$ & $5.5$ \\
		Reference \citep{2006_turek_FSIBench} & $98820$ & $5120$ & $705152$ & $36864$ &
			$-2.69 \pm 2.53$ & $1.48 \pm 34.38$ & $10.9$ & $5.3$ \\
        \bottomrule
        \end{tabular}
		\caption{Comparison of the displacement and frequency response at probe A against the FSI3 benchmark. Displacements are reported as mean $\pm$ amplitude.}
		\label{tab:fsi-iii}
    \end{table}

    Fig.~\ref{fig:meshConv} shows the displacement histories for the three mesh 
	refinements, recorded at the structural probe A. While all three meshes capture the 
	qualitative dynamics and the oscillatory response of the elastic bar, meshes $M2$ 
	and $M3$ yield nearly identical amplitudes, whereas $M1$ underpredicts the 
	cross-stream amplitude (see Fig.\ref{fig:dxconvFSI3}). The dominant frequencies, 
	$f_x = 11\,\mathrm{Hz}$ and $f_y = 5.5\,\mathrm{Hz}$, are insensitive to the 
	refinement, with the streamwise oscillation occurring at twice the frequency of the 
	cross-stream oscillation. 
    
	\begin{figure}[pos=H]
		\centering
		\begin{subfigure}[b]{0.49\linewidth}
			\centering
			\includegraphics[width=\linewidth]{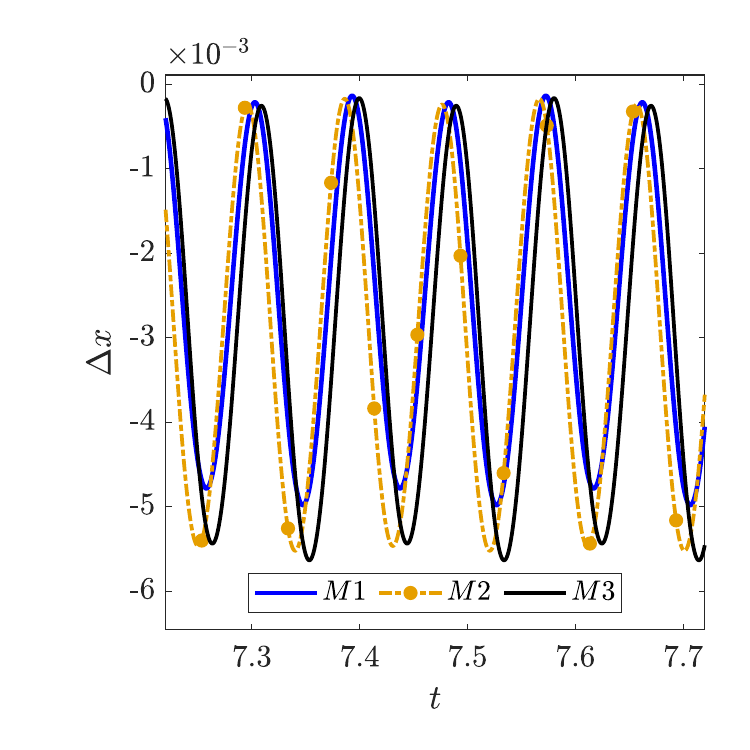}
			\caption{}
			\label{fig:dxconvFSI3}
		\end{subfigure}
		\hfill
		\begin{subfigure}[b]{0.49\linewidth}
			\centering
			\includegraphics[width=\linewidth]{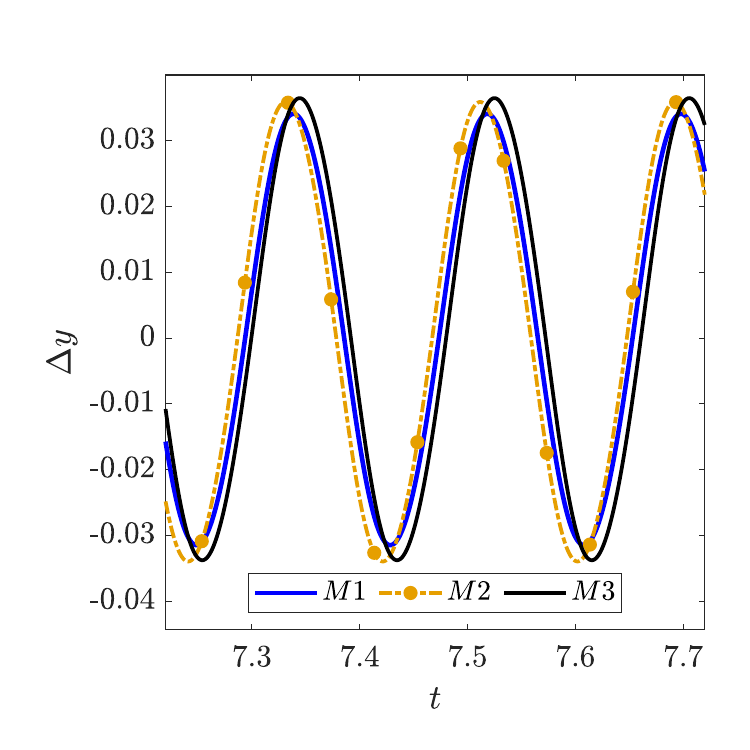}
			\caption{}
			\label{fig:dyconvFSI3}
		\end{subfigure}
		\caption{Temporal variation of (a) x-displacement and (b) y-displacement of the structural probe A (marked in Fig.~\ref{fig:schematicTurekHron}) evaluated for three
        distinct meshes.}
        
		\label{fig:meshConv}
	\end{figure}

    As summarized in Table~\ref{tab:fsi-iii}, the results on 
	$M2$ and $M3$ agree well with the reference solution of \citet{2006_turek_FSIBench},
    wherein a monolithic finite-element framework was used to numerically simulate
    the coupled problem. The y-displacement amplitude measured with $M3$ mesh lies within 
    approximately $2\%$ of the reference value. Consequently, the $M3$ mesh is 
    adopted in what is to follow.
	
	\subsubsection{Mass ratio variation}

    Next, the robustness of the proposed MWIS method is evaluated with respect to the 
	mass ratio. On mesh $M3$, the fluid and structural element sizes normal to the 
	interface give an aspect ratio of $r_\mathrm{fs} = 1.112$. Based on the spectral 
	characteristics established for the model problem in Table~\ref{tab:muSweep}, this 
	choice of $r_\mathrm{fs}$ is expected to challenge the stability of the iterations 
	once $m^\star$ reaches $\mathcal{O}(10^{-1})$. Accordingly, five simulations are 
	conducted with $m^\star \in \{0.05, 0.1, 0.2, 0.5, 1.0\}$, while all other 
	non-dimensional parameters are kept the same as in Sec.~\ref{sec:meshConv}.
	
	The pressure residual ratio at the $k^\mathrm{th}$ iteration is defined as
	\begin{equation}
		r_p^{(k)} = \frac{\displaystyle \sum_{n \in N_p} \left| R_n^{(k)}
			\right|}{\displaystyle \sum_{e = 1}^{n_{el}}\sum_{j =
				1}^{n_{n,el}}\left|r_j^{e,(k)}\right|},
	\end{equation} where $r_j^{e,(k)}$ is the contribution to the continuity
	equation to the local node $j$ of element $e$, and $R_n^{(k)}$ is the global
	residual at the pressure degree of freedom $n$. For the experiments in the
	current section, the nonlinear iterations within a time step are terminated when
	
	\begin{equation}
		r_p^{(k)} \;<\; \epsilon_{\text{tol}}
		\quad \text{and} \quad k \leq k_{\max} ,
	\end{equation}
	where $\epsilon_{\text{tol}}$ is the prescribed tolerance and $k_{\max}$ is the
	maximum number of nonlinear iterations allowed per time-step, which are set to
    $10^{-4}$ and $50$, respectively.
	
	As shown in Fig.~\ref{fig:itersMstarFSI3}, the coupled iterations converge rapidly 
	for $m^\star > 0.1$, requiring on average 4, 6 and 9 iterations per time step 
	for $m^\star = 0.2$, $0.5$ and $1.0$, respectively, compared with 18 at 
	$m^\star = 0.1$. The spectral radius of the iteration is estimated from the last 
	four iterates before convergence, and its temporal variation is presented in 
	Fig.~\ref{fig:rhoMstarFSI3}. The estimate remains well below unity for 
	$m^\star \geq 0.5$, but approaches unity for $m^\star = 0.2$ and $0.1$, which
    is expected. At $m^\star = 0.1$, the augmentation provided by $\tilde{C}$ can
    be observed to be close to the stability limit, restricting both the convergence rate 
    and its capacity to mitigate the added-mass effect. At $m^\star = 0.05$, the 
    iteration cap $k_{\max}$ is reached at each time step and the residual reduction is 
    insufficient to prevent error accumulation, leading to slow divergence.	
	\begin{figure}[pos=H]
		\centering
		\begin{subfigure}[b]{0.49\linewidth}
			\centering
			\includegraphics[trim={2cm 1.5cm 0.0cm 0cm}, clip,width=\textwidth]{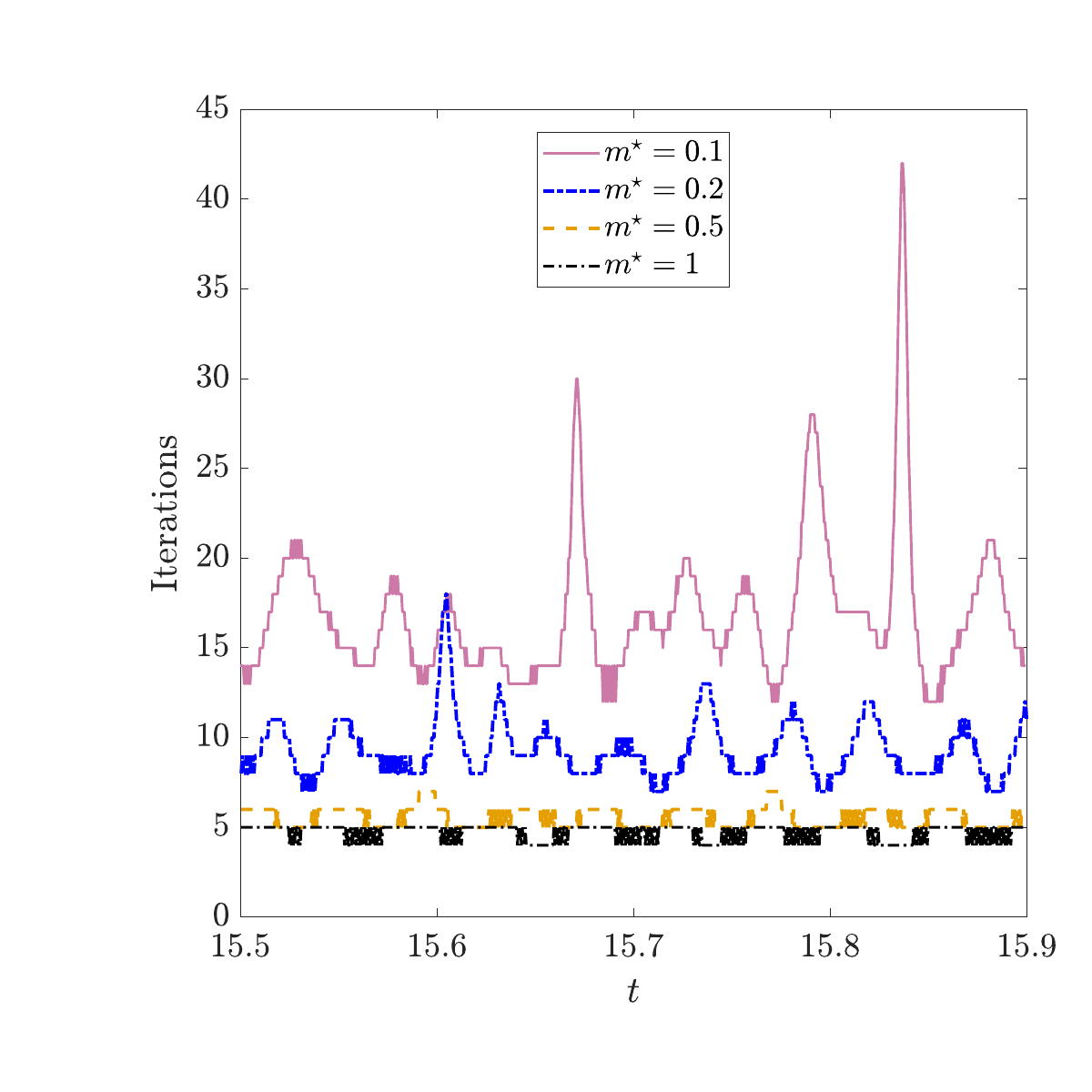}
			\caption{}
			\label{fig:itersMstarFSI3}
		\end{subfigure}
		\hfill
		\begin{subfigure}[b]{0.49\linewidth}
			\centering
			\includegraphics[trim={2cm 1.5cm 0.0cm 0cm}, clip,width=\textwidth]{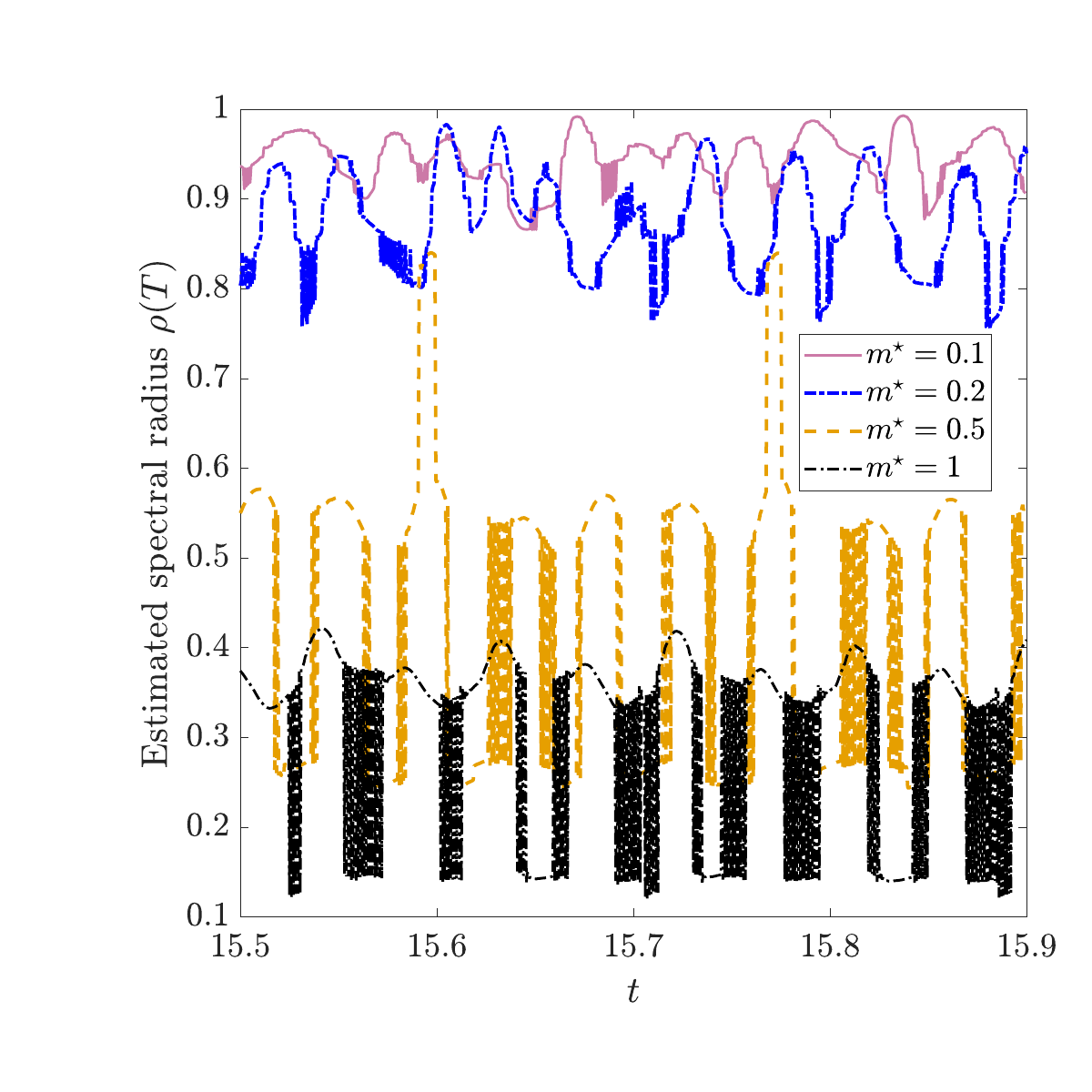}
			\caption{}
			\label{fig:rhoMstarFSI3}
		\end{subfigure}
		\caption{Convergence of the nonlinear iterations over a representative interval $t \in [15.5, 15.9]$ for mass ratios $m^\star = 0.1$, $0.2$, $0.5$ and $1$, showing (a) the number of iterations required per time step and (b) the estimated spectral radius of the nonlinear iterations.}
		\label{fig:iterdata}
	\end{figure} 
	

    The dynamical characteristics for these lower mass ratios are depicted in 
	Fig.~\ref{fig:disp}. From Figs.~\ref{fig:disp}a and \ref{fig:disp}c, regular, nearly 
    sinusoidal oscillations can be observed for $0.2 \leq m^\star \leq 1$, with the 
	$x$-displacement oscillating at twice the frequency of the $y$-displacement. As 
	$m^\star$ decreases, the reduced structural inertia relative to the fluid lowers the 
	oscillation amplitude, most noticeably between $m^\star = 1$ and $0.5$, while the 
	dominant frequency remains nearly unchanged. The corresponding spectra in 
	Figs.~\ref{fig:disp}b and \ref{fig:disp}d confirm a single dominant peak in each 
	direction for $m^\star \geq 0.2$. At $m^\star = 0.1$, additional sub-harmonic and 
	super-harmonic components emerge in the $x$-displacement spectrum, indicating a 
    departure from the purely sinusoidal response while the motion remains periodic.
	
	\begin{figure}[pos=H]
		\centering
		\begin{subfigure}[b]{0.45\linewidth}
			\centering
			\includegraphics[trim={2cm 1.5cm 0.0cm 0cm}, clip,width=\textwidth]{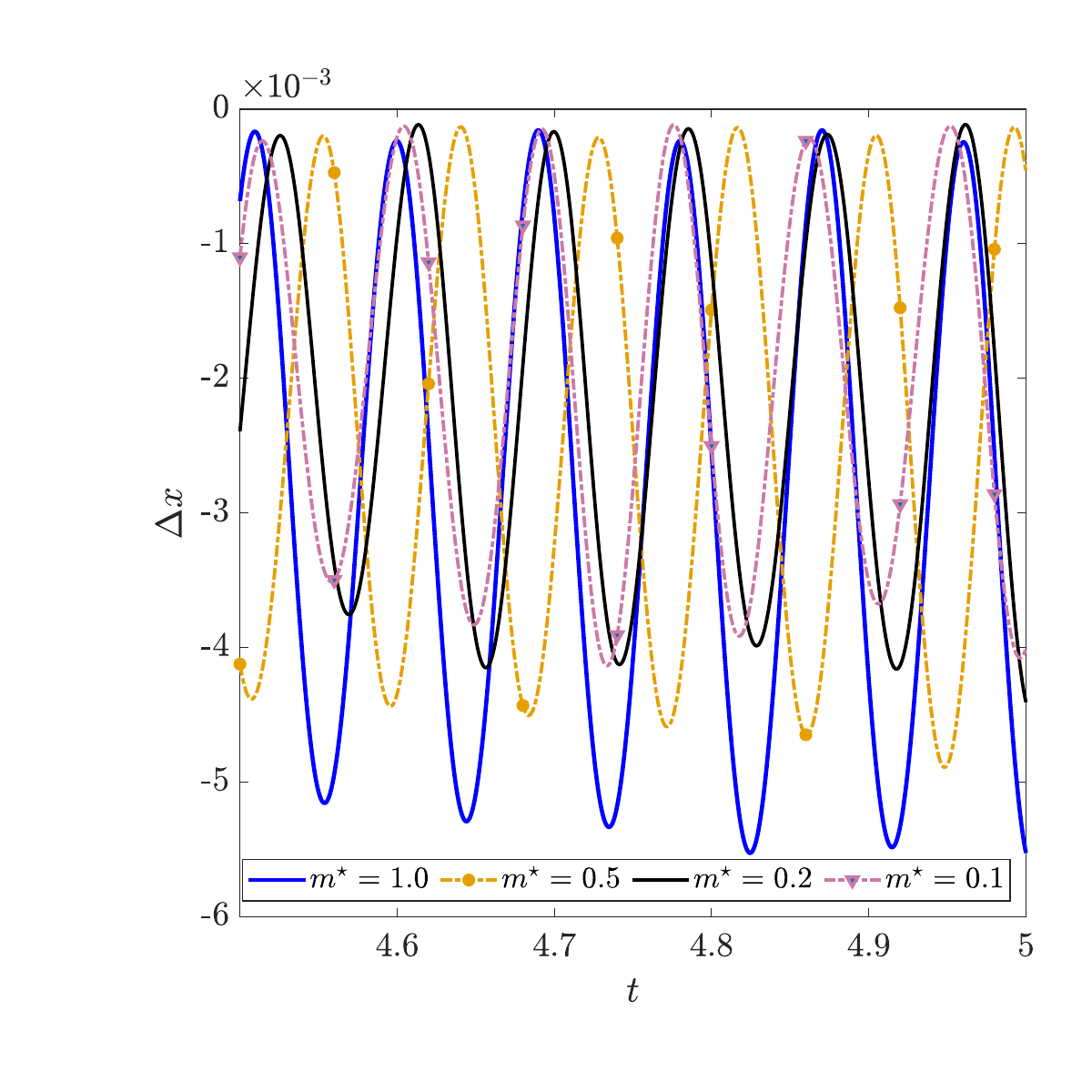}
			\caption{}
			\label{fig:dxMstarFSI3}
		\end{subfigure}
		\hfill
		\begin{subfigure}[b]{0.45\linewidth}
			\centering
			\includegraphics[trim={2cm 1.5cm 0.0cm 0cm}, clip,width=\textwidth]{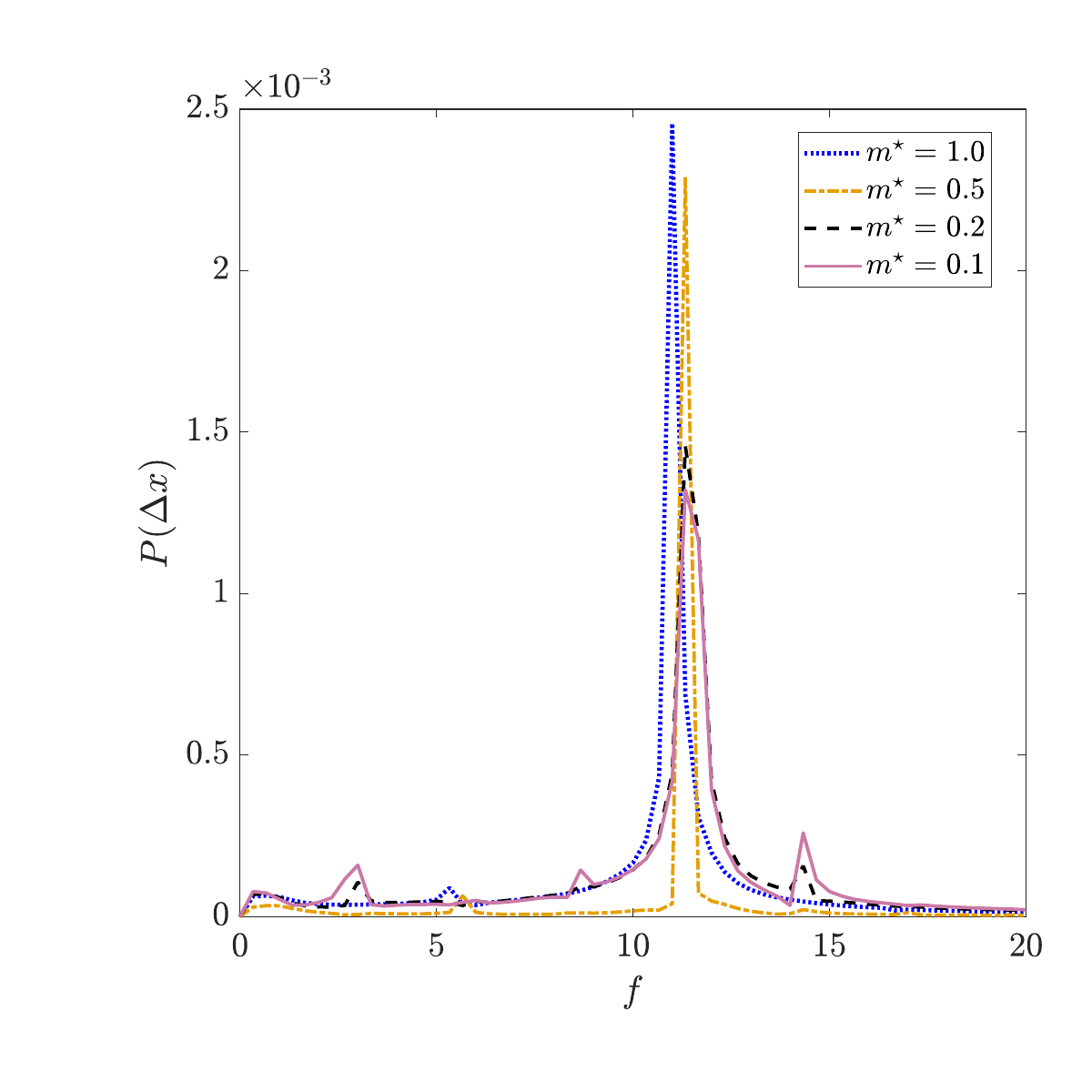}
			\caption{}
			\label{fig:fdxMstarFSI3}
		\end{subfigure}
		\begin{subfigure}[b]{0.45\linewidth}
			\centering
			\includegraphics[trim={2cm 1.5cm 0.0cm 0cm}, clip,width=\textwidth]{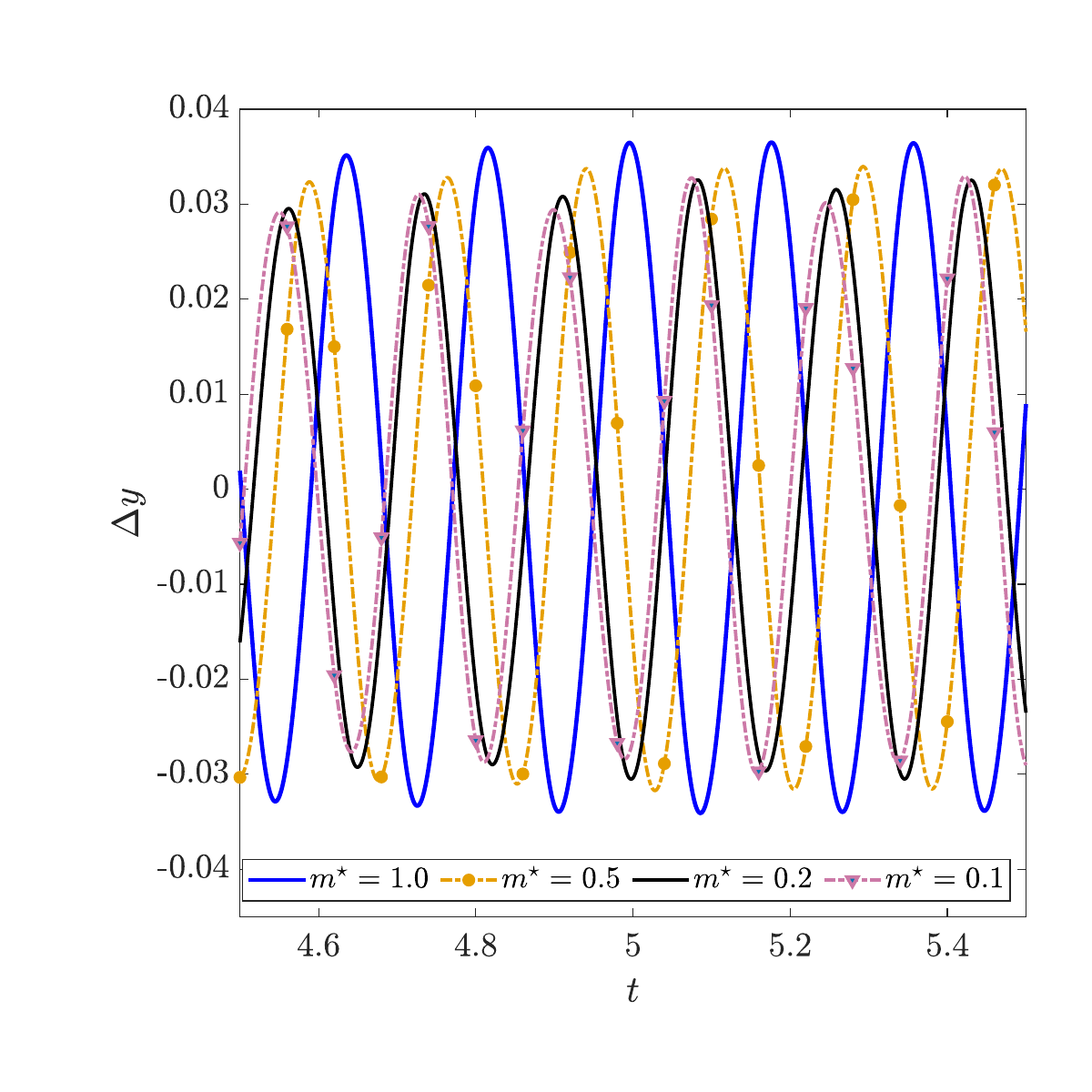}
			\caption{}
			\label{fig:dyMstarFSI3}
		\end{subfigure}
		\hfill
		\begin{subfigure}[b]{0.45\linewidth}
			\centering
			\includegraphics[trim={2cm 1.5cm 0.0cm 0cm}, clip,width=\textwidth]{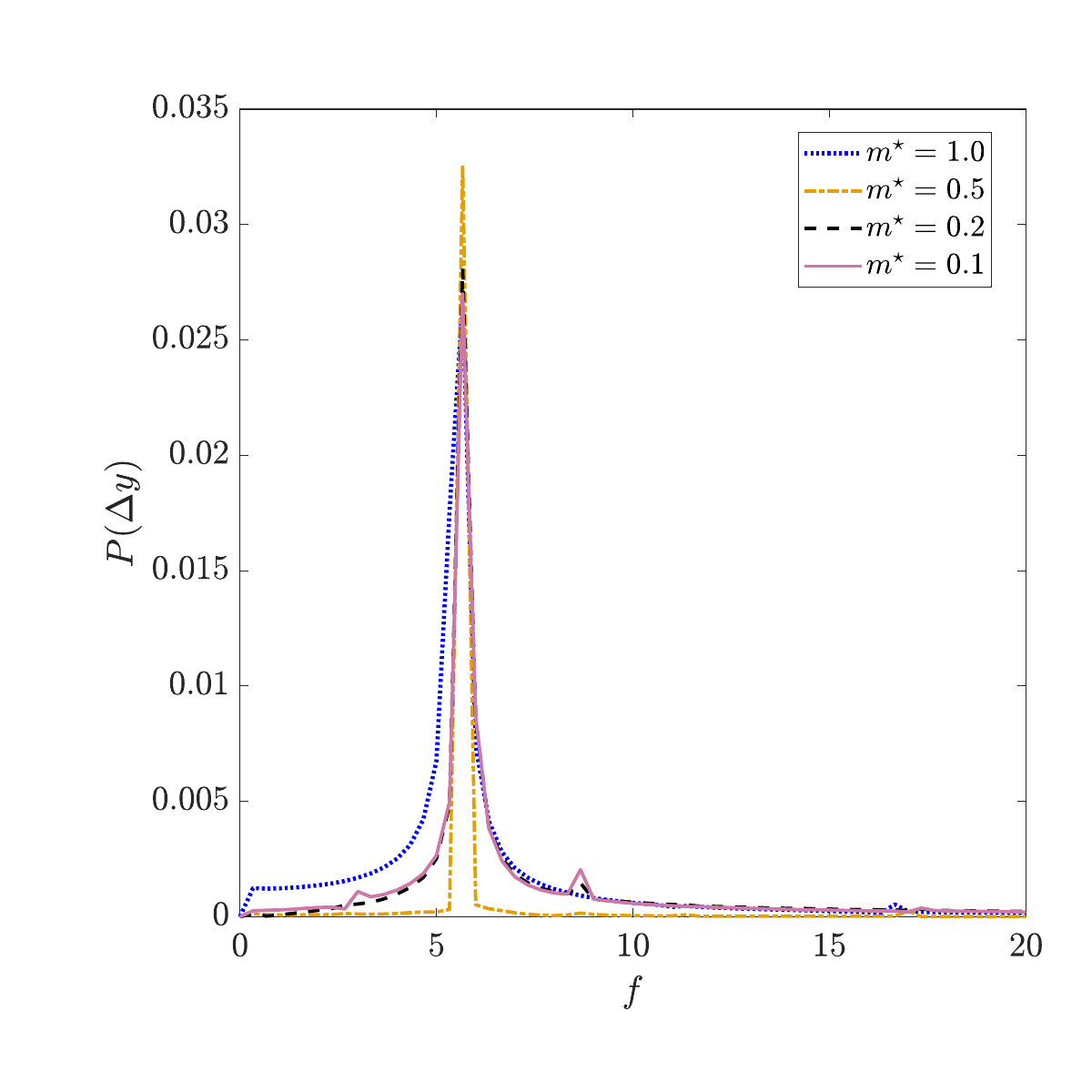}
			\caption{}
			\label{fig:fdyMstarFSI3}
		\end{subfigure}
        \caption{Tip displacement response of the structural probe A (see Fig.~\ref{fig:schematicTurekHron}) for different mass ratios $m^\star$, showing the (a) $x$-displacement and (c) $y$-displacement over a representative interval, with (b) and (d) the corresponding frequency spectra. The dominant frequency of the $x$-displacement is twice that of the $y$-displacement, and sub-harmonic and super-harmonic components emerge as $m^\star$ is reduced.}
		\label{fig:disp}
	\end{figure}
	
	\subsubsection{Relative mesh ratio variation}\label{sec:AR}

    Finally, the robustness of the MWIS scheme is assessed with respect to the interface 
	discretization. Since $\tilde{C}$ is constructed from the element metrics at the 
	interface, changing $r_\mathrm{fs}$ alters the stabilization introduced by the 
	augmentation, while the physical added-mass effect remains unchanged. This effect is 
	isolated by fixing the mass ratio at $m^\star = 1$ and adjusting the fluid element 
	size normal to the bar to obtain $r_\mathrm{fs} \in \{0.25, 0.5, 1.0, 2.0\}$. 
    
	\begin{figure}[pos=H]
		\centering
		\begin{subfigure}[b]{0.45\linewidth}
			\centering
			\includegraphics[trim={2cm 1.5cm 0.0cm 0cm}, clip,width=\textwidth]{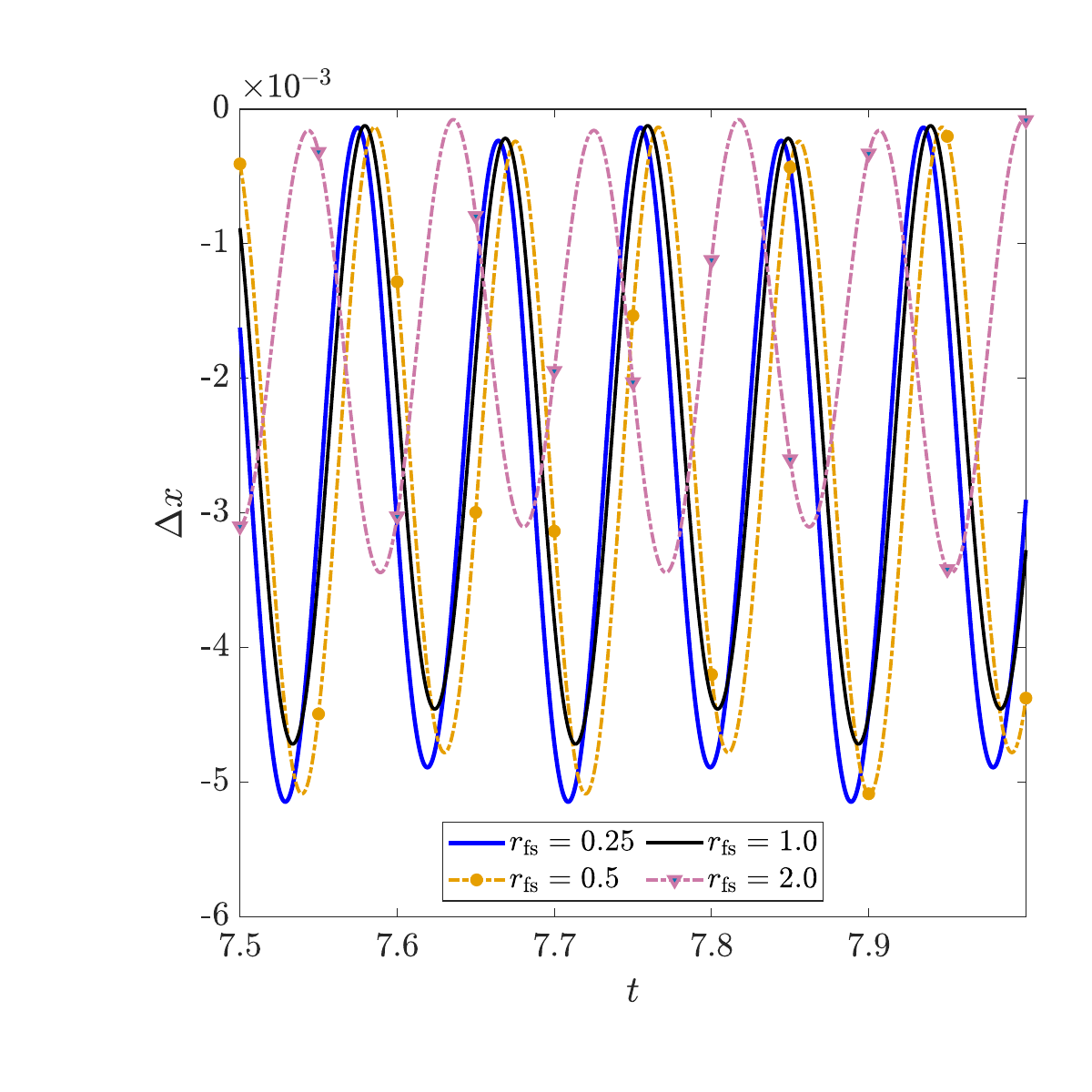}
			\caption{}
			\label{fig:dxARFSI3}
		\end{subfigure}
		\hfill
		\begin{subfigure}[b]{0.45\linewidth}
			\centering
			\includegraphics[trim={2cm 1.5cm 0.0cm 0cm}, clip,width=\textwidth]{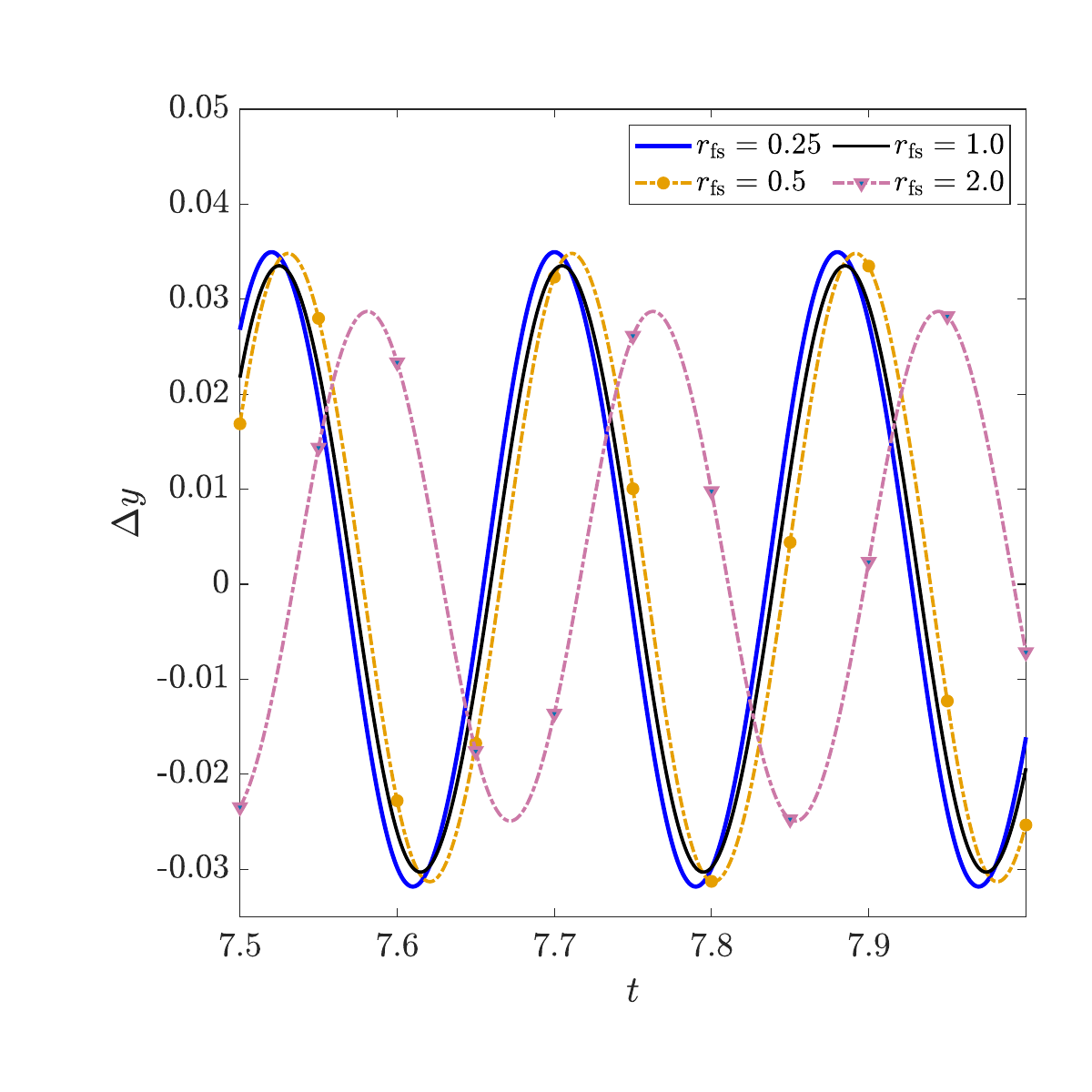}
			\caption{}
			\label{fig:dyARFSI3}
		\end{subfigure}
		\caption{Temporal variation of (a) x-displacement and (b) y-displacement of the structural probe A (marked in Fig.~\ref{fig:schematicTurekHron}).}
        
		\label{fig:dispAR}
	\end{figure}
    
    Fig.~\ref{fig:dispAR} illustrates the displacement histories recorded at the
    structural probe A in the developed periodic regime. For $r_\mathrm{fs} \leq 1$, 
    the responses are nearly identical in amplitude and frequency. The cross-stream 
    oscillation occurs at approximately $5.5$~Hz and the streamwise oscillation at 
    twice that frequency, consistent with Table~\ref{tab:fsi-iii}. Refining 
    the fluid mesh at the interface therefore leaves the computed response unchanged, 
    indicating that the augmentation does not alter the converged solution. 
    
    For $r_\mathrm{fs} = 2.0$, the coupling remains stable and the oscillation frequency 
    is preserved. However, the cross-stream amplitude decreases and the mean streamwise
    displacement shifts toward zero. This loss of 
	accuracy is attributed to the coarser fluid elements adjacent to the bar, which 
	under-resolve the boundary layer and hence the fluid loading on the structure. The 
	choice of $r_\mathrm{fs}$ is therefore governed jointly by the flow resolution 
	required near the interface and by the constraints it imposes on the stabilization.
    
	
	\subsection{Flapping dynamics of a thin flexible plate} \label{sec:fsiFlexFlag}
    
	In the current section, the flapping phenomenon associated with the deformation of a thin flexible plate is evaluated. This application further demonstrates the robustness of the MWIS scheme for tackling problems where structural slenderness drives the added-mass effect. At low mass ratios, the strong added-mass effect governs the stability regime of the plate as illustrated in \citet{LIU2014687}. For this FSI problem, the non-dimensional parameters governing the problem are
	\begin{equation}
		Re = \frac{\rho^\mathrm{f}U_\infty L}{\mu^\mathrm{f}}, m^\star =
		\frac{\rho^\mathrm{s}t^\mathrm{s}}{\rho^\mathrm{f}L}, K_\mathrm{B}
		=\frac{E^\mathrm{s}I}{\rho^\mathrm{f}U_\infty^2L^3b}, H =
		\frac{t^\mathrm{s}}{L},
	\end{equation} where $Re$ denotes the Reynolds number, $m^\star$ is the mass
	ratio of the flexible plate with the surrounding fluid,
	$K_\mathrm{B}$ is the modulus of bending rigidity, and $H$ is the ratio of 
thickness $t^\mathrm{s}$ to length $L$ of the plate.
	$I$ is the area moment of inertia of the plate defined as $bh^3/12(1-\nu^2)$,
	where $b$ is the width of the plate, $\nu$ is the Poisson's ratio
	and $E^\mathrm{s} $ is the Young's modulus of the material constituting the
	flexible plate. 
	

\begin{figure}[pos=H]
	\centering
\def\fpDomL{24}     
\def\fpDomH{10}     
\def\fpDdx{1.6}     
\def\fpDdy{1.0}     
\def\fpExt{0.3}     
\def\fpXc{6}        
\def\fpYc{5}        
\def\fpPlL{5}       
\def\fpPlT{0.2}     
\def\fpInLen{1.5}   
\scalebox{1.0}{
\begin{tikzpicture}[scale=0.45, every node/.style={font=\small},
	depth/.style={dash pattern=on 1.5pt off 1pt},
	dim/.style={{Latex[length=1.5mm]}-{Latex[length=1.5mm]}}]

	\pgfmathsetmacro{\fpXt}{\fpXc+\fpPlL}   

	\coordinate (A)  at (0,0);
	\coordinate (B)  at (\fpDomL,0);
	\coordinate (C)  at (\fpDomL,\fpDomH);
	\coordinate (Ff) at (0,\fpDomH);
	\coordinate (Ap) at (\fpDdx,\fpDdy);
	\coordinate (Bp) at (\fpDomL+\fpDdx,\fpDdy);
	\coordinate (Cp) at (\fpDomL+\fpDdx,\fpDomH+\fpDdy);
	\coordinate (Fp) at (\fpDdx,\fpDomH+\fpDdy);

	\draw[dashed, thin] (Ap) -- (Bp);
	\draw[dashed, thin] (A) -- (Ap);
	\draw[dashed, thin] (Ap) -- (Fp);

	\draw[thick] (A) -- (B) -- (C) -- (Ff) -- cycle;
	\draw[thick] (Ff) -- (Fp) -- (Cp) -- (C);
	\draw[thick] (B) -- (Bp) -- (Cp);

	\node[above] at (\fpDomL/2, \fpDomH-0.1) {$\Gamma_\mathrm{top}$};
	\node[below] at (\fpDomL/2, -0.3) {$\Gamma_\mathrm{bottom}$};
	\node[left]  at (-0.3, \fpDomH/2) {$\Gamma_\mathrm{in}$};
	\node at (\fpDomL+\fpDdx/2, \fpDomH/2+\fpDdy/2) {$\Gamma_\mathrm{out}$}; 

	\foreach \y in {0.5,1.5,...,9.5}{
		\draw[-{Latex[length=1.5mm]}] (0,\y) -- (\fpInLen,\y);
	}
	\node[above right] at (\fpInLen+0.2, \fpDomH/2+0.1) {$U_\infty$};

	\coordinate (P1) at (\fpXc,\fpYc-\fpPlT/2);
	\coordinate (P2) at (\fpXt,\fpYc-\fpPlT/2);
	\coordinate (P3) at (\fpXt,\fpYc+\fpPlT/2);
	\coordinate (P4) at (\fpXc,\fpYc+\fpPlT/2);
	\coordinate (P2p) at ($(P2)+(\fpExt*\fpDdx,\fpExt*\fpDdy)$);
	\coordinate (P3p) at ($(P3)+(\fpExt*\fpDdx,\fpExt*\fpDdy)$);
	\coordinate (P4p) at ($(P4)+(\fpExt*\fpDdx,\fpExt*\fpDdy)$);
	\fill[cyan] (P1) -- (P2) -- (P3) -- (P4) -- cycle;                 
	\fill[cyan!70] (P4) -- (P3) -- (P3p) -- (P4p) -- cycle;             
	\fill[cyan!55] (P2) -- (P3) -- (P3p) -- (P2p) -- cycle;             
	\draw[thick] (P1) -- (P2) -- (P3) -- (P4) -- cycle;                 
	\draw[depth] (P4p) -- (P3p) -- (P2p);                               
	\draw[depth] (P4) -- (P4p);                                         
	\draw[depth] (P3) -- (P3p);
	\draw[depth] (P2) -- (P2p);

	\draw[thin] (P1) -- ++(0,-1.5);
	\draw[thin] (P2) -- ++(0,-1.5);
	\draw[dim] (\fpXc,\fpYc-\fpPlT/2-1.3) -- (\fpXt,\fpYc-\fpPlT/2-1.3);
	\node[below] at (\fpXc+\fpPlL/2, \fpYc-\fpPlT/2-1.3) {$L$};

	\draw[thin] (\fpXt+0.6,\fpYc+\fpPlT/2) -- (\fpXt+1.3,\fpYc+\fpPlT/2);
	\draw[thin] (\fpXt+0.6,\fpYc-\fpPlT/2) -- (\fpXt+1.3,\fpYc-\fpPlT/2);
	\draw[-{Latex[length=1.5mm]}] (\fpXt+1.0,\fpYc+\fpPlT/2+0.8) -- (\fpXt+1.0,\fpYc+\fpPlT/2);
	\draw[-{Latex[length=1.5mm]}] (\fpXt+1.0,\fpYc-\fpPlT/2-0.8) -- (\fpXt+1.0,\fpYc-\fpPlT/2);
	\node[right] at (\fpXt+1.3, \fpYc) {$t^{\mathrm{s}}$};

	\draw[thin] (\fpXc,\fpYc) -- (\fpXc-1.1,\fpYc);
	\draw[dim] (\fpXc-0.8,0) -- (\fpXc-0.8,\fpYc);
	\node[left] at (\fpXc-0.85, \fpYc/2) {$\fpYc L$};

	\draw[thin] (P1) -- (\fpXc,-0.95);
	\draw[dim] (0,-0.75) -- (\fpXc,-0.75);
	\node[below] at (\fpXc/2, -0.75) {$\fpXc L$};

	\draw[thin] (0,0) -- (0,-2.0);                                    
	\draw[thin] (\fpDomL,0) -- (\fpDomL,-2.0);                 
	\draw[dim] (0,-1.8) -- (\fpDomL,-1.8);
	\node[below] at (\fpDomL/2, -1.8) {$\fpDomL L$};

	\draw[thin] (Bp) -- ++(1.2,0);                                      
	\draw[thin] (Cp) -- ++(1.2,0);
	\draw[dim] (\fpDomL+\fpDdx+1.0,\fpDdy) -- (\fpDomL+\fpDdx+1.0,\fpDdy+\fpDomH);
	\node[right] at (\fpDomL+\fpDdx+1.0, \fpDomH*0.5+\fpDdy) {$\fpDomH L$};

	\draw[thin] (Bp) -- ++(0,-0.9);
	\draw[dim] ($(B)+(0,-0.7)$) -- ($(Bp)+(0,-0.7)$);
	\node[below right] at ($(B)!0.5!(Bp)+(0,-0.7)$) {$W$};

\end{tikzpicture}}
\caption{Schematic and computational domain of the elastically deforming thin flexible plate in a free flow.}
\label{fig:schematicFlexPlate}
\end{figure}
	
	Fig.\ref{fig:schematicFlexPlate} illustrates the domain and boundary conditions
	associated with the FSI problem. The structural domain $\Omega^\mathrm{s}$
	consists of the thin flexible plate of length $L$ with the leading edge clamped at $(-L,0)$, and thickness $t^\mathrm{s} = 0.01L$. With the leading edge clamped, the other faces of the plate are treated as part of the fluid-structure interface, which is in contact with the fluid domain. The fluid domain is rectangular with dimensions $[-4L, 20L]\times[-5L, 5L]$, with the channel walls located symmetrically about the flexible plate and denoted by $\Gamma^\mathrm{f}_\mathrm{top}$ and $\Gamma^\mathrm{f}_\mathrm{bottom}$ respectively. On the side walls of the channel, a symmetric boundary condition is applied. The inlet is located at $x = -4L$, where a freestream velocity $\boldsymbol{u}^\mathrm{f} = (U_\infty, 0)$ is prescribed. A traction-free outflow
	boundary condition is applied at the channel exit ($\Gamma_\mathrm{out}$). For all results in the current section, the dimensional quantities are $L = 0.1\text{ m}$, $\rho^\mathrm{f} = 1000\text{ kgm}^{-3}$, and the transverse width of the plate is $b = 0.1\text{ m}$. 

    \begin{figure}[pos=H]
		\centering
		\begin{subfigure}[b]{0.49\linewidth}
			\centering
			\includegraphics[trim={2cm 1.5cm 0.0cm 0cm}, clip,width=\textwidth]{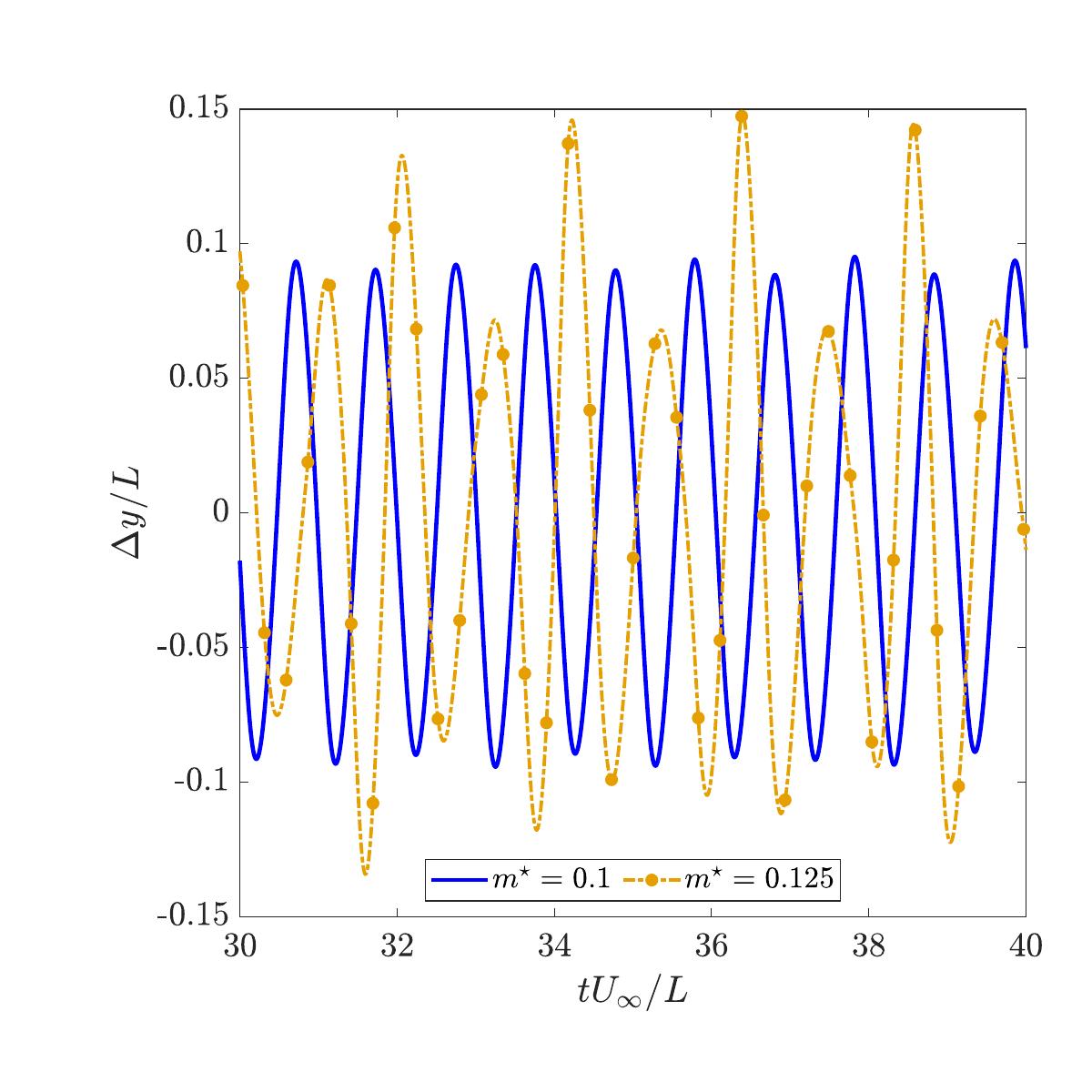}
			\caption{}
			\label{fig:itersMstarflag}
		\end{subfigure}
		\hfill
		\begin{subfigure}[b]{0.49\linewidth}
			\centering
			\includegraphics[trim={2cm 1.5cm 0.0cm 0cm}, clip,width=\textwidth]{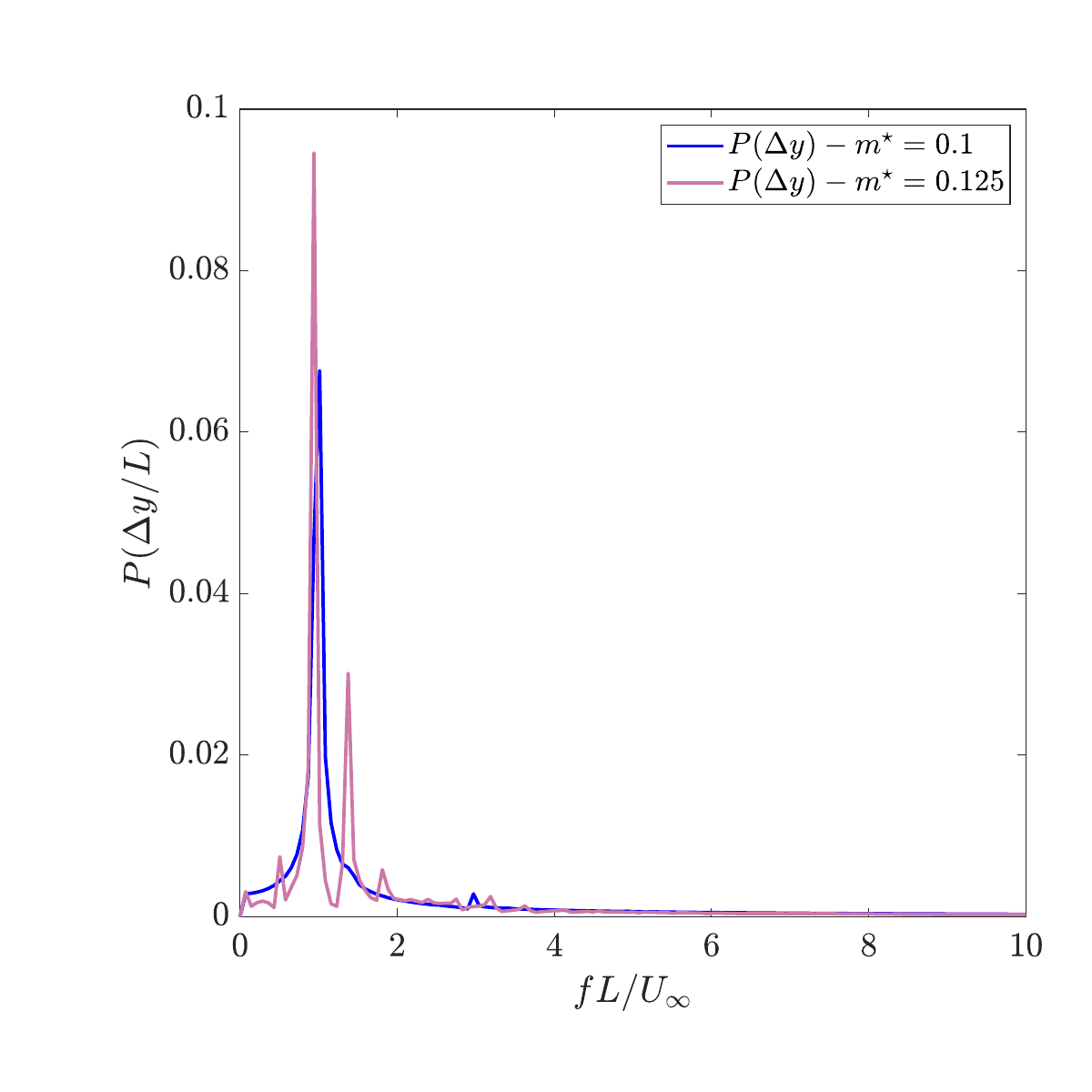}
			\caption{}
			\label{fig:rhoMstarflag}
		\end{subfigure}
		\caption{(a) Temporal variation of the cross-stream displacement of the trailing edge of the slender plate. (b) Frequency spectrum of the cross-stream displacement at $Re= 1000$ for $K_B = 10^{-4}$ and varying the mass ratio $m^\star$ }
		\label{fig:iterdata2}
	\end{figure}

    The effects of the mass ratio on the flapping dynamics of the slender plate are illustrated at $Re = 1000$ through two different values of $m^\star$. The chosen mass ratios correspond to the unsteady flapping response regime \citep{LIU2014687}. As illustrated in Fig.\ref{fig:itersMstarflag}, for $m^\star = 0.1$ regular sinusoidal oscillations are observed, wherein the dominant frequency is synchronized with the flow characteristic frequency. In contrast, the response observed when $m^\star = 0.125$ corresponds to the chaotic flapping regime. While the dominant frequency synchronizes with the flow characteristic frequency, a range of higher harmonics are excited. 
    
    As illustrated in Fig. \ref{fig:timeseries}, the slender plate exhibits a distinct multimodal response, which is reflected in the von K\'arm\'an vortices shed into the wake. In the first half of the cycle, $L/6$ of the plate in proximity to the trailing edge deforms, and the detachment of the shear layer results in a clockwise vortex being shed. Consequently, the deformed section of the plate rapidly moves downward, where the pairing counter-clockwise vortex is shed. In the second half of the cycle the trailing $2L/3$ section of the plate is excited wherein a travelling wave appears to propagate along the plate resulting in the detachment of a highly elongated shear layer. Thus the multimodal motion of the plate can be tied to these two distinct mechanisms of vortex shedding. Further analysis pertaining to the addded-mass coefficient and correlation to the wetted structural modes is necessary to characterize the origin of the higher harmonics, and is beyond the scope of the current work. A comprehensive review of flapping bodies interacting with fluid flows can be found in \citep{2011_ARFM}.

    \begin{figure}[pos=H]
		\centering
		\includegraphics[width=\linewidth]{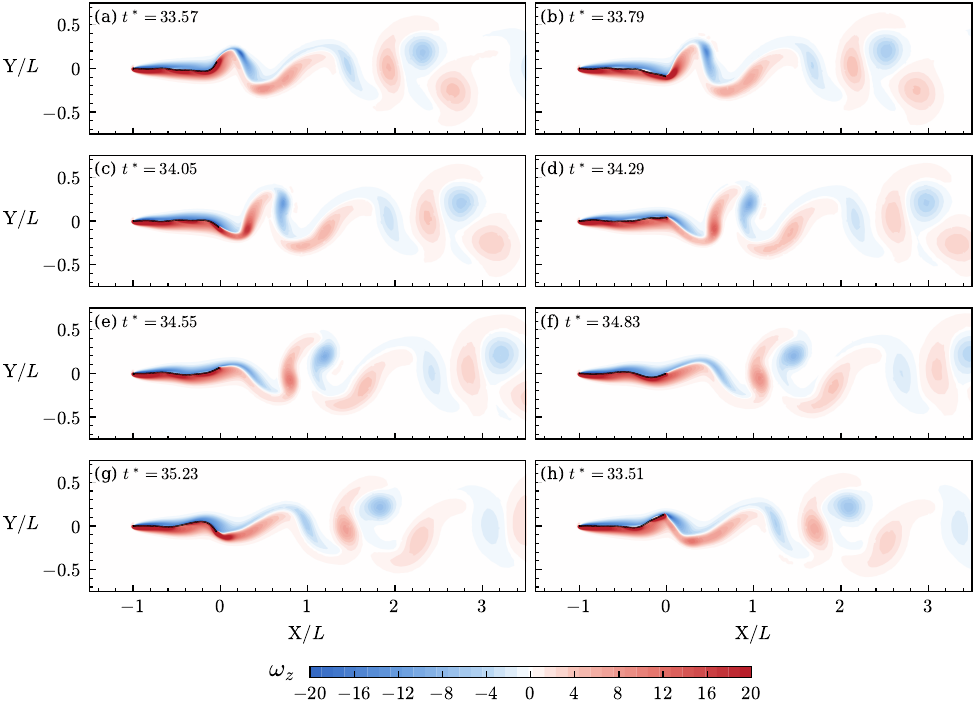}
		\hfill
		\caption{Temporal evolution of the von K\'arm\'an vortices in the wake of the elastically deforming thin flexible plate for $Re = 1000$, $m^\star = 0.125$ and $K_B = 10^{-4}$.  }
		\label{fig:timeseries}
	\end{figure}

\section{Conclusions}\label{sec:conclusion}
In this work, a momentum-weighted interface stabilization (MWIS) method is developed for Dirichlet-Neumann partitioned fluid-structure interaction (FSI) schemes to mitigate added-mass-induced instabilities. An eigenvalue analysis of a simplified linear FSI problem reveals that the instability originates from the omission of the structural interface response in the partitioned fluid subproblem and demonstrates how an appropriate interface augmentation can improve convergence. The analysis is subsequently extended to the general linearized FSI problem, where a comparison of the interface pressure Schur complements of the monolithic and partitioned systems identifies the source of the instability and establishes spectral criteria for the augmentation operator. The augmented scheme is consistent by construction, as the stabilization contribution vanishes at the coupled fixed point, recovering the monolithic solution. Guided by this analysis, a diagonal approximation of the momentum-weighted stabilization operator is developed using fluid and structural interface inertia and element metrics. The operator has a memory footprint comparable to the data exchanged at the interface, enabling implementation with limited additional storage and minimal modifications to existing fluid and structural solvers. For the simplified linear model, the idealized augmentation predicts single-iteration convergence in the inertia-dominated regime under the stated assumptions. Numerical FSI benchmarks demonstrate improved coupling robustness at low structural-to-fluid mass ratios, where conventional Dirichlet-Neumann iterations may diverge. Future work will examine the stability and accuracy of MWIS for nonmatching interface discretizations, three-dimensional configurations, and strongly nonlinear FSI problems. Further opportunities include the design of efficient interface preconditioners and adaptive stabilization operators to reduce coupling iterations while preserving consistency with the monolithic formulation.

\section*{Acknowledgments}
The authors gratefully acknowledge the financial support of the Natural Sciences and Engineering Research Council of Canada (NSERC) through the Alliance program, Mitacs, and ANSYS Canada. Computational resources were provided by the Shared Hierarchical Academic Research Computing Network (SHARCNET) and the Digital Research Alliance of Canada (formerly Compute Canada).

\printcredits

\section*{Competing interests}
The authors declare no competing financial interests.
    
	\appendix
	\section{Appendix A. Variational Formulation and Linearization Scheme}
	\label{app1}
\subsection{Governing equations}
\label{sec:app_governing}
As detailed in Sec.~\ref{sec:simplifiedFSI}, the fluid occupies a physical domain $\Omega^\mathrm{f}(\mathbf{x}^\mathrm{f},t)$ confined within the fluid boundary $\Gamma^\mathrm{f}(t)$, where $\mathbf{x}^\mathrm{f}$ and $t$ denote the spatial and temporal coordinates respectively. A Newtonian fluid with density $\rho^\mathrm{f}$ and dynamic viscosity $\mu^\mathrm{f}$ is assumed to occupy the fluid domain, whose motion is governed by the incompressible Navier--Stokes equations. In an Arbitrary Lagrangian-Eulerian (ALE) frame, the equations are as follows:
\begin{align}
	\rho^\mathrm{f}\left(\left.\frac{\partial\boldsymbol{u}^\mathrm{f}}{\partial t}\right|_{\hat{\mathbf{x}}} + \left(\boldsymbol{u}^\mathrm{f}-\boldsymbol{w}^\mathrm{f}\right)\cdot\nabla\boldsymbol{u}^\mathrm{f}\right) &= \nabla\cdot\boldsymbol{\sigma}^\mathrm{f} + \boldsymbol{f}^\mathrm{f}, && \text{in }\Omega^\mathrm{f}(t), \label{eq:app_fluidmom}\\
	\nabla\cdot\boldsymbol{u}^\mathrm{f} &= 0, && \text{in }\Omega^\mathrm{f}(t), \label{eq:app_fluidcont}
\end{align} where $\hat{\mathbf{x}}$ denotes the referential coordinate system, $\boldsymbol{u}^\mathrm{f} = \boldsymbol{u}^\mathrm{f}(\mathbf{x}^\mathrm{f},t)$ is the fluid velocity at each spatial location $\mathbf{x}^\mathrm{f} \in \Omega^\mathrm{f}$ and $\boldsymbol{w}^\mathrm{f}$ is the relative velocity of the spatial coordinates $\mathbf{x}^\mathrm{f}$ with respect to the referential coordinate system $\hat{\mathbf{x}}$. $\boldsymbol{f}^\mathrm{f}$ is the body force applied on the fluid, and $\boldsymbol{\sigma}^\mathrm{f}$ is the Cauchy stress tensor defined for a Newtonian fluid as
\begin{equation}
	\boldsymbol{\sigma}^\mathrm{f}=-p^\mathrm{f}\boldsymbol{I}+\mu^\mathrm{f}\left(\nabla\boldsymbol{u}^\mathrm{f}+(\nabla\boldsymbol{u}^\mathrm{f})^T\right)
\end{equation} where $p^\mathrm{f}$ denotes the fluid pressure. 
Similarly, the structure occupies the reference domain $\Omega^\mathrm{s}(\mathbf{X}^\mathrm{s})$, bounded within the structural boundary $\Gamma^\mathrm{s}$. The deformation map $\boldsymbol{\varphi}^\mathrm{s}(\mathbf{X}^\mathrm{s},t)$ expresses the deformation history of a material point $\mathbf{X}^\mathrm{s} \in \Omega^\mathrm{s}$. As a consequence the structural velocity can be expressed as
\begin{equation}
	\boldsymbol{u}^\mathrm{s}(\mathbf{X}^\mathrm{s},t) = \frac{\partial \boldsymbol{\varphi}^\mathrm{s}(\mathbf{X}^\mathrm{s},t)}{\partial t},
\end{equation} and its evolution is governed by the momentum equation
\begin{equation}
	\rho^\mathrm{s}\frac{\partial\boldsymbol{u}^\mathrm{s}}{\partial t} = \nabla\cdot\boldsymbol{\sigma}^\mathrm{s} + \boldsymbol{f}^\mathrm{s} \quad \text{in }\Omega^\mathrm{s}, \label{eq:app_solidmom}
\end{equation} where $\rho^\mathrm{s}$ is the density of the material constituting the structure, $\boldsymbol{\sigma}^\mathrm{s}$ is the first Piola-Kirchhoff stress tensor and $\boldsymbol{f}^\mathrm{s}$ is the body force per unit reference volume \citep{LIU2014687}. The displacement of a material point $\mathbf{X}^\mathrm{s}$ is defined as
\begin{equation}
	\boldsymbol{\eta}^\mathrm{s}(\mathbf{X}^\mathrm{s},t)=\boldsymbol{\varphi}^\mathrm{s}(\mathbf{X}^\mathrm{s},t)-\mathbf{X}^\mathrm{s},
\end{equation} using which infinitesmal strain tensor $\boldsymbol{\epsilon}^\mathrm{s}$, deformation gradient tensor $\mathbf{F}$, and the Green-Lagrange strain tensor $\mathbf{E}$ are respectively defined as
\begin{align}
	\boldsymbol{\epsilon}^\mathrm{s} = \frac12(\nabla \boldsymbol{\eta}^\mathrm{s} + (\nabla \boldsymbol{\eta}^\mathrm{s})^T),\quad
	\mathbf{F} &= \nabla \boldsymbol{\varphi}^\mathrm{s},\quad
	\mathbf{E} = \frac12 \left(\mathbf{F}\mathbf{F}^T - \mathbf{I}\right).
\end{align} For a linear elastic material, the stress tensor takes the form 
\begin{equation}
	\boldsymbol{\sigma}^\mathrm{s}(\boldsymbol{\varphi}^\mathrm{s}) = \mu^\mathrm{s}\left(\nabla\boldsymbol{\eta}^\mathrm{s}+(\nabla\boldsymbol{\eta}^\mathrm{s})^T\right) + \lambda^\mathrm{s}\left(\nabla\cdot\boldsymbol{\eta}^\mathrm{s}\right)\boldsymbol{I}, \label{eq:app_linelastic}
\end{equation} where $\lambda^\mathrm{s}$ and $\mu^\mathrm{s}$ are the Lam\'{e}'s parameters for the material constituting the structure. For a St. Venant-Kirchhoff material, the stress tensor is 
\begin{equation}
	\boldsymbol{\sigma}^\mathrm{s}(\boldsymbol{\varphi}^\mathrm{s}) = 2\mu^\mathrm{s}\boldsymbol{F}\boldsymbol{E} + \lambda^\mathrm{s}\left(\mathrm{tr}\,\boldsymbol{E}\right)\boldsymbol{F}. \label{eq:app_stvk}
\end{equation}
At $t = 0$, the fluid-structure interface is defined as $\Gamma^\mathrm{fs}_0 = \Gamma^\mathrm{f}(0) \cap \Gamma^\mathrm{s}$. At time $t$, the interface is located using the deformation map as $\Gamma^\mathrm{fs} = \boldsymbol{\varphi}^\mathrm{s}(\Gamma^\mathrm{fs}_0,t)$. The velocity and traction conditions couple the FSI system on $\Gamma^\mathrm{fs}$ as
\begin{align}
	\boldsymbol{u}^\mathrm{f}(\boldsymbol{\varphi}^\mathrm{s}(\mathbf{z},t),t) &= \boldsymbol{u}^\mathrm{s}(\mathbf{z},t), && \forall\,\mathbf{z}\in\Gamma^\mathrm{fs}_0, \label{eq:app_kin}\\
	\int_{\boldsymbol{\varphi}^\mathrm{s}(\gamma,t)}\boldsymbol{\sigma}^\mathrm{f}(\mathbf{x}^\mathrm{f},t)\,\mathbf{n}^\mathrm{f}\,dS(\mathbf{x}^\mathrm{f}) &=- \int_{\gamma}\boldsymbol{\sigma}^\mathrm{s}(\mathbf{z},t)\,\mathbf{n}^\mathrm{s}\,dS(\mathbf{z}), && \forall\,\gamma\subset\Gamma^\mathrm{fs}_0. \label{eq:app_dyn}
\end{align} where $\mathbf{n}^\mathrm{s}$ and $\mathbf{n}^\mathrm{f}$ are the unit outward normal vectors on $\gamma$ and $\boldsymbol{\varphi}^\mathrm{s}(\gamma,t)$, respectively. $\gamma$ is a part of the undeformed fluid-structure interface $\Gamma^\mathrm{fs}_0$, and $dS$ is a differential surface area. On the boundaries $\Gamma^\mathrm{f}(t)\ \Gamma^\mathrm{fs}$ and $\Gamma^\mathrm{s}\ \Gamma^\mathrm{fs}_0$, appropriate boundary condtions are applied on the fluid and structure boundaries to render the problem well posed. 

The deformation of the fluid domain $\Omega^\mathrm{f}(t)$ required by the ALE description is governed by modeling the coordinates $\boldsymbol{x}^\mathrm{f}$ to deform as an elastic material in equilibrium equation \citep{LIU2014687}
\begin{equation}
	\nabla\cdot\boldsymbol{\sigma}^\mathrm{m} = \boldsymbol{0}, \qquad \boldsymbol{\sigma}^\mathrm{m} = (1+\tau_\mathrm{m})\left[2\mu^\mathrm{m}\left(\nabla\boldsymbol{\eta}^\mathrm{f}+(\nabla\boldsymbol{\eta}^\mathrm{f})^T\right) + \lambda^\mathrm{m}\left(\nabla\cdot\boldsymbol{\eta}^\mathrm{f}\right)\boldsymbol{I}\right], \label{eq:app_meshelastic}
\end{equation}
subject to
\begin{equation}
	\boldsymbol{\eta}^\mathrm{f} = \boldsymbol{\varphi}^\mathrm{s}(\mathbf{z},t)-\mathbf{z} \ \text{ on }\Gamma^\mathrm{fs}_0, \qquad \boldsymbol{\eta}^\mathrm{f} = \boldsymbol{0} \ \text{ on }\partial\Omega^\mathrm{f}_0\setminus\Gamma^\mathrm{fs}, \label{eq:app_meshbc}
\end{equation}
where $\boldsymbol{\sigma}^\mathrm{m}$ is the stress experienced by the fluid mesh, $\boldsymbol{\eta}^\mathrm{f}$ is the displacement vector of the fluid spatial coordinates, $(\lambda^\mathrm{m}, \mu^\mathrm{m})$ are the mesh Lam\'{e}'s parameters, and $\tau_\mathrm{m}$ is a local stiffness parameter used to limit the distortion of small elements \citep{Stein_2002_MeshMotion}. Solving \eqref{eq:app_meshelastic}--\eqref{eq:app_meshbc} in each time step yields the deformed configuration $\Omega^\mathrm{f}(t)$ and the mesh velocity $\boldsymbol{w}^\mathrm{f}=\partial_t\boldsymbol{\eta}^\mathrm{f}$ that appears in \eqref{eq:app_fluidmom}.
	
\subsection{Variational formulation}
\label{sec:app_weak}
\subsubsection{Galerkin variational formulation for structural dynamics}
The structural domain $\Omega^\mathrm{s}$ is discretized into $n_\mathrm{el}^\mathrm{S}$ non-overlapping finite elements $\Omega_e^\mathrm{s}$. For the current study, the HHT-$\alpha$ method \citep{Hilber1977} is used to evolve the temporal dynamics of the structure. The structural acceleration $\boldsymbol{a}_h^{\mathrm{s},n+1}$, and displacement $\boldsymbol{\eta}_h^{\mathrm{s},n+1}$ at $t^{n+1}$ are expressed in terms of the structural velocity $\boldsymbol{u}_h^{\mathrm{s},n+1}$ and the state at $t^n$ through the Newmark-$\beta$ relations as 
\begin{align}
	\boldsymbol{a}_h^{\mathrm{s},n+1} &= \frac{\boldsymbol{u}_h^{\mathrm{s},n+1}-\boldsymbol{u}_h^{\mathrm{s},n}}{\gamma\,\Delta t} - \frac{1-\gamma}{\gamma}\,\boldsymbol{a}_h^{\mathrm{s},n}, \label{eq:app_newmark_a}\\
	\boldsymbol{\eta}_h^{\mathrm{s},n+1} &= \boldsymbol{\eta}_h^{\mathrm{s},n} + \Delta t\left(1-\frac{\beta}{\gamma}\right)\boldsymbol{u}_h^{\mathrm{s},n} + \frac{\beta\Delta t}{\gamma}\,\boldsymbol{u}_h^{\mathrm{s},n+1} + \Delta t^2\left(\frac{1}{2}-\frac{\beta}{\gamma}\right)\boldsymbol{a}_h^{\mathrm{s},n}, \label{eq:app_newmark_eta}
\end{align} where $\gamma=\tfrac12-\alpha_\mathrm{H}$, $\beta=\tfrac14(1-\alpha_\mathrm{H})^2$, and $\alpha_\mathrm{H}\in[-1/3,0]$ is the HHT-$\alpha$ parameter which controls the numerical dissipation of the high-frequency structural response.  $\boldsymbol{\eta}_h^{\mathrm{s},n}, \boldsymbol{u}_h^{\mathrm{s},n}, \boldsymbol{a}_h^{\mathrm{s},n}$ are the structural state variables known from the previous time step. Given the test and trial spaces $\mathcal{P}_h$ and $\mathcal{Q}_h$ respectively defined on $\Omega^\mathrm{s}$, the variational form of the structural equations is expressed as: Find $\boldsymbol{u}_h^{\mathrm{s},n+1} \in \mathcal{Q}_h$ such that $\forall~\boldsymbol{v}^\mathrm{s}_h\in\mathcal{P}_h$
\begin{equation}
	\int_{\Omega^\mathrm{s}}\boldsymbol{v}^\mathrm{s}_h\cdot\rho^\mathrm{s}\,\boldsymbol{a}_h^{\mathrm{s},n+1}\,d\Omega
	+ \int_{\Omega^\mathrm{s}}\nabla\boldsymbol{v}^\mathrm{s}_h:\boldsymbol{\sigma}^\mathrm{s}\!\left(\boldsymbol{\eta}_h^{\mathrm{s},n+1}\right)\,d\Omega
	= \int_{\Omega^\mathrm{s}}\boldsymbol{v}^\mathrm{s}_h\cdot\boldsymbol{f}^{\mathrm{s},n+1}\,d\Omega
	+ \int_{\Gamma^\mathrm{s}_\mathrm{N}}\boldsymbol{v}^\mathrm{s}_h\cdot\mathbf{t}^{\mathrm{s},n+1}\,dS
	+ \int_{\Gamma^\mathrm{fs}_0}\boldsymbol{v}^\mathrm{s}_h\cdot\mathbf{t}_\Sigma^{n+1}\,dS.
	\label{eq:app_solidsemi}
\end{equation} The traction $\mathbf{t}_\Sigma^{n+1}$ is given by \eqref{eq:app_dyn}, and considers the fluid force exerted on the structure within a partitioned framework. Furthermore, the deformation response $\boldsymbol{\eta}_h^\mathrm{s}$ of the structure governs the deformation of the fluid domain $\Omega^\mathrm{f}(t)$. 

\subsubsection{Petrov-Galerkin variational formulation for fluid flow}
Through the finite element method, the fluid spatial domain $\Omega^\mathrm{f}$ is discretized into $n_\mathrm{el}^\mathrm{F}$ non-overlapping finite elements $\Omega_e^\mathrm{f}$. For the current study, the second order Backward-Difference (BDF2) temporal discretization is adopted. Furthermore, given the test and trial spaces $\mathcal{V}_h$ and $\mathcal{S}_h$ respectively defined on $\Omega^\mathrm{f}$, the variational form of the flow equations is expressed as: Find $[\boldsymbol{u}^{\mathrm{f},n+1}_h, p^{\mathrm{f},n+1}_h] \in \mathcal{S}_h$ such that $\forall~[\boldsymbol{v}^\mathrm{f}, q] \in \mathcal{V}_h$
\begin{equation}
	\begin{split}
		&\int_{\Omega^{\mathrm{f}}} \boldsymbol{v}^\mathrm{f}\cdot\rho^\mathrm{f}\left(\frac{3\boldsymbol{u}_h^{\mathrm{f},n+1}-4\boldsymbol{u}_h^{\mathrm{f},n}+\boldsymbol{u}_h^{\mathrm{f},n-1}}{2\Delta t} + \left(\boldsymbol{u}_h^{\mathrm{f},n+1}-\boldsymbol{w}_h^\mathrm{f}\right)\cdot\nabla\boldsymbol{u}_h^{\mathrm{f},n+1}\right)\,d\Omega \\
		&\quad + \int_{\Omega^{\mathrm{f}}} \nabla\boldsymbol{v}^\mathrm{f}:\boldsymbol{\sigma}^\mathrm{f}\,d\Omega
		+ \int_{\Omega^{\mathrm{f}}} q\,\left(\nabla\cdot\boldsymbol{u}_h^{\mathrm{f},n+1}\right)\,d\Omega \\
		&\quad + \sum_{e=1}^{n_\mathrm{el}^\mathrm{F}}\int_{\Omega_e^\mathrm{f}}\tau_\mathrm{m}\left(\rho^\mathrm{f}\left((\boldsymbol{u}_h^{\mathrm{f},n+1}-\boldsymbol{w}_h^\mathrm{f})\cdot\nabla\right)\boldsymbol{v}^\mathrm{f} + \nabla q\right)\cdot\boldsymbol{\mathcal{R}}_\mathrm{m}\,d\Omega
		+ \sum_{e=1}^{n_\mathrm{el}^\mathrm{F}}\int_{\Omega_e^\mathrm{f}} \tau_\mathrm{c} \left(\nabla \cdot \boldsymbol{v}^\mathrm{f}\right) \mathcal{R}_\mathrm{c}\,d\Omega \\
		&\;=\; \int_{\Omega^\mathrm{f}}\boldsymbol{v}^\mathrm{f}\cdot\boldsymbol{f}^\mathrm{f}\,d\Omega
		+ \int_{\Gamma^\mathrm{f}_\mathrm{N}}\boldsymbol{v}^\mathrm{f}\cdot\left(\boldsymbol{\sigma}^\mathrm{f}\mathbf{n}^\mathrm{f}\right)\,dS,
	\end{split}
	\label{eq:app_fluidweak}
\end{equation} where $\boldsymbol{\mathcal{R}}_\mathrm{m}$ and $\mathcal{R}_c$ are the momentum and continuity residuals defined as
\begin{align}
	\boldsymbol{\mathcal{R}}_\mathrm{m} &= \rho^\mathrm{f}\left(\frac{3\boldsymbol{u}_h^{\mathrm{f},n+1}-4\boldsymbol{u}_h^{\mathrm{f},n}+\boldsymbol{u}_h^{\mathrm{f},n-1}}{2\Delta t}+\left(\boldsymbol{u}_h^{\mathrm{f},n+1}-\boldsymbol{w}_h^\mathrm{f}\right)\cdot\nabla\boldsymbol{u}_h^{\mathrm{f},n+1}\right) - \nabla\cdot\boldsymbol{\sigma}^\mathrm{f}\!\left(\boldsymbol{u}_h^{\mathrm{f},n+1},p_h^{\mathrm{f},n+1}\right) - \boldsymbol{f}^\mathrm{f}, \label{eq:app_Rm}\\
	\mathcal{R}_\mathrm{c} &= \nabla\cdot\boldsymbol{u}_h^{\mathrm{f},n+1}.   \label{eq:app_Rc}
\end{align} The stabilization parameters $\tau_m$ and $\tau_c$ are the least-squares metrics which for the momentum equation are defined as 
\begin{align}
	\tau_m = \left[\left(\frac{2\rho^\mathrm{f}}{\Delta t}\right)^2 + {(\rho^\mathrm{f})}^2(\boldsymbol{u}^\mathrm{f}_h - \boldsymbol{w}^\mathrm{f}_h)\cdot \boldsymbol{G}(\boldsymbol{u}^\mathrm{f}_h - \boldsymbol{w}^\mathrm{f}_h) + C_I(\mu^\mathrm{f})^2\boldsymbol{G}:\boldsymbol{G} \right]^{-\frac12}
\end{align} where $C_I$ is the element-wise inverse estimate and $\boldsymbol{G}$ is the size of element contravariant metric tensor as defined in \eqref{eq:metricTensor}. The least-squares metric $\tau_c$ is defined as
\begin{equation}
	\tau_c = (\text{tr}(\boldsymbol{G})\tau_m)^{-1}, 
\end{equation} where $\text{tr}(\boldsymbol{G})$ denotes the trace of the contravariant metric tensor. Within the partitioned fluid-structure interaction framework, the structural velocity is imposed as a Dirichlet boundary condition on the interface $\Gamma^\mathrm{fs}$.

\subsection{Constructing the linearized FSI problem} \label{sec:app_linearization}
The variational forms \eqref{eq:app_solidsemi} and \eqref{eq:app_fluidweak} are
nonlinear in their respective unknowns, and a Newton--Raphson technique is employed
to iteratively solve the individual subproblems in the partitioned framework. At the start 
of the $k^\mathrm{th}$ nonlinear iteration, let the state of the system be defined by 
$\boldsymbol{u} = [\boldsymbol{u}^{\mathrm{s},k}_h~ \boldsymbol{u}^{\mathrm{f},k}_h~ p_h^k]^T$, with the corresponding
increments defined as $\delta \boldsymbol{u} = [\delta\boldsymbol{u}^\mathrm{s}_h~\delta\boldsymbol{u}^\mathrm{f}_h~\delta p_h]^T$. 
Furthermore, let $\boldsymbol{R}^\mathrm{s}$ and $\boldsymbol{R}^\mathrm{f}$ be the variational forms \eqref{eq:app_solidsemi}--\eqref{eq:app_fluidweak} evaluated using the current state of the FSI system. 
In order to construct the interfacial operators, as well as enforce consistency of boundary conditions in the
partitioned framework, the partitioning of the degrees of freedom is first established before linearization.

\subsubsection{Partitioning degrees of freedom}
Define the boolean restriction operators $\Pi_\mathrm{S}$, $\Pi_\Sigma^\mathrm{s}$, $\Pi_\mathrm{F}$,
$\Pi_\Sigma^\mathrm{f}$ and $\Pi_p$ which extract respectively, 
the interior and interface rows of the structural velocity vector,
the interior and interface rows of the fluid velocity vector, and the pressure rows. This leads to
the definition of the increments
\begin{equation}
	\delta \boldsymbol{u}_\mathrm{S} = \Pi_\mathrm{S}\delta \boldsymbol{u}^\mathrm{s}_h,\quad
	\delta \boldsymbol{u}_\mathrm{F} = \Pi_\mathrm{F}\delta \boldsymbol{u}^\mathrm{f}_h.\label{eq:app_restrict}
\end{equation} The kinematic condition \eqref{eq:app_kin} forms a Dirichlet boundary condition on the fluid subproblem in the
partitioned framework, and requires the interface trace
\begin{equation}
	\Pi_\Sigma^\mathrm{s}\delta\boldsymbol{u}^\mathrm{s}
	= \Pi_\Sigma^\mathrm{f}\delta\boldsymbol{u}^\mathrm{f}
	=: \delta\boldsymbol{u}_\Sigma,
	\label{eq:app_kin_disc}
\end{equation} which, within the monolithic framework allows the implicit enforcement of continuity at the fluid-structure interface.
Similarly, the balance of momentum across $\Gamma^\mathrm{fs}$ is reccovered variationally, by summing the momentum equations tested with the shared interface basis functions \citep{LIU2014687}. This allows for constructing the interface residual within the monolithic system as
\begin{equation}
	R_{m,\Sigma} = \Pi_\Sigma^\mathrm{s} \boldsymbol{R}^\mathrm{s} + \Pi_\Sigma^\mathrm{f}\boldsymbol{R}^\mathrm{f}.
\end{equation} In a partitioned framework, the term $\Pi_\Sigma^\mathrm{f}\boldsymbol{R}^\mathrm{f}$ forms the fluid traction $\mathbf{t}_\Sigma$ acting on the structure.

Given the partitioning of the variables, for a global linear operator $X$ and the partitions $\alpha, \beta \in \{S, \Sigma, F\}$, the following shorthand
\begin{equation}
	X_{\alpha\beta} = \Pi_\alpha X\,\Pi_\beta^T,
	\qquad
	X_\alpha = X_{\alpha\alpha},
	\label{eq:app_blockshorthand}
\end{equation} is defined. This convention is used to develop the interface operators in Sec.~\ref{sec:itfoperator}, and will be useful in constructing the necessary restrictions for handling the monolithic and partitioned linearized systems.
	
\subsubsection{Structural subproblem}\label{sec:app_structlin}
Using the state of the FSI system $\boldsymbol{u}^k$, the variational form of the structural governing equations \eqref{eq:app_solidsemi} is linearized. The Newmark relations \eqref{eq:app_newmark_a}--\eqref{eq:app_newmark_eta} allow the construction of the Jacobians $\partial \boldsymbol{a}^\mathrm{s}_h/\partial \boldsymbol{u}^\mathrm{s}_h$ and $\partial \boldsymbol{\eta}^\mathrm{s}_h/\partial \boldsymbol{u}^\mathrm{s}_h$, which result in the expression of \eqref{eq:app_solidsemi} in terms of the structural velocity $\boldsymbol{u}^\mathrm{s}_h$. These Jacobians lead to the time-integration scaling factors $\theta^\mathrm{s} = \gamma \Delta t$ and $\theta^\mathrm{s}_K = \beta \Delta t/\gamma$. Using the increment in structural velocity $\delta \boldsymbol{u}^\mathrm{s}$, the Newton linearization of \eqref{eq:app_solidsemi} takes the form
\begin{equation}
	\frac{\rho^\mathrm{s}}{\theta^\mathrm{s}}M^\mathrm{s}\delta\boldsymbol{u}^\mathrm{s}
	+ \theta_K^\mathrm{s}\left(
	(\lambda^\mathrm{s}+2\mu^\mathrm{s})K^\mathrm{vol}
	+ \mu^\mathrm{s}K^\mathrm{dev}\right)\delta\boldsymbol{u}^\mathrm{s}
	= {R}^\mathrm{s}(\boldsymbol{u}^k).
	\label{eq:app_solidlin}
\end{equation} where $M^\mathrm{s}$ is the mass matrix, $K^\mathrm{vol}$ and $K^\mathrm{dev}$ are the dilatation and shear parts of the tangent stiffness matrix, and $R^\mathrm{s}$ is the structural momentum residual at the current iterate. The tangent stiffness matrix is constant for the linear elastic material \eqref{eq:app_linelastic} and depends on the deformation tensor $\mathbf{F}(\boldsymbol{\eta}^\mathrm{s})$ for the St. Venant-Kirchhoff material \eqref{eq:app_stvk}. For brevity in what follows, the operator acting on $\delta\boldsymbol{u}^\mathrm{s}$
in \eqref{eq:app_solidlin} is denoted
\begin{equation}
A^\mathrm{s} = \frac{\rho^\mathrm{s}}{\theta^\mathrm{s}}M^\mathrm{s}
+ \theta_K^\mathrm{s}\left((\lambda^\mathrm{s}+2\mu^\mathrm{s})K^\mathrm{vol}
+ \mu^\mathrm{s}K^\mathrm{dev}\right),
\label{eq:app_As}
\end{equation} so that \eqref{eq:app_solidlin} reads
$A^\mathrm{s}\delta\boldsymbol{u}^\mathrm{s} = {R}^\mathrm{s}(\boldsymbol{u}^k)$.

\subsubsection{Fluid subproblem}\label{sec:app_fluidlin}
Similar to the structural subproblem, the variational form of the fluid governing equations \eqref{eq:app_fluidweak} is linearized about the current state of the FSI system $\boldsymbol{u}^\mathrm{k}$. Using the increment in fluid velocity and pressure $[\delta \boldsymbol{u}^\mathrm{f}~\delta p]^T$, the linearization of \eqref{eq:app_fluidweak} leads to 
\begin{align}
	\frac{\rho^\mathrm{f}}{\theta^\mathrm{f}}M^\mathrm{f}\delta\boldsymbol{u}^\mathrm{f}
	+ \rho^\mathrm{f}K^\mathrm{conv}\delta\boldsymbol{u}^\mathrm{f}
	+ \mu^\mathrm{f}K^\mathrm{diff}\delta\boldsymbol{u}^\mathrm{f}
	+ \left(B^T + G_\tau\right)\delta p
	&= \boldsymbol{R}^\mathrm{f}_m(\boldsymbol{u}^k),
	\label{eq:app_fluidlin_mom}\\
	B\,\delta\boldsymbol{u}^\mathrm{f} - C_\tau\,\delta p
	&= {R}_{c}(\boldsymbol{u}^k).
	\label{eq:app_fluidlin_cont}
\end{align}  Here, $\mathrm{M}^\mathrm{f}$ is the mass matrix, $K^\mathrm{conv}$ is the convection matrix,
$K^\mathrm{diff}$ is the diffusion matrix. The discrete pressure gradient operator is denoted by $B^T$, and SUPG
stabilization contributes the operator $G_\tau$ to act on the pressure field. $B$ is the discrete divergence operator,
and the PSPG stabilization acts on the pressure field within the fluid domain through the operator $C_\tau$. As in the structural subproblem, the operator acting on
$\delta\boldsymbol{u}^\mathrm{f}$ in \eqref{eq:app_fluidlin_mom} is denoted by
\begin{equation}
	A^\mathrm{f} = \frac{\rho^\mathrm{f}}{\theta^\mathrm{f}}M^\mathrm{f}
	+ \rho^\mathrm{f}K^\mathrm{conv} + \mu^\mathrm{f}K^\mathrm{diff},
	\label{eq:app_Af}
\end{equation}
so that \eqref{eq:app_fluidlin_mom} reads
$A^\mathrm{f}\delta\boldsymbol{u}^\mathrm{f} + (B^T+G_\tau)\delta p
= \boldsymbol{R}^\mathrm{f}_m(\boldsymbol{u}^k)$. The quantities
$\boldsymbol{R}^\mathrm{f}_m$ and $\boldsymbol{R}^\mathrm{f}_c$ are the momentum and continuity residuals arising from evaluating \eqref{eq:app_fluidweak} at 
the current iterate. The quantity $\theta^\mathrm{f} = 2\Delta t/3$
is a scalar associated with the BDF2 temporal discretization. More details on the construction of linearization of the fluid subproblem can be found in \citep{JAIMAN2016187, li_2018}.
\subsubsection{Assembled block system}

The kinematic condition \eqref{eq:app_kin_disc} eliminates the fluid interface
velocities as independent unknowns, so that the subproblem increments of
Sec.~\ref{sec:app_structlin} and Sec.~\ref{sec:app_fluidlin} are reconstructed
from the partitioned unknowns as
\begin{equation}
	\delta\boldsymbol{u}^\mathrm{s} = \Pi_\mathrm{S}^T\delta\boldsymbol{u}_\mathrm{S}
	+ (\Pi_\Sigma^\mathrm{s})^T\delta\boldsymbol{u}_\Sigma,
	\qquad
	\delta\boldsymbol{u}^\mathrm{f} = \Pi_\mathrm{F}^T\delta\boldsymbol{u}_\mathrm{F}
	+ (\Pi_\Sigma^\mathrm{f})^T\delta\boldsymbol{u}_\Sigma.
	\label{eq:app_prolong}
\end{equation}
Substituting \eqref{eq:app_prolong} into the linearized subproblems
\eqref{eq:app_solidlin} and
\eqref{eq:app_fluidlin_mom}--\eqref{eq:app_fluidlin_cont}, followed by testing each equation
with the corresponding boolean restriction and ordering the rows as
$(\mathrm{S},\Sigma,\mathrm{F},p)$ recovers the monolithic system
\eqref{eq:discBlockMonolithic}. A similar restriction is developed for the residuals with
 $R_\mathrm{S} = \Pi_\mathrm{S}{R}^\mathrm{s}$,
$R_{m,\mathrm{F}} = \Pi_\mathrm{F}\boldsymbol{R}^\mathrm{f}_m$ and
$R_c = \Pi_p{R}_c$.

For the monolithic system, the restriction of the structural and fluid operators $A^\mathrm{s}$
and $A^\mathrm{f}$ leads to the diagonal block $A_\Sigma = A_\Sigma^\mathrm{s} + A_\Sigma^\mathrm{f}$ which
recovers \eqref{eq:itfMonolithic}. Similarly, the residual $R_{m,\Sigma} = \Pi_\Sigma^\mathrm{s}\boldsymbol{R}^\mathrm{s} + \Pi_\Sigma^\mathrm{f}\boldsymbol{R}^\mathrm{f}$, the second term of which forms the traction $\mathbf{t}_\Sigma$ applied on the
structure within the partitioned framework. The symmetry of $A^\mathrm{s}$ makes the $(\mathrm{S},\Sigma)$
block the transpose $A_{\Sigma\mathrm{S}}^T$ of the $(\Sigma,\mathrm{S})$ block,
whereas the non-symmetry of $K^\mathrm{conv}$ requires $A_{\Sigma\mathrm{F}}$
and $A_{\mathrm{F}\Sigma}$ to be retained separately. Within the partitioned framework, the augmentation operator $\tilde{C}$ is
added into the pressure block alongside $C_\tau$.

\begin{remark}[Stabilization blocks]
	The operators $G_\tau$ and $C_\tau$ of
	\eqref{eq:app_fluidlin_mom}--\eqref{eq:app_fluidlin_cont} are retained in the
	solver but omitted from \eqref{eq:discBlockMonolithic}. Dropping $G_\tau$ makes
	the discrete pressure gradient the exact adjoint of the discrete divergence, so
	that the off-diagonal blocks of \eqref{eq:discBlock2} are $\mathcal{B}$ and
	$\mathcal{B}^T$, while dropping the restriction of $C_\tau$ renders the
	$(2,2)$ block null and makes the pressure Schur complement
	\eqref{eq:pressureSchur} the pure velocity-elimination operator
	$-\mathcal{B}\mathcal{A}_\mathrm{m}^{-1}\mathcal{B}^T$. As $C_\tau$ is
	symmetric positive semi-definite and independent of the partitioning, its
	inclusion shifts the monolithic pressure equation
	\eqref{eq:monolithicPressure} and the partitioned update \eqref{eq:ppeFSI}
	identically, and affects neither the added-mass mechanism identified through
	\eqref{eq:S1split} nor the cancellation of $\tilde{C}$ at the fixed point.
\end{remark}
	
	\section{Appendix B. Algebraic details of the monolithic and partitioned solutions}
	
	\subsection{Inversion of the monolithic momentum operator}\label{app2}
	
	The pressure Schur complement of \eqref{eq:discBlock2} requires
	$\mathcal{A}_\mathrm{m}^{-1}$, which is obtained from the $LDU$ factorization
	\eqref{eq:LDU} by forward and backward substitution. Writing
	$\mathcal{A}_\mathrm{m}\,\delta{\boldsymbol{u}} = {R}_\mathrm{m,F}$ and solving
	the three factors in turn yields
	\begin{equation}
		\mathcal{A}_\mathrm{m}^{-1} = \begin{bmatrix}
			A_\mathrm{S}^{-1} + A_\mathrm{S}^{-1}A_{\Sigma\mathrm{S}}^T{S}_3^{-1}A_{\Sigma\mathrm{S}}A_\mathrm{S}^{-1}
			& -A_\mathrm{S}^{-1}A_{\Sigma\mathrm{S}}^T{S}_3^{-1}
			& A_\mathrm{S}^{-1}A_{\Sigma\mathrm{S}}^TS_1^{-1}A_{\Sigma\mathrm{F}}S_2^{-1}\\[2pt]
			-{S}_3^{-1}A_{\Sigma\mathrm{S}}A_\mathrm{S}^{-1}
			& {S}_3^{-1}
			& -S_1^{-1}A_{\Sigma\mathrm{F}}S_2^{-1}\\[2pt]
			S_2^{-1}A_{\mathrm{F}\Sigma}S_1^{-1}A_{\Sigma\mathrm{S}}A_\mathrm{S}^{-1}
			& -S_2^{-1}A_{\mathrm{F}\Sigma}S_1^{-1}
			& S_2^{-1}
		\end{bmatrix},
		\label{eq:app_Aminv}
	\end{equation}
	where $S_1$ and $S_2$ are given by \eqref{eq:schurmonolithic} and
	\begin{equation}
		{S}_3^{-1} = S_1^{-1} +
		S_1^{-1}A_{\Sigma\mathrm{F}}S_2^{-1}A_{\mathrm{F}\Sigma}S_1^{-1}.
		\label{eq:app_S1tilde}
	\end{equation}
	
	Since the velocity-pressure coupling acts only on the interface and interior
	fluid degrees of freedom, with $\mathcal{B} = [\,0 \;\; B_\Sigma \;\;
	B_\mathrm{F}\,]$, the pressure Schur complement follows as
	\begin{align}
		\mathcal{B}\mathcal{A}_\mathrm{m}^{-1}\mathcal{B}^T
		&= B_\Sigma{S}_3^{-1}B_\Sigma^T
		- B_\Sigma S_1^{-1}A_{\Sigma\mathrm{F}}S_2^{-1}B_\mathrm{F}^T
		- B_\mathrm{F}S_2^{-1}A_{\mathrm{F}\Sigma}S_1^{-1}B_\Sigma^T
		+ B_\mathrm{F}S_2^{-1}B_\mathrm{F}^T \nonumber\\
		&= B_\Sigma S_1^{-1}B_\Sigma^T + B_\mathrm{F}S_2^{-1}B_\mathrm{F}^T
		+ B_\Sigma S_1^{-1}A_{\Sigma\mathrm{F}}S_2^{-1}A_{\mathrm{F}\Sigma}S_1^{-1}B_\Sigma^T
		\nonumber\\
		&\quad - B_\Sigma S_1^{-1}A_{\Sigma\mathrm{F}}S_2^{-1}B_\mathrm{F}^T
		- B_\mathrm{F}S_2^{-1}A_{\mathrm{F}\Sigma}S_1^{-1}B_\Sigma^T,
		\label{eq:app_schurfull}
	\end{align}
	where $S_3^{-1}$ has been eliminated using \eqref{eq:app_S1tilde}. In the
	inviscid, pressure-driven limit, the couplings $A_{\Sigma\mathrm{F}}$ and
	$A_{\mathrm{F}\Sigma}$ vanish, reducing $S_2$ to $A_\mathrm{F}$ in
	\eqref{eq:schurmonolithic}, and the last three terms drop out, giving
	\begin{equation}
		\mathcal{B}\mathcal{A}_\mathrm{m}^{-1}\mathcal{B}^T =
		B_\Sigma S_1^{-1}B_\Sigma^T + B_\mathrm{F}A_\mathrm{F}^{-1}B_\mathrm{F}^T,
		\label{eq:app_schurinviscid}
	\end{equation}
	recovering \eqref{eq:Schurinviscid}. Substituting \eqref{eq:app_Aminv} in
	\eqref{eq:solutionMonolithic} and applying the same limit, the monolithic
	interface pressure satisfies
	\begin{equation}
		\left(B_\Sigma S_1^{-1}B_\Sigma^T +
		B_\mathrm{F}A_\mathrm{F}^{-1}B_\mathrm{F}^T\right)\delta p =
		-B_\Sigma S_1^{-1}A_{\Sigma\mathrm{S}}A_\mathrm{S}^{-1}R_\mathrm{S}
		+ B_\Sigma S_1^{-1}R_\mathrm{m,\Sigma}
		+ B_\mathrm{F}A_\mathrm{F}^{-1}R_{\mathrm{m,F}},
		\label{eq:app_monolithicPressure}
	\end{equation}
	which is \eqref{eq:monolithicPressure}.
	
	\subsection{Proof of Proposition \ref{prop:contraction}}\label{app3}
	
	Eliminating $\delta\boldsymbol{u}_\mathrm{F}^{k+1}$ from \eqref{eq:fluidSolveAug}
	and writing $S_\mathrm{F} = \tilde{C} +
	B_\mathrm{F}A_\mathrm{F}^{-1}B_\mathrm{F}^T$ gives
	\begin{align}
		S_\mathrm{F}\,\delta p^{k+1} &=
		B_\mathrm{F}A_\mathrm{F}^{-1}\left(R_\mathrm{m,F} -
		A_{\mathrm{F}\Sigma}\delta\boldsymbol{u}_\Sigma^{k+1}\right)
		+ B_\Sigma\delta\boldsymbol{u}_\Sigma^{k+1} + \tilde{C}\delta p^k \nonumber\\
		&= B_\mathrm{F}A_\mathrm{F}^{-1}R_\mathrm{m,F}
		+ \left(B_\Sigma - B_\mathrm{F}A_\mathrm{F}^{-1}A_{\mathrm{F}\Sigma}\right)
		\delta\boldsymbol{u}_\Sigma^{k+1} + \tilde{C}\delta p^k.
		\label{eq:pressureItf}
	\end{align} Substituting the interface velocity \eqref{eq:itfVel},
	\begin{align}
		S_\mathrm{F}\delta p^{k+1} &= BA_\mathrm{F}^{-1}\boldsymbol{R}^\mathrm{f} +
		(B_\Sigma -
		BA_\mathrm{F}^{-1}A_{\mathrm{F}\Sigma})S_\mathrm{S}^{-1}\left(\boldsymbol{R}^\mathrm{i}
		- A_{\Sigma \mathrm{S}}A_\mathrm{S}^{-1}\boldsymbol{R}^\mathrm{s}\right) \nonumber\\ 
		&- (B_\Sigma -
		BA_\mathrm{F}^{-1}A_{\mathrm{F}\Sigma})S_\mathrm{S}^{-1}A_{\Sigma\mathrm{F}}\boldsymbol{u}_\mathrm{f}^{k}
		- (B_\Sigma -
		BA_\mathrm{F}^{-1}A_{\mathrm{F}\Sigma})S_\mathrm{S}^{-1}A_\Sigma^\mathrm{f}\delta\boldsymbol{u}_\Sigma ^k\nonumber\\
		&+ \left( \tilde{C} - (B_\Sigma -
		BA_\mathrm{F}^{-1}A_{\mathrm{F}\Sigma})S_\mathrm{S}^{-1}B_\Sigma^T\right)\delta p^k,
		\nonumber
	\end{align} which reduces to \eqref{eq:ppeFSI}, when pressure and inertia
	dominant flow-features are considered by setting $A_{\mathrm{F}\Sigma} = 0, 
	A_{\Sigma\mathrm{F}} = 0$.
	
	From Eqs. \eqref{eq:schurmonolithic}, \eqref{eq:itfMonolithic} and
	\eqref{eq:Schurinviscid}, and invoking the Woodbury matrix identity,
	\begin{equation}
		S_1^{-1} = S_\mathrm{S}^{-1} - \underbrace{S_\mathrm{S}^{-1}(I +
			A_\Sigma^\mathrm{f}S_\mathrm{S}^{-1})^{-1}A_\Sigma^\mathrm{f}S_\mathrm{S}^{-1}}_{S_\Sigma ^{-1}}.
		\label{eq:woodbury}
	\end{equation} Substituting \eqref{eq:woodbury} in
	\eqref{eq:monolithicPressure} and using the definition of $\boldsymbol{G}$ from
	\eqref{eq:ppeFSI},
	\begin{equation}
		(B_\Sigma S_\mathrm{S}^{-1}B_\Sigma^T +
		B_\mathrm{F}A_\mathrm{F}^{-1}B_\mathrm{F}^T)\delta p_\mathrm{m} -
		B_\Sigma S_\Sigma ^{-1}B_\Sigma^T\delta p_\mathrm{m} = \boldsymbol{G} +
		B_\Sigma S_\Sigma ^{-1}A_{\Sigma \mathrm{S}}A_\mathrm{S}^{-1}\boldsymbol{R}^\mathrm{s}
		- B_\Sigma S_\Sigma ^{-1}\boldsymbol{R}^\mathrm{i}. \label{eq:monolithicW}
	\end{equation} Eliminating $\boldsymbol{G}$ between Eqs. \eqref{eq:ppeFSI} and
	\eqref{eq:monolithicW},
	\begin{align}
		(B_\Sigma S_\mathrm{S}^{-1}B_\Sigma^T +
		B_\mathrm{F}A_\mathrm{F}^{-1}B_\mathrm{F}^T)\delta p_\mathrm{m} &= \left(\tilde{C} +
		B_\mathrm{F}A_\mathrm{F}^{-1}B_\mathrm{F}^T\right)\delta p^{k+1} - \left(\tilde{C} -
		B_\Sigma S_\mathrm{S}^{-1}B_\Sigma^T\right)\delta p^k\nonumber\\&-B_\Sigma S_\Sigma ^{-1}\underbrace{\left(\boldsymbol{R}^\mathrm{i}
			- A_{\Sigma \mathrm{S}}A_\mathrm{S}^{-1}\boldsymbol{R}^\mathrm{s} -
			B_\Sigma^T\delta p_\mathrm{m}\right)}_{S_1\boldsymbol{u}_{\mathrm{i},\mathrm{m}}}+
		B_\Sigma S_\mathrm{S}^{-1}A_\Sigma^\mathrm{f}\delta\boldsymbol{u}_\Sigma ^k.
		\nonumber
	\end{align} Subtracting $\left(\tilde{C} +
	B_\mathrm{F}A_\mathrm{F}^{-1}B_\mathrm{F}^T\right)\delta p_\mathrm{m}$ from both sides,
	\begin{equation}
		\left(\tilde{C} + B_\mathrm{F}A_\mathrm{F}^{-1}B_\mathrm{F}^T\right)(\delta p^{k+1}-\delta p_\mathrm{m}) =
		\left(\tilde{C} - B_\Sigma S_\mathrm{S}^{-1}B_\Sigma^T\right)(\delta p^k -
		\delta p_\mathrm{m}) +B_\Sigma S_\Sigma ^{-1}S_1\delta\boldsymbol{u}_{\Sigma,\mathrm{m}} -
		B_\Sigma S_\mathrm{S}^{-1}A_\Sigma^\mathrm{f}\delta\boldsymbol{u}_\Sigma ^k.
		\label{eq:iteration1}
	\end{equation}
	The product $S_\Sigma ^{-1}S_1$ simplifies as
	\begin{align}
		S_\Sigma ^{-1}S_1 &= S_\mathrm{S}^{-1}(I +
		A_\Sigma^\mathrm{f}S_\mathrm{S}^{-1})^{-1}A_\Sigma^\mathrm{f}S_\mathrm{S}^{-1}S_\mathrm{S}(I
		+ S_\mathrm{S}^{-1}A_\Sigma^\mathrm{f}) \nonumber\\
		&= S_\mathrm{S}^{-1}(I + A_\Sigma^\mathrm{f}S_\mathrm{S}^{-1})^{-1}A_\Sigma^\mathrm{f}(I +
		S_\mathrm{S}^{-1}A_\Sigma^\mathrm{f}), \nonumber
	\end{align} and using the push-through identity,
	\begin{align}
		S_\Sigma ^{-1}S_1 &= S_\mathrm{S}^{-1}A_\Sigma^\mathrm{f}(I +
		S_\mathrm{S}^{-1}A_\Sigma^\mathrm{f})^{-1}(I + S_\mathrm{S}^{-1}A_\Sigma^\mathrm{f}) =
		S_\mathrm{S}^{-1}A_\Sigma^\mathrm{f}. \label{eq:pushthrough}
	\end{align}
	With \eqref{eq:pushthrough}, \eqref{eq:iteration1} simplifies to
	\begin{equation}
		\left(\tilde{C} + B_\mathrm{F}A_\mathrm{F}^{-1}B_\mathrm{F}^T\right)(\delta p^{k+1}-\delta p_\mathrm{m})
		=\left(\tilde{C} - B_\Sigma S_\mathrm{S}^{-1}B_\Sigma^T\right)(\delta p^k -
		\delta p_\mathrm{m})
		-B_\Sigma S_\mathrm{S}^{-1}A_\Sigma^\mathrm{f}\left(\delta\boldsymbol{u}_\Sigma ^k -
		\delta\boldsymbol{u}_{\Sigma,\mathrm{m}} \right), \nonumber
	\end{equation} which is the second of the equations in \eqref{eq:errBlock}. Finally, subtracting the
	monolithic interface velocity $\delta\boldsymbol{u}_{\Sigma,\mathrm{m}} $ from \eqref{eq:itfVel} with $A_{\Sigma\mathrm{F}} = 0$ gives \eqref{eq:errBlock}.
	
	\subsection{Solving the algebraic system of equations for the partitioned FSI problem}
\label{sec:solvers}
The structural subproblem is solved with the finite element code CalculiX
\citep{dhondt2004calculix}, which is coupled to an in-house solver for the fluid
flow and the mesh deformation. The fluid mesh is partitioned using METIS
\citep{karypis1998metis}, and the fluid solver follows an MPI-based
distributed-memory parallelization paradigm.

Within each coupling iteration, the structural subproblem \eqref{eq:solidSolve}
is governed by the operator
\begin{equation}
	\mathcal{K}^\mathrm{s} = \begin{bmatrix}
		A_\mathrm{S} & A_{\Sigma\mathrm{S}}^T\\[2pt]
		A_{\Sigma\mathrm{S}} & A_\Sigma^\mathrm{s}
	\end{bmatrix},
	\label{eq:Ks}
\end{equation}
which is assembled within the multithreaded framework of CalculiX and solved
directly using the supernodal sparse Cholesky factorization of CHOLMOD from the
SuiteSparse library \citep{chen2008cholmod}.

The linear systems arising on the fluid side are solved using the Krylov
subspace solvers provided by PETSc \citep{petsc-user-ref}. The mesh deformation
equations are solved using the conjugate gradient (CG) method. The augmented
fluid subproblem \eqref{eq:fluidSolveAug} is governed by the saddle-point
operator
\begin{equation}
	\mathcal{K}^\mathrm{f} = \begin{bmatrix}
		A_\mathrm{F} & B_\mathrm{F}^T\\[2pt]
		B_\mathrm{F} & -\tilde{C}
	\end{bmatrix},
	\label{eq:Kf}
\end{equation}
and is solved using the flexible GMRES (FGMRES) method \citep{saad1993flexible}
with 40 Krylov subspace vectors. The system is right-preconditioned with the
upper block-triangular operator
\begin{equation}
	\mathcal{P}^\mathrm{f} = \begin{bmatrix}
		A_\mathrm{F} & B_\mathrm{F}^T\\[2pt]
		0 & -\hat{S}
	\end{bmatrix},
	\label{eq:Pf}
\end{equation}
where $\hat{S}$ approximates the augmented pressure Schur complement
$\tilde{C} + B_\mathrm{F}A_\mathrm{F}^{-1}B_\mathrm{F}^T$ appearing in
\eqref{eq:ppeFSI}. Since $\tilde{C}$ is diagonal, it enters $\hat{S}$ without
altering its sparsity pattern. Applying $(\mathcal{P}^\mathrm{f})^{-1}$ to a
vector $[r_u;\, r_p]$ requires two inner solves, namely
$\hat{S}z_p = -r_p$, followed by
$A_\mathrm{F}z_u = r_u - B_\mathrm{F}^Tz_p$. The pressure block is solved
using CG, and the momentum block using GMRES with 20 Krylov subspace vectors. The inexact inner solves necessitate the use of the flexible variant of GMRES.

	
	
	
	
		
		
		
	
	\bibliographystyle{unsrtnat}
	%
	\bibliography{ref}
	
	
	
\end{document}